\def\dOi{11(4:18)2015}
\keywords{constraint satisfaction problem, directed graph, dichotomy conjecture, polymorphism}
\theoremstyle{plain} %\crefname{satz}{Satz}{S\"atze}
\newtheorem{boundedwidththeorem}[thm]{Bounded width theorem}
\newtheorem{CSPdichotomy}[thm]{CSP dichotomy conjecture}
\newtheorem{algdichotomy}[thm]{Algebraic CSP dichotomy conjecture}
\newtheorem{finerCSPcomplexityconjectures}[thm]{Finer CSP complexity conjectures}
\newcommand{\CSP}{\operatorname{CSP}}
\newcommand{\up}[1]{\textup{#1}}
\DeclareMathOperator{\Pol}{Pol}
\DeclareMathOperator{\lvl}{lvl}
\newcommand{\G}{\mathbb{G}}
\newcommand{\A}{\mathbb{A}}
\newcommand{\B}{\mathbb{B}}
\renewcommand{\H}{\mathbb{H}} 
\newcommand{\Q}{\mathbb{Q}}
\newcommand{\hgt}[1]{\operatorname{hgt}(#1)}
\newcommand{\base}{\operatorname{base}}
\newcommand{\topp}{\operatorname{top}}
\begin{document}

\title[A finer reduction of constraint problems to digraphs]{A finer
  reduction of constraint problems to digraphs\rsuper*} 
\titlecomment{{\lsuper*}This article extends an earlier conference paper \cite{DBLP:conf/cp/BulinDJN13}}

\author[J.~Bul{\' i}n]{Jakub Bul{\' i}n\rsuper a}	%required
\address{{\lsuper a}Department of Mathematics, University of
	Colorado Boulder, USA}	%required
\email{jakub.bulin@colorado.edu}  %optional
\thanks{{\lsuper a}The first author was supported by the grant projects GA {\v C}R
13-01832S, GA UK 558313, SVV-2014-260107, M{\v S}MT {\v C}R 7AMB13PL013 and the Austrian Science Fund project P24285.}	%optional

\author[D.~Deli\'c]{Dejan Deli\'c\rsuper b}	%optional
\address{{\lsuper b}Department of Mathematics, Ryerson University,  Canada}	%optional
\email{ddelic@ryerson.ca}  %optional
\thanks{{\lsuper b}The second author gratefully acknowledges support by the Natural Sciences and
Engineering Research Council of Canada in the form of a Discovery Grant.}	%optional

\author[M.~Jackson]{Marcel Jackson\rsuper c}	%optional
\address{{\lsuper{c,d}}Department of Mathematics and Statistics, La Trobe University, Australia}	%optional
\email{m.g.jackson@latrobe.edu.au, toddniven@gmail.com}
%\urladdr{name3@url3\quad\rm{(optionally, a web-page can be specified)}}  %optional
\thanks{{\lsuper{c,d}}The
third fourth authors were supported by ARC Discovery Project DP1094578,
while the third author was also supported by Future Fellowship FT120100666.}	%optional

\author[T.~Niven]{Todd Niven\rsuper d}	%optional
\address{\vspace{-18 pt}}	%optional
%\email{toddniven@gmail.com}
%\thanks{The
%fourth author was supported by ARC Discovery Project DP1094578}
%% etc.

%% required for running head on odd and even pages, use suitable
%% abbreviations in case of long titles and many authors:

%%%%%%%%%%%%%%%%%%%%%%%%%%%%%%%%%%%%%%%%%%%%%%%%%%%%%%%%%%%%%%%%%%%%%%%%%%%

%% the abstract has to PRECEDE the command \maketitle:
%% be sure not to issue the \maketitle command twice!

\begin{abstract}
\noindent It is well known that the constraint satisfaction problem over a general relational 
structure $\mathbb{A}$ is polynomial time equivalent to the constraint problem over some associated digraph.   We present a
variant of this construction and show that the corresponding constraint satisfaction problem is logspace equivalent to that over $\mathbb{A}$.  Moreover, we show that almost all of the commonly encountered polymorphism properties are held equivalently on the $\mathbb{A}$ and the constructed digraph.  
As a consequence, the Algebraic CSP dichotomy conjecture as well as 
the conjectures characterizing CSPs solvable in logspace and in nondeterministic 
logspace are equivalent to their restriction to digraphs.
\end{abstract}

\maketitle

%% start the paper here:

%%%%%%%fix all this
\section*{Introduction}
\noindent A fundamental problem in constraint programming is to understand the
computational complexity of constraint satisfaction problems
(CSPs). While it is well known that constraint satisfaction
problems can be NP-complete in general, there are many subclasses of problems for
which there are efficient solving methods. One way to restrict the
instances is to only allow a fixed set of constraint relations, often
referred to as a \emph{constraint language}~\cite{b-j-k} or
\emph{fixed template}. Classifying the computational complexity of
fixed template CSPs has been a major focus in the theoretical study of
constraint satisfaction. In particular it is of interest to know which
templates produce polynomial time solvable problems to help provide
more efficient solution techniques.

The study of fixed template CSPs dates back to the 1970s with the
work of Montanari~\cite{Montanari} and Schaefer~\cite{sch}. A standout
result from this era is Schaefer's dichotomy for \emph{boolean} CSPs (i.e., arising from constraint languages over 2-element domains). The decision problems for fixed template CSPs over finite domains belong to the class NP, and Schaefer showed that in the boolean case, a constraint language is either solvable in
polynomial time or NP-complete. Dichotomies cannot be expected for
decision problems in general, since (under the assumption that
P$\neq$NP) there are many problems in NP that are neither solvable in
polynomial time, nor NP-complete \cite{lad}. Another important
dichotomy was proved by Hell and Ne{\v s}et{\v
  r}il~\cite{helnes:1}. They showed that if a fixed template is a
finite simple graph (the vertices make up the domain and the edge relation
is the only allowed constraint), then the corresponding CSP is
either polynomial time solvable or NP-complete. The decision problem
for a graph constraint language can be rephrased as a graph homomorphism
problem (a graph homomorphism is a function from the vertices of one
graph to another such that the edges are preserved).  Specifically,
given a fixed graph $\mathcal H$ (the constraint language), an
instance is a graph $\mathcal G$ together with the question ``Is there
a graph homomorphism from $\mathcal G$ to $\mathcal H$?''. In this
sense, $3$-colorability corresponds to $\mathcal H$ being the complete
graph on $3$ vertices. The notion of graph homomorphism problems
naturally extends to directed graph (digraph) homomorphism problems
and to relational structure homomorphism problems.

These early examples of dichotomies, by Schaefer, Hell and Ne{\v
  s}et{\v r}il, form the basis of a larger project of classifying the
complexity of fixed template CSPs.  Of particular importance in this
project is to prove the so-called \emph{CSP dichotomy conjecture} of
Feder and Vardi~\cite{fedvar} dating back to 1993. It states that the
CSPs related to a fixed constraint language over a finite domain are
either polynomial time solvable or NP-complete.  To date this
conjecture remains unanswered, but it has driven major advances in the
study of CSPs.
 
One such advance is the algebraic connection revealed in the work of Jeavons,
Cohen and Gyssens~\cite{JCG97} and later refined by Bulatov, Jeavons
and Krokhin~\cite{b-j-k}. This connection associates with each finite
domain constraint language~$\mathbb{A}$ a finite algebraic structure, the so-called
\emph{algebra of polymorphisms}. The properties of this algebraic structure are 
deeply linked with the computational complexity of the constraint
language. In particular, for a fixed core constraint language~$\mathbb{A}$,
if the algebra of polymorphisms of $\mathbb A$ does not satisfy a certain natural 
property, sometimes called being \emph{Taylor}, then the CSP restricted to the constraint language given
by~$\mathbb{A}$ is NP-complete. Bulatov, Jeavons and
Krokhin~\cite{b-j-k} go on to conjecture that all constraint
languages (over finite domains) whose algebras of polymorphisms are Taylor 
determine polynomial time CSPs (a
stronger form of the CSP dichotomy conjecture, since it describes
where the split between polynomial time and NP-completeness
lies). This conjecture is often referred to as the \emph{Algebraic CSP dichotomy conjecture}.  
  
Many important results have been built upon
this algebraic connection.  Bulatov~\cite{bul3} extended
Schaefer's~\cite{sch} result on 2-element domains to prove the CSP
dichotomy conjecture for 3-element domains. Barto, Kozik and
Niven~\cite{b-k-n} extended Hell and Ne{\v s}et{\v r}il's
result~\cite{helnes:1} on simple graphs to constraint languages
consisting of a finite digraph with no sources and no sinks. Barto and
Kozik~\cite{BW_journal_version} gave a complete algebraic description of the
constraint languages over finite domains that are solvable by local
consistency methods (these problems are said to be of \emph{bounded
  width}) and as a consequence it is decidable to determine whether a
constraint language can be solved by such methods. 

The algebraic approach was also succesfully applied to study finer complexity 
classification of CSPs. Larose and Tesson \cite{lartes} conjectured a natural algebraic 
characterization of templates giving rise to CSPs solvable in logspace (L) 
and in nondeterministic logspace (NL). In both cases they established the hardness 
part of the conjecture.

In their seminal paper, Feder and Vardi~\cite{fedvar} not only
conjectured a P vs. NP-complete dichotomy, they also reduced the problem of proving the
dichotomy conjecture to the particular case of digraph homomorphism
problems, and even to digraph homomorphism problems where the digraph
is balanced (here balanced means that its vertices can be partitioned
into levels).  Specifically, for every template $\mathbb{A}$ (a finite
relational structure of finite type) there is a balanced digraph 
$\mathcal{D}(\mathbb A)$ such that the CSP over $\mathbb A$ is
polynomial time equivalent to that over $\mathcal{D}(\mathbb{A})$.

In this paper we present a variant of such a construction and prove that
(under our construction) CSP over $\mathcal{D}(\mathbb{A})$ is \emph{logspace} equivalent to 
CSP over $\mathbb A$ and that the algebra of polymorphisms of 
the digraph $\mathcal{D}(\mathbb{A})$ retains almost all relevant properties. For example,
$\mathcal{D}(\mathbb{A})$ has bounded width, if and only if $\mathbb A$ does.
In particular, it follows that the Algebraic CSP dichotomy conjecture, 
the conjectures characterizing CSPs in L and NL as well as other open questions reduce 
to the case of digraphs.

In a conference version of this article \cite{DBLP:conf/cp/BulinDJN13}, the authors showed that the Algebraic CSP dichotomy conjecture is equivalent to its restriction to the case of digraphs.  This was established by showing that our construction preserves
a particular kind of algebraic property, namely existence of a 
\emph{weak near-unanimity} polymorphism.

\subsection*{Organization of the paper}

In Section 1 we present the main results of this paper. Section 2 introduces
our notation and the necessary notions concerning relational structures, digraphs and the algebraic approach to the CSP.
In Section 3 we describe the construction of $\mathcal D(\mathbb A)$.
Sections 4 and 5 are devoted to proving that the construction preserves cores 
and a large part of the equational properties satisfied by the algebra of polymorphisms.
Section 6 contains the logspace reduction of $\mathrm{CSP}(\mathcal D(\mathbb A))$ to
$\mathrm{CSP}(\mathbb A)$.
In Section 7 we discuss a few applications of our result and related open problems.

\section{The main results}\label{sec:mainresults}
\noindent In general, fixed template CSPs can be modelled as relational
structure homomorphism problems~\cite{fedvar}. For detailed 
definitions of relational structures, homomorphisms and other notions used in 
this section, see Section~\ref{sec:definitions}. 

Let $\mathbb{A}$ be a finite structure with signature~$\mathcal{R}$
(the fixed template).  Then the \emph{constraint satisfaction problem
  for} $\mathbb{A}$ is the following decision problem.

\medskip
\noindent \textbf{Constraint satisfaction problem for $\mathbb A$.}\\
\fbox{\parbox{0.67\textwidth}{ $\mathbf{CSP}(\mathbb A)$ \hrule
    \medskip
    INSTANCE: A finite $\mathcal{R}$-structure $\mathbb{X}$.\\
    QUESTION: Is there a homomorphism from $\mathbb{X}$ to
    $\mathbb{A}$?}}  \medskip

\noindent The dichotomy conjecture~\cite{fedvar} can be stated as
follows:
\begin{CSPdichotomy}
  Let $\mathbb{A}$ be a finite relational structure.  
  Then\linebreak
  $\CSP(\mathbb{A})$ is solvable in polynomial time or NP-complete.
\end{CSPdichotomy}
\noindent Every finite relational structure $\mathbb A$ has a unique
\emph{core} substructure $\mathbb A'$ (see Section~\ref{sec:rel
  struct} for the precise definition) such that $\CSP(\mathbb{A})$ and
$\CSP(\mathbb{A}')$ are identical problems, i.e., the ``yes'' and
``no'' instances are precisely the same. The algebraic dichotomy
conjecture~\cite{b-j-k} is the following:
\begin{algdichotomy}
  Let $\mathbb{A}$ be a finite relational structure that is a core. If
  the algebra of polymorphisms of 
  $\mathbb{A}$ is Taylor, then $\CSP(\mathbb{A})$ is
  solvable in polynomial time, otherwise $\CSP(\mathbb{A})$ is
  NP-complete.
\end{algdichotomy}
\noindent Indeed, perhaps the above conjecture should be called the
\emph{algebraic tractability conjecture} since it is known that if the algebra of 
polymorphisms of a
core $\mathbb A$ is not Taylor, then
$\CSP(\mathbb A)$ is NP-complete~\cite{b-j-k}.

Larose and Tesson \cite{lartes} conjectured a similar characterization of
finite relational structures with the corresponding CSP solvable in L and in NL. 
In the same paper they also proved the hardness part of boths claims.
Their conjecture is widely discussed in the following slightly stronger form (equivalent modulo reasonable complexity-theoretic assumptions; see the discussion in \cite{JKN13}).

\begin{finerCSPcomplexityconjectures} Let $\mathbb{A}$ be a finite relational 
structure that is a core. Then the following hold.
\begin{enumerate}[label=(\roman*)]
\item 
$\CSP(\mathbb{A})$ is solvable in nondeterministic logspace, if and only if 
the algebra of polymorphisms of $\mathbb{A}$ is congruence join-semidistributive.
\item 
$\CSP(\mathbb{A})$ is solvable in logspace, if and only if the algebra of 
polymorphisms of $\mathbb{A}$ is congruence join-semidistributive
and congruence $n$-permutable for some $n$.
\end{enumerate}
\end{finerCSPcomplexityconjectures}

\noindent 
Feder and Vardi~\cite{fedvar} proved that every fixed template CSP is
polynomial time equivalent to a digraph CSP. Thus the CSP dichotomy conjecture is 
equivalent to its restriction to digraphs. In this paper we investigate a construction similar to theirs. 
The main results of this paper are summarized in the following theorem.

\begin{thm}\label{thm:mainresults}
  For every finite relational structure $\mathbb{A}$ there exists a finite
  digraph~$\mathcal D(\mathbb A)$ such that the following holds:
  \begin{enumerate}[label=(\roman*)]
  \item $\CSP(\mathbb{A})$ and $\CSP(\mathcal D(\mathbb A))$ are
    logspace equivalent.
  \item $\mathbb{A}$ is a core if and only if $\mathcal D(\mathbb A)$
    is a core.
  \item If $\Sigma$ is a
  linear idempotent set of identities such that the algebra of
  polymorphisms of the oriented path
  \mbox{$\bullet\boldsymbol\rightarrow\bullet\boldsymbol
    \leftarrow\bullet\boldsymbol\rightarrow\bullet$} satisfies
  $\Sigma$ and each identity in $\Sigma$ is either balanced or
  contains at most two variables, then
$$
\mathbb A\models\Sigma\text{ if and only if }\mathcal D(\mathbb A)\models\Sigma.
$$
  
 \end{enumerate}
\end{thm}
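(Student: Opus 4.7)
My plan is to take the construction of $\mathcal D(\mathbb A)$ given in Section~3, which replaces every tuple in every relation of $\mathbb A$ by a constant-size oriented-path ``gadget'' encoding both the relation symbol and the position of each coordinate. The vertex set of $\mathcal D(\mathbb A)$ is partitioned into levels, and the domain $A$ embeds as the level-zero vertices; the same recipe applied to an arbitrary input structure $\mathbb X$ yields a digraph $\mathcal D(\mathbb X)$, and this is what will provide the logspace reduction in part~(i). Local rigidity of the gadget together with the balanced level structure is the common engine driving all three parts of the theorem: any homomorphism between $\mathcal D$-images must preserve levels and must therefore map each gadget onto a gadget of the same type.

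For part~(i), the forward reduction $\mathbb X\mapsto\mathcal D(\mathbb X)$ is computable in logspace by streaming through tuples and emitting the constant-size gadget vertices and edges; correctness in one direction is routine and in the other uses the rigidity/level argument. The reverse logspace reduction from $\CSP(\mathcal D(\mathbb A))$ to $\CSP(\mathbb A)$ is more delicate, since an input digraph $\mathcal G$ need not be of the form $\mathcal D(\mathbb X)$; the point is that the chosen gadget is rigid enough that detecting a level structure on $\mathcal G$ and a consistent gadget decomposition, and then reading off the induced $\mathcal R$-structure, reduces to a bounded set of local pattern checks plus reachability queries within balls of constant radius, both doable in logspace. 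Part~(ii) is a specialisation of the same analysis to endomorphisms: they must preserve levels and gadgets, and so they biject with gadget-wise extensions of endomorphisms of $\mathbb A$, which immediately equates the core property on the two sides.

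Part~(iii) is the main content and where I expect the principal difficulty. Restriction $F\mapsto F|_{A^n}$ is a clone homomorphism $\Pol(\mathcal D(\mathbb A))\to\Pol(\mathbb A)$, so identities transfer in one direction for free. The hard direction is the extension: given polymorphisms $f_1,\dots,f_r$ of $\mathbb A$ satisfying $\Sigma$, one must produce polymorphisms $F_1,\dots,F_r$ of $\mathcal D(\mathbb A)$ satisfying the same $\Sigma$. I would build each $F_i$ level-by-level, propagating the prescribed boundary values across every gadget using polymorphisms of the oriented path $\bullet\boldsymbol\rightarrow\bullet\boldsymbol\leftarrow\bullet\boldsymbol\rightarrow\bullet$ whose existence is precisely what the hypothesis on $\Sigma$ buys, with idempotence fixing the level-zero values to agree with the $f_i$'s. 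The crux, and the place where the ``balanced or at most two variables'' restriction is consumed, is consistency: distinct gadgets share level-zero endpoints, so propagating the $f_i$'s independently through each gadget could in principle produce conflicting values at those endpoints. For a balanced identity the equality of total level-sums on its two sides is exactly what reconciles the two propagations, while for an identity in at most two variables the conflict can be eliminated by a direct case analysis on a single gadget. Pinning down exactly which $\Sigma$ are preserved under this extension, and showing that the oriented-path condition is sufficient, is the heart of Sections~4 and~5.
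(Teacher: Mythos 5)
Your proposal matches the paper's overall architecture only in broad outline, and there are two genuine gaps, one in part (i) and one in part (iii).

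On part (i), the forward direction is fine in spirit (the paper actually goes through pp-definability of $\mathbb A$ in $\mathcal D(\mathbb A)$ rather than literally applying $\mathcal D$ to instances, but the two are morally the same). The reverse direction, however, you mischaracterize. You claim it reduces to ``local pattern checks plus reachability queries within balls of constant radius.'' This is not so: an arbitrary input digraph $\mathbb G$ for $\CSP(\mathcal D(\mathbb A))$ is not globally gadget-decomposable, and the ``internal components'' that the algorithm must identify (maximal connected pieces of $\mathbb G$ after deleting the top and bottom levels) can be arbitrarily large. The paper's reduction relies essentially on Reingold's theorem that undirected $s$--$t$ reachability is in L, applied to whole components; it also needs a nontrivial merging step (the equivalence relation $\sim$ on level-$0$ vertices, again requiring unbounded reachability). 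If the problem really were constant-radius, the appeal to Reingold would be unnecessary; that it is necessary tells you your estimate of the difficulty is off.

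On part (iii), the gap is more serious. A polymorphism $F$ of $\mathcal D(\mathbb A)$ must be defined on \emph{all} of $\mathcal D(\mathbb A)^m$, including $m$-tuples whose coordinates sit at different heights of the digraph. Your ``level-by-level propagation from the boundary'' scheme only makes sense for tuples whose entries all lie at the same level (i.e., in the diagonal component of $\mathcal D(\mathbb A)^m$), and even there propagation from the base alone underdetermines $F$: you also need to apply $f^{\mathbb A}$ to the hyperedge coordinates to decide \emph{which} gadget-path the output lands on, and a separate mechanism (in the paper, $f^{\mathbb Z}$ or a carefully designed linear order) to choose the output vertex when the target piece is a zigzag. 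The off-diagonal components of $\mathcal D(\mathbb A)^m$ — where coordinates lie at different levels or on unrelated gadgets — are handled in the paper by a lengthy Case 3, using two auxiliary linear orders $\sqsubseteq$, $\sqsubseteq^\star$ on $\mathcal D(\mathbb A)$ and a compatibility lemma guaranteeing that taking $\sqsubseteq$-minima (resp.\ $\sqsubseteq^\star$-maxima) of coordinate sets is a digraph homomorphism. Your proposal does not engage with these components at all, and they are where a large fraction of the technical work lives. Finally, your explanation of why ``balanced'' helps (``equality of total level-sums on its two sides'') is not how the hypothesis is actually consumed: what balance buys is that the two sides of the identity involve the same \emph{set} of variables, so the $\sqsubseteq$-minima (and the chosen components, paths, levels) computed on each side coincide. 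The two-variable clause is used differently again — idempotency reduces to identities of the form $f(\mathbf u)\approx x$ with $\mathbf u\in\{x,y\}^m$, which are then checked case by case.

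Part (ii) is essentially correct and matches the paper: endomorphisms must preserve levels and gadgets, giving a bijection with endomorphisms of $\mathbb A$, from which the core equivalence follows.
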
 

\begin{proof}
  Item (i) is Theorem~\ref{thm:logspace},  (ii) is Corollary~\ref{cor:core} and (iii)
  is Theorem \ref{thm:preserved_conditions}.
\end{proof}

\noindent The construction of $\mathcal D(\mathbb A)$ is described in 
Section \ref{sec:reduction}, for a bound on the size of $\mathcal D(\mathbb A)$ 
see Proposition~\ref{rem:number}. The condition on $\Sigma$ in item (iii) is not very 
restrictive: it includes 
almost all of the commonly encountered properties relevant to the CSP. 
A number of these are listed in Corollary \ref{cor:preserved_conditions}.  
Note that the list includes
the properties of being Taylor, congruence join-semidistributive and congruence 
$n$-permutable (for $n\geq 3$); hence we have the following corollary.

\begin{cor}
The Algebraic CSP dichotomy conjecture and the Finer CSP complexity conjectures are 
also equivalent to their restrictions to digraphs. 
\end{cor}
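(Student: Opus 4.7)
The plan is to derive each direction of each asserted equivalence directly from Theorem~\ref{thm:mainresults}. The forward direction is immediate: a digraph is a particular relational structure, so if a conjecture holds for all finite core relational structures, it certainly holds for the digraph subclass. The content is the reverse direction, which I would prove as follows.

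Given any finite core relational structure $\mathbb{A}$, form the associated digraph $\mathcal{D}(\mathbb{A})$ via the construction of Section~\ref{sec:reduction}. By item~(ii) of Theorem~\ref{thm:mainresults}, $\mathcal{D}(\mathbb{A})$ is again a core, so the ``core'' hypotheses of the Algebraic CSP dichotomy conjecture and the Finer CSP complexity conjectures apply to it exactly when they apply to $\mathbb{A}$. Item~(i) gives a logspace equivalence between $\CSP(\mathbb{A})$ and $\CSP(\mathcal{D}(\mathbb{A}))$; since P, NL, L and NP-completeness are all stable under logspace reductions, the two problems lie in exactly the same classes on each side of each asserted equivalence. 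Hence the complexity-theoretic half of each conjecture transfers from $\mathcal{D}(\mathbb{A})$ back to $\mathbb{A}$.

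For the algebraic half, the plan is to invoke Corollary~\ref{cor:preserved_conditions}, which (as advertised in the paragraph preceding the corollary statement) lists the relevant properties—being Taylor, congruence join-semidistributive, and congruence $n$-permutable for $n\geq 3$—as instances of linear idempotent identity sets $\Sigma$ satisfying the two structural constraints of item~(iii): every identity in $\Sigma$ is either balanced or uses at most two variables, and the polymorphism algebra of the oriented path $\bullet\boldsymbol\rightarrow\bullet\boldsymbol\leftarrow\bullet\boldsymbol\rightarrow\bullet$ satisfies $\Sigma$. With this in hand, item~(iii) immediately yields $\mathbb{A}\models\Sigma$ iff $\mathcal{D}(\mathbb{A})\models\Sigma$ for each such $\Sigma$, matching the algebraic side of each conjecture.

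Combining these, suppose the Algebraic CSP dichotomy conjecture holds for all finite core digraphs and let $\mathbb{A}$ be a finite core relational structure. If the polymorphisms of $\mathbb{A}$ are Taylor, then so are those of $\mathcal{D}(\mathbb{A})$ by (iii), hence $\CSP(\mathcal{D}(\mathbb{A}))$ is in P by hypothesis, and therefore so is $\CSP(\mathbb{A})$ by (i); the NP-completeness case is the contrapositive and is already known unconditionally~\cite{b-j-k}. The identical template applies to parts~(i) and~(ii) of the Finer CSP complexity conjectures, using the respective entries of Corollary~\ref{cor:preserved_conditions}. The only point requiring attention is the verification that the axiomatizations used have the required balanced-or-binary form and are honoured by the zigzag path; I do not expect a genuine obstacle here, as this is exactly the content packaged into Corollary~\ref{cor:preserved_conditions}, whose heavy lifting has already been done in Theorem~\ref{thm:preserved_conditions}.
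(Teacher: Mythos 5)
Your proof is correct and matches the paper's (implicit) argument: the paper does not spell out a proof of the corollary, but the sentence preceding it and the remark following Corollary~\ref{cor:preserved_conditions} indicate exactly the reasoning you give — combine Theorem~\ref{thm:mainresults}~(i) (logspace equivalence, so P, NL, L and NP-hardness all transfer), (ii) (the core hypothesis transfers), and (iii) together with items~(1)--(3) of Corollary~\ref{cor:preserved_conditions} (Taylor, SD$(\vee)$, and C$n$P for $n\geq 3$ transfer). Your handling of the hardness direction via the known unconditional results of \cite{b-j-k} and \cite{lartes} is a valid way to close the loop, and the observation that ``C$n$P for some $n$'' is unaffected by the loss of $n=2$ (since $k$-permutability implies $(k+1)$-permutability) is correctly implicit in invoking item~(3).
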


\section{Background and definitions}\label{sec:definitions}
\noindent We approach fixed template constraint satisfaction problems
from the ``homomorphism problem'' point of view. For background on the
homomorphism approach to CSPs, see \cite{fedvar}, and for background
on the algebraic approach to CSPs, see \cite{b-j-k}.

A \emph{relational signature} $\mathcal{R}$ is a (in our case finite)
set of \emph{relation symbols} $R_i$, each with an associated arity
$k_i$. A (finite) \emph{relational structure} $\mathbb{A}$ \emph{over
  relational signature} $\mathcal{R}$ (called an
\emph{$\mathcal{R}$-structure}) is a finite set $A$ (the
\emph{domain}) together with a relation $R_i\subseteq A^{k_i}$, for
each relation symbol $R_i$ of arity $k_i$ in $\mathcal{R}$. A
\emph{CSP template} is a fixed finite $\mathcal{R}$-structure, for
some signature $\mathcal{R}$.

For simplicity we do not distinguish the relation with its associated
relation symbol. However, to avoid ambiguity, we sometimes write
$R^\mathbb A$ to indicate that $R$ is interpreted in $\mathbb A$. We will
often refer to the domain of a relational structure $\mathbb A$ simply
by $A$. When referring to a fixed relational structure, we may simply
specify it as $\mathbb A = (A; R_1,R_2,\dots,R_n)$. For technical
reasons we require that signatures are nonempty and that all the relations of a relational structure are nonempty.

\subsection{Notation}\label{sec:notation}
\noindent For a positive integer $n$ we denote the set
$\{1,2,\dots,n\}$ by $[n]$. We write tuples using boldface notation,
e.g. $\mathbf a=(a_1,a_2,\dots,a_k)\in A^k$ and when ranging over
tuples we use superscript notation,
e.g. $(\mathbf{r}^1,\mathbf{r}^2,\dots,\mathbf{r}^l)\in R^l\subseteq
(A^k)^l$, where $\mathbf{r}^i=(r^i_1,r^i_2,\dots,r^i_k)$, for
$i=1,\dots,l$.

Let $R_i\subseteq A^{k_i}$ be relations of arity $k_i$, for
$i=1,\dots,n$. Let $k=\sum_{i=1}^nk_i$ and $l_i=\sum_{j<i}k_j$. We
write $R_1\times\dots\times R_n$ to mean the $k$-ary relation
\[
\{ (a_1,\dots, a_k)\in A^k \mid (a_{l_i+1},\dots,a_{l_i+k_i})\in R_i
\text{ for } i=1,\dots,n\}.
\]

An \emph{$n$-ary operation} on a set $A$ is simply a mapping
$f:A^n\rightarrow A$; the number $n$ is the \emph{arity} of $f$.  Let
$f$ be an $n$-ary operation on $A$ and let $k>0$. We write $f^{(k)}$
to denote the $n$-ary operation obtained by applying $f$ coordinatewise on
$A^k$. That is, we define the $n$-ary operation $f^{(k)}$ on $A^k$ by
\[
f^{(k)}(\mathbf a^1,\dots,\mathbf
a^n)=(f(a^1_1,\dots,a^n_1),\dots,f(a^1_k,\dots,a^n_k)),
\]
for $\mathbf a^1,\dots, \mathbf a^n\in A^k$.

We will be particularly interested in so-called idempotent operations.
An $n$-ary operation $f$ is said to be \emph{idempotent} if it
satisfies the equation $f(x,x,\dots,x)=x$.

\subsection{Homomorphisms, cores and polymorphisms}\label{sec:homs}
\noindent We begin with the notion of a relational structure
homomorphism.
\begin{defi}\label{def:hom}
  Let $\mathbb A$ and $\mathbb B$ be relational structures in the same
  signature~$\mathcal{R}$. A \emph{homomorphism} from $\mathbb A$ to
  $\mathbb B$ is a mapping $\varphi$ from $A$ to $B$ such that for
  each $k$-ary relation symbol $R$ in $\mathcal{R}$ and each $k$-tuple
  $\mathbf{a}\in A^k$, if $\mathbf{a}\in R^\mathbb A$, then
  $\varphi^{(k)}(\mathbf{a})\in R^\mathbb B$.
\end{defi}

We write $\varphi:\mathbb A\to\mathbb B$ to mean that $\varphi$ is a
homomorphism from $\mathbb A$ to $\mathbb B$, and $\mathbb A\to\mathbb
B$ to mean that there exists a homomorphism from $\mathbb A$ to
$\mathbb B$. 

An \emph{isomorphism} is a bijective homomorphism $\varphi$ such that
$\varphi^{-1}$ is also a homomorphism. A homomorphism $\mathbb A\to\mathbb
A$ is called an \emph{endomorphism}. An isomorphism from $\mathbb A$
to $\mathbb A$ is an \emph{automorphism}. It is an easy fact that if
$\mathbb A$ is finite, then every surjective endomorphism is an
automorphism.

A finite relational structure $\mathbb A'$ is a \emph{core} if every
endomorphism $\mathbb A'\to\mathbb A'$ is surjective (and therefore an
automorphism). For every $\mathbb A$ there exists a relational
structure $\mathbb A'$ such that $\mathbb A\to\mathbb A'$ and $\mathbb
A'\to\mathbb A$ and $\mathbb A'$ is of minimum size with respect to these
properties; that structure $\mathbb A'$ is called the \emph{core of
  $\mathbb A$}. The core of $\mathbb A$ is unique (up to isomorphism)
and $\CSP(\mathbb A)$ and $\CSP(\mathbb A')$ are the same decision
problems. Equivalently, the core of $\mathbb A$ can be defined as an induced substructure of minimum size that $\mathbb A$ retracts onto. (See~\cite{helnes} for details on cores for graphs, cores for
relational structures are a natural generalization.)

The notion of \emph{polymorphism} is central in the 
so-called 
algebraic approach to the $\CSP$. Polymorphisms are a natural
generalization of endomorphisms to higher arity operations.

\begin{defi}
  Given an $\mathcal{R}$-structure $\mathbb{A}$, an $n$-ary
  \emph{polymorphism} of $\mathbb{A}$ is an $n$-ary operation $f$ on
  $A$ such that $f$ preserves the relations of $\mathbb A$. That is,
  if $\mathbf{a}^1,\dots,\mathbf{a}^n\in R$, for some $k$-ary relation
  $R$ in $\mathcal{R}$, then $f^{(k)}(\mathbf a^1,\dots,\mathbf
  a^n)\in R$.  
\end{defi}
Thus, an endomorphism is a unary polymorphism. Polymorphisms satisfying certain identities has been used extensively 
in the algebraic study of CSPs.

\subsection{Algebra} \label{subsection:preliminaries}
\noindent Given a finite relational structure $\mathbb A$, let
$\Pol\mathbb A$ denote the set of all polymorphisms of $\mathbb
A$. The \emph{algebra of polymorphisms} of $\mathbb A$ is simply the
algebra with the same universe whose operations are all polymorphisms
of $\mathbb A$. A subset $B\subseteq A$ is a \emph{subuniverse} of
$\mathbb A$, denoted by $B\leq\mathbb A$, if it is a subuniverse of
the algebra of polymorphisms of $\mathbb A$, i.e., it is closed under
all $f\in\Pol\mathbb A$.

An \emph{(operational) signature} is a (possibly infinite) set of
operation symbols with arities assigned to them. By an \emph{identity}
we mean an expression $u\approx v$ where $u$ and $v$ are terms in some
signature. An identity $u\approx v$ is \emph{linear} if both $u$ and
$v$ involve at most one occurrence of an operation symbol
(e.g. $f(x,y)\approx g(x)$, or $h(x,y,x)\approx x$); and
\emph{balanced} if the sets of variables occuring in $u$ and in $v$
are the same (e.g. $f(x,x,y)\approx g(y,x,x)$).

A set of identities $\Sigma$ is \emph{linear} if it contains only
linear identities; \emph{balanced} if all the identities in $\Sigma$
are balanced; and \emph{idempotent} if for each operation symbol $f$
appearing in an identity of $\Sigma$, the identity
$f(x,x,\dots,x)\approx x$ is in $\Sigma$. \footnote{We can relax this
  condition and require the identity $f(x,x,\dots,x)\approx x$ only to
  be a \emph{syntactical consequence} of identities in $\Sigma$.} For example, the
identities $p(y,x,x)\approx y,\ p(x,x,y)\approx y,\ p(x,x,x)\approx x$
(defining the so-called \emph{Maltsev} operation) form a linear idempotent
set of identities which is not balanced.

A~\emph{strong Maltsev condition}, commonly encountered in universal
algebra, can be defined in this context as a finite set of
identities. A \emph{Maltsev condition} is an increasing chain of
strong Maltsev conditions, ordered by syntactical consequence. In all
results from this paper, ``set of identities'' can be replaced with
``Maltsev condition''.

Let $\Sigma$ be a set of identities in a signature with operation
symbols $\mathcal{F}=\{ f_\lambda\mid\lambda\in\Lambda\}$. We say that
a relational structure $\mathbb{A}$ \emph{satisfies} $\Sigma$ (and
write \emph{$\mathbb A\models\Sigma$}), if for every
$\lambda\in\Lambda$ there is a polymorphism
$f^\mathbb{A}_\lambda\in\Pol\mathbb A$ such that the identities in
$\Sigma$ hold universally in $\mathbb A$ when for each
$\lambda\in\Lambda$ the symbol $f_\lambda$ is interpreted as
$f^\mathbb A_\lambda$.

For example, a \emph{weak near-unanimity} (\emph{WNU}) is an $n$-ary ($n\geq 2$) 
idempotent operation $\omega$ satisfying the identities
\[
\omega(x,\dots,x,y)=\omega(x,\dots,x,y,x)=\dots=\omega(y,x,\dots,x).
\]
Thus, having an $n$-ary weak near-unanimity is definable by a 
linear balanced idempotent set of identities. 
Existence of WNU polymorphisms influences $\mathrm{CSP}(\mathbb A)$ to a great 
extent. The following characterization was discovered in \cite{maroti-mckenzie}: a finite algebra (or relational structure) is
\begin{itemize}
 \item \emph{Taylor}, if it has a weak near-unanimity operation of some arity, and 
 \item \emph{congruence meet-semidistributive} if it has WNU operations of all 
 but finitely many arities.
\end{itemize}
The Algebraic CSP dichotomy conjecture asserts that being Taylor is what 
distinguishes tractable (core) relational structures from the NP-complete ones, and
a similar split is known for congruence meet-semidistributivity and solvability by local consistency checking 
(the so-called \emph{bounded width}):

\begin{boundedwidththeorem}{\rm \cite{BW_journal_version}} \label{thm:bounded_width}
Let $\mathbb A$ be a finite relational structure that is a
core. Then $\CSP(\mathbb A)$ is solvable by local consistency checking, if and only if the algebra of polymorphisms
of $\mathbb A$ is congruence meet-semidistributive.
\end{boundedwidththeorem}

The properties of \emph{congruence join-semidistributivity} and 
\emph{congruence $n$-permutability} which appear in the finer CSP complexity 
conjectures are also definable by linear idempotent sets of identities, 
albeit more complicated ones; we refer the reader to \cite{hobbymckenzie}. 
We will introduce more Maltsev conditions and their connection to the CSP in Section \ref{sec:Maltsev}.

\subsection{Primitive positive definability}\label{sec:rel struct}
\noindent A first order formula is called \emph{primitive positive} if
it is an existential conjunction of atomic formul\ae.  Since we only
refer to relational signatures, a primitive positive formula is simply
an existential conjunct of formul\ae\ of the form $x = y$ or
$(x_1,x_2,\dots,x_k) \in R$, where $R$ is a relation symbol of arity
$k$.

For example, if we have a binary relation symbol $E$ in our signature,
then the formula
\[
\psi(x,y) = (\exists z)((x,z)\in E\ \wedge\ (z,y)\in E)
\]
pp-defines a binary relation in which elements $a,b$ are related if
there is a directed path of length $2$ from $a$ to $b$ in $E$.

\begin{defi}
  A relational structure $\mathbb{B}$ is \emph{primitive positive
    definable} in $\mathbb{A}$ \textup(or~$\mathbb{A}$
  \emph{pp-defines} $\mathbb{B}$\textup) if
\begin{enumerate}[label=(\roman*)]
\item the set $B$ is a subset of $A$ and is definable by a primitive
  positive formula interpreted in $\mathbb{A}$, and
\item each relation $R$ in the signature of $\mathbb{B}$ is definable
  on the set $B$ by a primitive positive formula interpreted in
  $\mathbb{A}$.
\end{enumerate}
\end{defi}

\noindent The following result relates the above definition to the complexity of
CSPs.  The connection is originally due to Jeavons, Cohen and Gyssens \cite{JCG97}, though the logspace form stated and used here can be found in Larose and Tesson \cite[Theorem~2.1]{lartes}.
\begin{lem}\label{lem:logspace_reduction}
  Let $\mathbb{A}$ be a finite relational structure that pp-defines
  $\mathbb{B}$. Then, $\mathrm{CSP}(\mathbb{B})$ is logspace reducible to $\CSP(\mathbb{A})$.
\end{lem}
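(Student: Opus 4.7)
The plan is to unfold the pp-definitions directly on the input. Let $\psi_B(y)$ be the pp-formula (in the signature of $\mathbb{A}$) defining the subset $B$, and for each relation symbol $R_i$ in the signature of $\mathbb{B}$ of arity $k_i$, let $\psi_i(y_1,\dots,y_{k_i})$ be the pp-formula defining $R_i^{\mathbb{B}}$. These formulas depend only on $\mathbb{A}$ and $\mathbb{B}$, which are fixed, so each has a bounded number of existential quantifiers and atomic conjuncts. Given an input instance $\mathbb{X}$, I build an $\mathbb{A}$-instance $\mathbb{Y}$ whose domain consists of the elements of $X$ together with a fresh copy of the existentially quantified variables of $\psi_B$ for each $x\in X$, and a fresh copy of the existentially quantified variables of $\psi_i$ for each relation symbol $R_i$ and each tuple $\mathbf{t}\in R_i^{\mathbb{X}}$. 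The constraints of $\mathbb{Y}$ are obtained by instantiating the atomic conjuncts inside $\psi_B$ and each $\psi_i$ on these fresh variables, identifying $y$ with $x$ in the first case and $y_j$ with the $j$-th coordinate of $\mathbf{t}$ in the second.

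Correctness is the standard unfolding argument. Any homomorphism $h:\mathbb{X}\to\mathbb{B}$ extends to a homomorphism $\mathbb{Y}\to\mathbb{A}$ by choosing, for each fresh block of variables, witnesses in $A$ for the existential quantifiers of $\psi_B$ or $\psi_i$; such witnesses exist because $h(x)\in B$ and $h^{(k_i)}(\mathbf{t})\in R_i^{\mathbb{B}}$. Conversely, the restriction to $X$ of any homomorphism $\mathbb{Y}\to\mathbb{A}$ sends each $x\in X$ into $B$ (since $\psi_B$ is witnessed) and sends each tuple of $R_i^{\mathbb{X}}$ into $R_i^{\mathbb{B}}$ (since $\psi_i$ is witnessed), hence yields a homomorphism $\mathbb{X}\to\mathbb{B}$.

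The step requiring the most care is the logspace bound on the reduction. The key observation is that $\psi_B$ and the formulas $\psi_i$ are part of the fixed data (not part of the input), so each has constant size. Fresh auxiliary variables can be named canonically by a pair consisting of either an input element or an input tuple together with a constant-length index identifying which quantified variable of the fixed pp-formula it corresponds to; this encoding uses $O(\log|\mathbb{X}|)$ bits. The logspace transducer then scans the input once, and for each element of $X$ and each tuple of each $R_i^{\mathbb{X}}$ it emits a constant-size block of $\mathbb{A}$-constraints on the appropriate canonical names, which requires only counters and pointers of logarithmic length.
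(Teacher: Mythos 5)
The paper does not prove this lemma itself; it cites it, attributing the connection to Jeavons, Cohen and Gyssens and the logspace form to Larose and Tesson \cite[Theorem~2.1]{lartes}. Your unfolding construction is the standard argument behind that result, and most of it is right: the gadget replacement is constant-size per constraint because the pp-formulas are part of the fixed data, fresh variables can be named by pointers into the input, and the two-way correctness is exactly as you describe.

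The one step you gloss over is equality atoms, which the paper's definition of a primitive positive formula explicitly allows. ``Instantiating the atomic conjuncts'' is unproblematic for relation atoms, but if $\psi_B$ or some $\psi_i$ has an equality conjunct, the target signature of $\mathbb{A}$ generally contains no equality symbol, so you cannot simply emit an equality constraint into $\mathbb{Y}$. Equalities involving at least one existentially quantified variable can be eliminated by substitution once and for all inside the fixed formula, so those are harmless. But an equality $y_a = y_b$ between two \emph{free} variables of $\psi_i$ forces $t_a$ and $t_b$ to coincide for every tuple $\mathbf{t}\in R_i^{\mathbb{X}}$, and these forced identifications chain across the whole instance. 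A consistent output $\mathbb{Y}$ then requires replacing each input element by a canonical representative of its component in the forced-equality graph on $X$; that is an undirected reachability computation, in logspace only via Reingold's theorem, not via a single pass with counters. (This is also, in effect, why the lemma is logspace rather than something weaker.) For the pp-definitions actually invoked in this paper, which are all equality-free, your transducer is complete as written; for the lemma in full generality one must add the Reingold-based identification step.
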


It so happens that, if $\mathbb{A}$ pp-defines $\mathbb{B}$, then
$\mathbb{B}$ inherits the polymorphisms of
$\mathbb{A}$. See~\cite{b-j-k} for a detailed explanation. 
\begin{lem}{\rm \cite{b-j-k}}\label{lem:ppdef_polymorphisms}
  Let $\mathbb{A}$ be a finite relational structure that pp-defines
  $\mathbb{B}$. If $\varphi$ is a polymorphism of $\mathbb A$, then
  its restriction to $B$ is a polymorphism of $\mathbb B$.
\end{lem}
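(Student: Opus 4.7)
The plan is to show simultaneously that (i) the set $B$ is closed under $\varphi$ (so $\varphi|_B$ is well-defined as an operation $B^n \to B$), and (ii) $\varphi|_B$ preserves every relation of $\mathbb{B}$. Both are instances of the single underlying fact that any primitive positive formula evaluated in $\mathbb{A}$ defines a relation preserved by every polymorphism of $\mathbb{A}$. I would first isolate this fact as the key lemma and then deduce both halves from it.

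For the underlying fact, let $\psi(x_1,\dots,x_k) \equiv (\exists z_1 \dots \exists z_l)\bigwedge_{j} \alpha_j$ be a pp-formula with atoms $\alpha_j$ of the form $u=v$ or $(u_1,\dots,u_{k_j}) \in R_j$, and let $S \subseteq A^k$ be the relation it defines in $\mathbb{A}$. Given tuples $\mathbf{a}^1,\dots,\mathbf{a}^n \in S$, for each $i$ fix witnesses $c_1^i,\dots,c_l^i \in A$ verifying the quantifier-free matrix of $\psi$. The candidate witnesses for $\varphi^{(k)}(\mathbf{a}^1,\dots,\mathbf{a}^n)$ are the values $\varphi(c_h^1,\dots,c_h^n)$, $h=1,\dots,l$. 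I then check atom-by-atom: equality atoms are preserved because $\varphi$ is a function applied coordinatewise, and each relational atom $(u_1,\dots,u_{k_j}) \in R_j$ is preserved because, by hypothesis, $\varphi$ is a polymorphism of $\mathbb{A}$ and hence $R_j^\mathbb{A}$ is closed under coordinatewise application of $\varphi$.

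To deduce (i), apply the underlying fact to the unary pp-formula defining $B$ in $\mathbb{A}$: for any $b_1,\dots,b_n \in B$ the witnesses combine as above to show $\varphi(b_1,\dots,b_n) \in B$. To deduce (ii), apply it to the pp-formula defining each relation $R$ of $\mathbb{B}$ in $\mathbb{A}$: any $n$ tuples from $R^\mathbb{B}$ yield, upon coordinatewise application of $\varphi$, a tuple still satisfying the defining formula, and hence lying in $R^\mathbb{B}$. Since the output tuple has all entries in $B$ by (i), $\varphi|_B$ indeed maps $R^\mathbb{B}$ into itself.

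There is no serious obstacle; the argument is a bookkeeping induction on the structure of the pp-formula. The only mild subtlety is keeping the domain issue and the relation-preservation issue consistent, which I would address by first proving the purely syntactic lemma about pp-formulas in $\mathbb{A}$ (with no mention of $B$) and only afterwards restricting to $B$, so that (i) and (ii) fall out as two applications of the same statement.
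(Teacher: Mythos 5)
Your proposal is correct and is the standard argument for this fact; the paper itself does not give a proof but simply cites \cite{b-j-k}, and your argument is essentially the one found there (and throughout the CSP literature), namely that pp-definable relations are invariant under all polymorphisms. You isolate the right key lemma (closure of pp-definable relations under polymorphisms, proved by pushing the witnesses for the existential quantifiers through $\varphi$ coordinatewise and checking atoms), and you correctly handle the one small subtlety --- that one must first establish $B$ is closed under $\varphi$ before restricting the relations of $\mathbb{B}$, so that the output tuples actually land inside $B$ and hence inside $R^{\mathbb{B}}$ rather than merely satisfying the defining formula over $A$.
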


\noindent In particular, as an easy consequence of this lemma,
if $\mathbb A$ pp-defines $\mathbb B$ and $\mathbb A$ satisfies
a set of identities $\Sigma$, then $\mathbb B$ also satisfies $\Sigma$.

In the case that $\mathbb{A}$ pp-defines $\mathbb{B}$ and $\mathbb{B}$
pp-defines $\mathbb{A}$, we say that $\mathbb{A}$ and $\mathbb{B}$ are
\emph{pp-equivalent}. In this case, $\CSP(\mathbb{A})$ and
$\CSP(\mathbb{B})$ are essentially the same problems (they are
logspace equivalent) and $\mathbb{A}$ and $\mathbb{B}$ have the same
polymorphisms.

\begin{exa}\label{example:1}
  Let $\mathbb{A}=(A;R_1,\dots,R_n)$, where each $R_i$ is $k_i$-ary,
  and define $R=R_1\times \dots \times R_n$. Then the structure
  $\mathbb{A}'=(A;R)$ is pp-equivalent to $\mathbb{A}$.  

  Indeed, let $k=\sum_{i=1}^nk_i$ be the arity of $R$ and
  $l_i=\sum_{j<i}k_j$ for $i=1,\dots,n$. The relation $R$ is
  pp-definable from $R_1,\dots,R_n$ using the formula
  \[
  \Psi(x_1,\dots,x_k)= \bigwedge_{i=1}^n
  (x_{l_i+1},\dots,x_{l_i+k_i})\in R_i.
  \]
  The relation $R_1$ can be defined from $R$ by the primitive positive
  formula
  \[
  \Psi(x_1,\dots,x_{k_1})=(\exists y_{k_1+1},\dots,\exists y_k)
  ((x_1,\dots,x_{k_1},y_{k_1+1},\dots,y_k)\in R)
  \]
  and the remaining $R_i$'s can be defined similarly.
\end{exa}
Example \ref{example:1} shows that when proving Theorem \ref{thm:mainresults} we can
restrict ourselves to relational structures with a single relation.

\subsection{Digraphs}
\noindent A \emph{directed graph}, or \emph{digraph}, is a relational
structure $\mathbb{G}$ with a single binary relation symbol $E$ as
its signature. We typically call the members of $G$ and $E^\mathbb{G}$
\emph{vertices} and \emph{edges}, respectively. We usually write $a\to
b$ to mean $(a,b)\in E^\mathbb{G}$, if there is no ambiguity.

A special case of relational structure homomorphism (see
Definition~\ref{def:hom}), is that of digraph homomorphism. That is,
given digraphs $\mathbb G$ and $\mathbb H$, a function $\varphi:G\to
H$ is a homomorphism if $(\varphi(a),\varphi(b))\in E^\mathbb{H}$
whenever $(a,b)\in E^\mathbb{G}$.

\begin{defi}
  For $i=1,\dots, n$, let $\mathbb{G}_i=(G_i,E_i)$ be digraphs. The
  \emph{direct product of} $\mathbb{G}_1,\dots,\mathbb{G}_n$, denoted
  by $\prod_{i=1}^n \mathbb{G}_i$, is the digraph with vertices
  $\prod_{i=1}^nG_i$ {\rm(}the cartesian product of the sets $G_i${\rm
    )} and edge relation
  \[
  \{(\mathbf{a},\mathbf{b})\in (\prod_{i=1}^nG_i)^2 \mid
  (a_i,b_i)\in E_i \text{ for } i=1\dots,n \}.
  \]
  If $\mathbb{G}_1=\dots = \mathbb{G}_n =\mathbb{G}$ then we write
  $\mathbb{G}^n$ to mean $\prod_{i=1}^n \mathbb{G}_i$.
\end{defi}

With the above definition in mind, an $n$-ary polymorphism on a
digraph $\mathbb{G}$ is simply a digraph homomorphism from
$\mathbb{G}^n$ to $\mathbb{G}$. 

\begin{defi}
  A digraph $\mathbb P$ is an \emph{oriented path} if it consists of a
  sequence of vertices $v_0,v_1,\dots,v_k$ such that for each $i=1,\dots,k$ precisely one of $(v_{i-1},v_i),(v_i,v_{i-1})$ is an edge, and there are no other edges. We require oriented paths to have a direction; we denote the
  \emph{initial} vertex $v_0$ and the \emph{terminal} vertex $v_k$ by
  $\iota\mathbb P$ and $\tau\mathbb P$, respectively.
\end{defi}

Given a digraph $\mathbb G$ and an oriented path $\mathbb P$, we write
$a\stackrel{\mathbb{P}}{\longrightarrow} b$ to mean that we can walk
in $\mathbb G$ from $a$ following $\mathbb{P}$ to $b$, i.e., there
exists a homomorphism $\varphi:\mathbb P\to\mathbb G$ such that
$\varphi(\iota\mathbb P)=a$ and $\varphi(\tau\mathbb P)=b$. Note that
for every $\mathbb P$ there exists a primitive positive formula
$\psi(x,y)$ such that $a\stackrel{\mathbb{P}}{\longrightarrow} b$ if
and only if $\psi(a,b)$ is true in $\mathbb{G}$. If there exists an
oriented path $\mathbb P$ such that
$a\stackrel{\mathbb{P}}{\longrightarrow} b$, we say that $a$ and $b$
are \emph{connected}. If vertices $a$ and $b$ are connected, then the
\emph{distance} from $a$ to $b$ is the number of edges in the shortest
oriented path connecting them. Connectedness forms an equivalence
relation on $G$; its classes are called the \emph{connected
  components} of $\mathbb G$. We say that a digraph is \emph{connected} if it
consists of a single connected component.\footnote{The notions of connectedness and distance are the same as in the undirected graph obtained by forgetting orientation of edges of $\mathbb G$.}

A connected digraph is \emph{balanced} if it admits a \emph{level
  function} $\lvl:G\to\mathbb N\cup\{0\}$, where
$\lvl(b)=\lvl(a)+1$ whenever $(a,b)$ is an edge, and
the minimum level is~$0$. The maximum level is called the
\emph{height} of the digraph. Oriented paths are natural examples of
balanced digraphs.

By a \emph{zigzag} we mean the oriented path
\mbox{$\bullet\boldsymbol\rightarrow\bullet\boldsymbol
  \leftarrow\bullet\boldsymbol\rightarrow\bullet$} and a \emph{single
  edge} is the path \mbox{$\bullet\boldsymbol\rightarrow\bullet$}. For
oriented paths $\mathbb P$ and $\mathbb P'$, the \emph{concatenation
  of $\mathbb P$ and $\mathbb P'$}, denoted by $\mathbb
P\dotplus\mathbb P'$, is the oriented path obtained by identifying
$\tau\mathbb P$ with $\iota\mathbb P'$.

Our digraph reduction as described in Section~\ref{sec:reduction}
relies on oriented paths obtained by concatenation of zigzags and
single edges. For example, the path in Figure~\ref{fig:path} is a
concatenation of a single edge followed by two zigzags and two more
single edges (for clarity, we organize its vertices into levels).
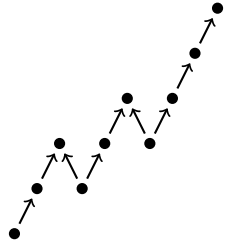
\begin{figure}[h]
\[
\begin{tikzpicture}[thick, on grid, node distance=.6cm and
  .3cm,dot/.style={circle,draw,outer sep=2.5pt,inner sep=0pt,minimum
    size=1.2mm,fill}]
\node [dot] (a) at (0,0) {};
\node [dot,above right=of a] (b)  {};
\node [dot,above right=of b] (c)  {};
\node [dot,below right=of c] (d)  {};
\node [dot,above right=of d] (e)  {};
\node [dot,above right=of e] (f)  {};
\node [dot,below right=of f] (g)  {};
\node [dot,above right=of g] (h)  {};
\node [dot,above right=of h] (i)  {};
\node [dot,above right=of i] (j)  {};
\draw[->] (a) -- (b);
\draw[->] (b) -- (c);
\draw[<-] (c) -- (d);
\draw[->] (d) -- (e);
\draw[->] (e) -- (f);
\draw[<-] (f) -- (g);
\draw[->] (g) -- (h);
\draw[->] (h) -- (i);
\draw[->] (i) -- (j);
\end{tikzpicture}
\]
\caption{A minimal oriented path}\label{fig:path}
\end{figure}

\section{The reduction to digraphs}\label{sec:reduction}
\noindent In this section we take an arbitrary finite relational
structure $\mathbb{A}$ and construct a balanced digraph $\mathcal
D(\mathbb A)$ such that $\CSP(\mathbb{A})$ and $\CSP(\mathcal
D(\mathbb A))$ are logspace equivalent.

Let $\mathbb A=(A;R_1,\dots,R_n)$ be a finite relational structure,
where $R_i$ is of arity~$k_i$, for $i=1,\dots,n$. Let $k=\sum_{i=1}^n
k_i$ and let $R$ be the $k$-ary relation $R_1\times\dots\times
R_n$. For $\mathcal I\subseteq [k]$ define $\mathbb Q_{\mathcal I,l}$ to be a
single
edge if $l\in\mathcal I$, and a zigzag if $l\in[k]\setminus\mathcal I$.

We define the oriented path
$\mathbb Q_\mathcal I$ (of height $k+2$) by
\[
\mathbb Q_\mathcal I\,=\, \mbox{$\bullet\boldsymbol\rightarrow\bullet$}
\dotplus\,\mathbb Q_{\mathcal I,1}\,\dotplus\,\mathbb
Q_{\mathcal I,2}\,\dotplus\,\dots\,\dotplus\,\mathbb Q_{\mathcal
I,k}\,\dotplus\,
\mbox{$\bullet\boldsymbol\rightarrow\bullet$}
\]
Instead of $\mathbb Q_\emptyset,\mathbb Q_{\emptyset,l}$ we write just
$\mathbb Q,\mathbb Q_l$, respectively. For example, the oriented path
in Figure~\ref{fig:path} is $\mathbb Q_{\mathcal I}$ where $k=3$ and
$\mathcal I=\{3\}$. We will need the following observation.

\begin{obs}
  Let $\mathcal I,\mathcal J\subseteq [k]$. A homomorphism $\varphi:\mathbb
  Q_\mathcal I\to\mathbb Q_\mathcal J$ exists, if and only if $\mathcal
I\subseteq\mathcal J$. In
  particular $\mathbb Q\to \mathbb Q_\mathcal I$ for all $\mathcal I\subseteq
[k]$.
  Moreover, if $\varphi$ exists, it is unique and surjective.
\end{obs}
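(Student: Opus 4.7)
The plan is to exploit the fact that both $\mathbb Q_\mathcal I$ and $\mathbb Q_\mathcal J$ are oriented paths of the same height $k+2$, and derive the uniqueness and level-structure of $\varphi$ from their rigid backbone. First I would note that since each is a connected balanced digraph of height $k+2$ with the initial vertex at level $0$ and the terminal vertex at level $k+2$, any homomorphism $\varphi\colon \mathbb Q_\mathcal I\to \mathbb Q_\mathcal J$ preserves the level function: a standard connectedness argument shows that any homomorphism between connected balanced digraphs shifts levels by a constant, and the matching heights force that constant to be~$0$.

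Write $a_0,a_1,\dots,a_{k+2}$ for the ``spine'' of $\mathbb Q_\mathcal I$ (the initial vertex, the endpoints of successive gadgets, and the terminal vertex) and $b_l,c_l$ for the two internal vertices of the zigzag at position $l\notin\mathcal I$, where $b_l$ lies at level $l+1$ and $c_l$ at level $l$; use the same notation with superscript $\mathcal J$ for $\mathbb Q_\mathcal J$. The combinatorial fact driving the proof is that, in either oriented path, each internal vertex $b_l$ is a sink (out-degree $0$) and each $c_l$ is a source (in-degree $0$), while every interior spine vertex has both in- and out-degree equal to~$1$.

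I would then prove by induction on $l$ that $\varphi(a_l^\mathcal I)=a_l^\mathcal J$. The case $l=0$ is immediate from level preservation since $a_0^\mathcal J$ is the only level-$0$ vertex. For the inductive step there are four subcases, according to the membership of $l$ in $\mathcal I$ and in $\mathcal J$. The case $l\in \mathcal I\cap\mathcal J$ is immediate from the unique out-neighbor of $a_l^\mathcal J$. The case $l\in \mathcal I\setminus\mathcal J$ forces $\varphi(a_{l+1}^\mathcal I)$ to be the sink $b_l^\mathcal J$, which contradicts the fact that $a_{l+1}^\mathcal I$ has an outgoing edge; this delivers the ``only if'' direction. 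The two zigzag-in-$\mathbb Q_\mathcal I$ cases (i.e.\ $l\notin\mathcal I$) are handled by determining $\varphi(b_l^\mathcal I)$ from the out-neighbors of $a_l^\mathcal J$, then $\varphi(c_l^\mathcal I)$ from the in-neighbors of $\varphi(b_l^\mathcal I)$, and finally $\varphi(a_{l+1}^\mathcal I)$ — with the only possible branching ruled out by the same sink-versus-out-edge obstruction. In particular each image is forced, so $\varphi$ (if it exists) is unique.

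If $\mathcal I\subseteq \mathcal J$, the recipe extracted from the induction defines a map on all vertices of $\mathbb Q_\mathcal I$ — identical on spines, and on each zigzag either collapsing it onto the corresponding single edge ($l\in\mathcal J\setminus\mathcal I$) or mapping it isomorphically onto the corresponding zigzag ($l\notin\mathcal J$). Checking that every edge is preserved is a routine gadget-by-gadget verification. Surjectivity is then clear: every spine vertex of $\mathbb Q_\mathcal J$ is hit by its counterpart, and for $l\notin\mathcal J$ we have $l\notin\mathcal I$, so the zigzag vertices $b_l^\mathcal J,c_l^\mathcal J$ are images of $b_l^\mathcal I,c_l^\mathcal I$. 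The final assertion $\mathbb Q\to \mathbb Q_\mathcal I$ is the special case $\emptyset\subseteq\mathcal I$. The main obstacle is packaging the four-way case split so that the sink/source structure of the zigzag internals is invoked uniformly; once that bookkeeping is in place, each individual step is a short combinatorial check.
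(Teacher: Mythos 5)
The paper leaves this Observation unproved, so there is no proof to compare against; on its own merits your argument is correct and fills that gap. The three ingredients you use — (a) level preservation forced by connectedness plus the equal heights, (b) forcing the spine vertices $a_l$ (and the internal zigzag vertices $b_l,c_l$) by induction along the path, and (c) the sink/source obstruction (``$b_l^{\mathcal J}$ is a sink but $a_{l+1}^{\mathcal I}$ has an out-edge'') both to kill the case $l\in\mathcal I\setminus\mathcal J$ and to resolve the branching when a zigzag is mapped to a zigzag — are exactly what is needed, and the explicit construction of $\varphi$ when $\mathcal I\subseteq\mathcal J$ (identity on the spine, collapse the zigzag when $l\in\mathcal J\setminus\mathcal I$, isomorphism of zigzags when $l\notin\mathcal J$) together with its evident surjectivity handles the other direction. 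One small point of bookkeeping worth making explicit when you write this up: in the zigzag-to-zigzag subcase there are \emph{two} forced choices, not one (first $\varphi(c_l^{\mathcal I})\in\{a_l^{\mathcal J},c_l^{\mathcal J}\}$, then $\varphi(a_{l+1}^{\mathcal I})\in\{b_l^{\mathcal J},a_{l+1}^{\mathcal J}\}$), and both are resolved by the same observation that $a_{l+1}^{\mathcal I}$ has an outgoing edge while $b_l^{\mathcal J}$ does not — your phrase ``the only possible branching'' slightly understates this, but the argument you invoke does handle both.
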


We are now ready to define the digraph $\mathcal D(\mathbb A)$.
\begin{defi}
  For every $e=(a,\mathbf r)\in A\times R$ we define $\mathbb{P}_e$ to
  be the path $\mathbb Q_{\{i\,\mid\,a=r_i\}}$.  The digraph $\mathcal
  D(\mathbb A)$ is obtained from the digraph $(A\cup R;A\times R)$ by
  replacing every $e=(a,\mathbf r)\in A\times R$ by the oriented path
  $\mathbb{P}_e$ \up(identifying $\iota\mathbb P_e$ with $a$ and
  $\tau\mathbb P_e$ with $\mathbf r$\up).
\end{defi}
(We often write $\mathbb P_{e,l}$ to mean $\mathbb Q_{\mathcal{I},l}$
where $\mathbb P_e=\mathbb Q_\mathcal{I}$.)

\begin{exa}
  Consider the relational structure $\mathbb A=(\{ 0,1\}; R)$ where
  $R=\{ (0,1),(1,0)\}$, i.e., $\mathbb A$ is the directed $2$-cycle. Figure~\ref{fig:cycle} is a visual representation of
  $\mathcal D(\mathbb A)$.
\begin{figure}[h]
\[
\begin{tikzpicture}[thick, on grid, node distance=.8cm and
  .4cm,
  dot/.style={circle,draw,outer sep=2.5pt,inner sep=0pt,minimum
size=1.2mm,fill},
  emptydot/.style={circle,draw,outer sep=2.5pt,inner sep=0pt,minimum
size=1.2mm,stroke},
  dot2/.style={circle,draw,outer sep=2.5pt,inner sep=0pt,minimum
size=1.2mm,fill,color=gray},
  emptydot2/.style={circle,draw,outer sep=2.5pt,inner sep=0pt,minimum
size=1.2mm,stroke,color=gray}
]
\node [dot] (a) at (-.6,0) {};
\node [below=3mm of a]{$0$};
\node [emptydot,above right=of a] (b)  {};
\node [emptydot,above right=of b] (e)  {};
\node [emptydot,above right=of e] (f)  {};
\node [emptydot,below right=of f] (g)  {};
\node [emptydot,above right=of g] (h)  {};
\node [dot,above right=of h] (i)  {};
\node [above=3mm of i]{(0,1)};

\draw[->] (a) -- (b);
\draw[->] (b) -- (e);
\draw[->] (e) -- (f);
\draw[<-] (f) -- (g);
\draw[->] (g) -- (h);
\draw[->] (h) -- (i);

\node [dot] (a2) at (.6,0) {};
\node [below=3mm of a2]{$1$};
\node [emptydot2,above left=of a2] (b2)  {};
\node [emptydot2,above left=of b2] (e2)  {};
\node [emptydot2,above left=of e2] (f2)  {};
\node [emptydot2,below left=of f2] (g2)  {};
\node [emptydot2,above left=of g2] (h2)  {};
\node [dot,above left=of h2] (i2)  {};
\node [above=3mm of i2]{(1,0)};

\draw[->,color=gray] (a2) -- (b2);
\draw[->,color=gray] (b2) -- (e2);
\draw[->,color=gray] (e2) -- (f2);
\draw[<-,color=gray] (f2) -- (g2);
\draw[->,color=gray] (g2) -- (h2);
\draw[->,color=gray] (h2) -- (i2);

\node [emptydot2,above right=of a2] (b2)  {};
\node [emptydot2,above right=of b2] (e2)  {};
\node [emptydot2,below right=of e2] (f2)  {};
\node [emptydot2,above right=of f2] (g2)  {};
\node [emptydot2,above =of g2] (h2)  {};

\draw[->,color=gray] (a2) -- (b2);
\draw[->,color=gray] (b2) -- (e2);
\draw[->,color=gray] (f2) -- (e2);
\draw[<-,color=gray] (g2) -- (f2);
\draw[->,color=gray] (g2) -- (h2);
\draw[->,color=gray] (h2) -- (i);

\node [emptydot,above left=of a] (b2)  {};
\node [emptydot,above left=of b2] (e2)  {};
\node [emptydot,below left=of e2] (f2)  {};
\node [emptydot,above left=of f2] (g2)  {};
\node [emptydot,above =of g2] (h2)  {};

 \draw[->] (a) -- (b2);
\draw[->] (b2) -- (e2);
\draw[->] (f2) -- (e2);
\draw[<-] (g2) -- (f2);
\draw[->] (g2) -- (h2);
\draw[->] (h2) -- (i2);

\end{tikzpicture}
\]
\caption{$\mathcal D(\mathbb A)$ where $\mathbb A$ is the directed
  $2$-cycle}\label{fig:cycle}
\end{figure}
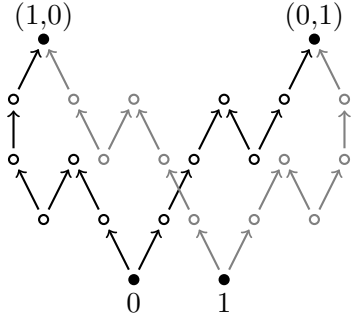
\end{exa}

\begin{prop}\label{rem:number}
  The number of vertices in $\mathcal D(\mathbb A)$ is
  $(3k+1)|R||A|+(1-2k)|R|+|A|$, the number of edges is
  $(3k+2)|R||A|-2k|R|$ and the height is $k+2$. The construction of
  $\mathcal D(\mathbb A)$ can be performed in logspace \up(under any
  reasonable encoding\up).
\end{prop}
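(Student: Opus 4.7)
The plan is to compute, for each replacement of an edge $(a,\mathbf r) \in A \times R$ by the path $\mathbb P_e$, the number of vertices and edges contributed, and then sum over all such pairs. First I would observe that a single edge contributes $2$ vertices and $1$ edge, while a zigzag contributes $4$ vertices and $3$ edges, and both have height $1$. The path $\mathbb Q_\mathcal I$ is a concatenation of $k+2$ such pieces: two end single edges plus the $k$ middle pieces $\mathbb Q_{\mathcal I,l}$, of which $|\mathcal I|$ are single edges and $k-|\mathcal I|$ are zigzags. The $k+1$ identifications in the concatenation each merge one pair of vertices, so $\mathbb Q_\mathcal I$ has
$$
2(|\mathcal I|+2) + 4(k-|\mathcal I|) - (k+1) \;=\; 3k + 3 - 2|\mathcal I|
$$
vertices and $(|\mathcal I|+2) + 3(k-|\mathcal I|) = 3k + 2 - 2|\mathcal I|$ edges, with height $k+2$ independently of $|\mathcal I|$.

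When assembling $\mathcal D(\mathbb A)$, each path $\mathbb P_e = \mathbb Q_{\mathcal I_e}$ contributes $3k + 1 - 2j_e$ new internal vertices (excluding its two endpoints, which are identified with $a$ and $\mathbf r$) and $3k + 2 - 2j_e$ edges, where $j_e = |\mathcal I_e| = |\{i : a = r_i\}|$. Hence the total vertex count of $\mathcal D(\mathbb A)$ is $|A| + |R| + \sum_{e}(3k+1-2j_e)$ and the total edge count is $\sum_{e}(3k+2-2j_e)$. The one nontrivial arithmetic step is to evaluate $\sum_{(a,\mathbf r) \in A \times R} j_{(a,\mathbf r)}$; by swapping the order of summation and fixing $\mathbf r \in R$ and $i \in [k]$ (for which exactly one $a \in A$ satisfies $a = r_i$), this sum equals $k|R|$. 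Substituting yields the stated formulas. The height assertion then follows because every $\mathbb P_e$ has height $k+2$, so the level function $\lvl(a)=0$ for $a \in A$ and $\lvl(\mathbf r)=k+2$ for $\mathbf r \in R$ extends consistently across all glued paths.

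For the logspace construction, the plan is to describe an explicit logspace transducer. I would name the internal vertices of $\mathbb P_e$ by triples $(e, l, p)$, where $l$ indexes the constituent piece within $\mathbb P_e$ and $p$ is a position within that piece; under any reasonable encoding such triples are representable in $O(\log(|A||R|k))$ bits. The transducer iterates over pairs $(a, \mathbf r) \in A \times R$ and over the position parameters; for each $l \in [k]$ it performs the single comparison $a = r_l$ to decide whether $\mathbb Q_{\mathcal I_e, l}$ is a single edge or a zigzag and outputs the corresponding internal vertices and incident edges. Only a constant number of counters are maintained, so the workspace is $O(\log)$ in the input size. I do not expect any step to pose a serious obstacle: the combinatorial counts reduce to the clean double-counting identity $\sum_{(a,\mathbf r)} j_{(a,\mathbf r)} = k|R|$, and the logspace implementation is essentially mechanical once a naming scheme for internal vertices is fixed.
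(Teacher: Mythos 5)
Your proof is correct and takes essentially the same route as the paper: count the vertices interior to each connecting path, account for the difference between single edges and zigzags, and reduce to the double-counting identity $\sum_{(a,\mathbf r)\in A\times R}|\{i : a=r_i\}|=k|R|$. The paper phrases it as ``count as if every $\mathbb P_e$ were $\mathbb Q$, then deduct $2$ per single-edge'' rather than summing per path, and it omits the logspace details, but the underlying computation is identical.
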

\begin{proof} The vertices of $\mathcal D(\mathbb A)$ consist of the
  elements of $A\cup R$, along with vertices from the connecting
  paths.  The number of vertices lying strictly within the connecting
  paths would be $(3k+1)|R||A|$ if every $\mathbb P_e$ was $\mathbb
  Q$. We need to deduct~$2$ vertices whenever there is a single edge
  instead of a zigzag and there are $\sum_{(a,\mathbf r)\in A\times
    R}|\{i\,\mid\,a=r_i\}|=k|R|$ such instances.  The number of edges is
  counted very similarly.
\end{proof}

\begin{rem}
  As $\mathcal{D}(\mathbb A)$ is always a digraph, the construction
  $\mathcal D(\mathcal D(\mathbb A))$ is  digraph of height 4 and also relates to $\mathbb A$ on all of the conditions described in Theorem \ref{thm:mainresults}.   When applied to digraphs, the $\mathcal{D}$
  construction is identical to that given by Feder and
  Vardi~\cite[Theorem 13]{fedvar}.
\end{rem}
The next lemma, together with Lemma~\ref{lem:logspace_reduction},
shows that $\CSP(\mathbb{A})$ reduces to $\CSP(\mathcal D(\mathbb A))$
in logspace.
\begin{lem}\label{lem:forwardreduction}
$\mathbb{A}$ is pp-definable from $\mathcal D(\mathbb A)$.
\end{lem}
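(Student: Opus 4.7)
The plan is to reduce to the single-relation case and then write down two explicit primitive positive formulas over $\mathcal D(\mathbb A)$: one defining the domain $A$, and one defining the relation $R$ on $A$.

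By Example~\ref{example:1}, $\mathbb A$ is pp-equivalent to the single-relation structure $(A;R_1\times\dots\times R_n)$, so without loss of generality I may assume $\mathbb A=(A;R)$ where $R$ is $k$-ary. For the domain, I would take $\varphi_A(x)\equiv\exists y\ x\stackrel{\mathbb Q}{\longrightarrow}y$, which is primitive positive (as noted before the definition of $\mathcal D(\mathbb A)$, every walk pattern unfolds into a pp-formula in the edge relation). Since $\mathcal D(\mathbb A)$ is balanced of height $k+2$ and $\mathbb Q$ itself has height $k+2$, the level function forces any vertex admitting such an outgoing walk to sit at level $0$; these are exactly the vertices of $A$, and each $a\in A$ indeed witnesses $\varphi_A$ via any path $\mathbb P_{(a,\mathbf r)}$, using the observation that $\mathbb Q\to\mathbb Q_{\mathcal I}$ for every $\mathcal I\subseteq[k]$.

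For the relation $R$, I would use the pp-formula
\[
\varphi_R(x_1,\dots,x_k)\,\equiv\,\exists y\bigwedge_{i=1}^{k} x_i\stackrel{\mathbb Q_{\{i\}}}{\longrightarrow}y.
\]
Correctness on $A$-tuples reduces to the following key claim: for $a\in A$ and any vertex $z$ of $\mathcal D(\mathbb A)$, a $\mathbb Q_{\{i\}}$-walk from $a$ to $z$ exists if and only if $z=\mathbf r$ for some $\mathbf r\in R$ with $a=r_i$. Assuming the claim, $\varphi_R(a_1,\dots,a_k)$ for $a_i\in A$ holds iff a common $\mathbf r\in R$ satisfies $a_i=r_i$ for every $i$, i.e., iff $(a_1,\dots,a_k)\in R$.

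The hard part is proving the claim, which is a confinement argument. The balanced level function immediately forces $z$ to lie at level $k+2$, so $z=\mathbf r\in R$. Moreover, every internal vertex of a $\mathbb Q_{\{i\}}$-walk based at $a$ is at a level in $\{1,\dots,k+1\}$ and therefore lies strictly inside the interior of some connecting path $\mathbb P_e$. Since distinct connecting paths $\mathbb P_e$ share only their $A\cup R$-endpoints, the walk cannot jump between paths once it leaves $a$ and is therefore forced onto a single $\mathbb P_{(a,\mathbf r)}=\mathbb Q_{\{j\,\mid\,a=r_j\}}$. The existence of an endpoint-preserving homomorphism $\mathbb Q_{\{i\}}\to\mathbb Q_{\{j\mid a=r_j\}}$ is then, by the observation preceding the definition of $\mathcal D(\mathbb A)$, equivalent to $\{i\}\subseteq\{j\mid a=r_j\}$, i.e., to $a=r_i$. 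This confinement step is the only nontrivial part of the proof; everything else is bookkeeping.
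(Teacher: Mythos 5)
Your proof is correct and follows essentially the same approach as the paper: you use exactly the same pp-formulas for $A$ and $R$, namely $\exists y\,(x\stackrel{\mathbb Q}{\longrightarrow}y)$ and $\exists y\bigwedge_i x_i\stackrel{\mathbb Q_{\{i\}}}{\longrightarrow}y$. The paper simply cites the facts $\mathbb Q\to\mathbb P_e$ and ``$\mathbb Q_{\{i\}}\to\mathbb P_{(a,\mathbf r)}$ iff $a=r_i$'' without proof, whereas you supply the level/confinement argument justifying them.
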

\begin{proof}
  Example~\ref{example:1} demonstrates that $\mathbb A$ is
  pp-equivalent to $(A;R)$. We now show that $\mathcal D(\mathbb A)$
  pp-defines $(A;R)$, from which it follows that $\mathcal D(\mathbb
  A)$ pp-defines $\mathbb{A}$.

  Note that $\mathbb Q\to\mathbb P_e$ for all $e\in A\times R$, and
  $\mathbb Q_{\{i\}}\to\mathbb P_{(a,\mathbf r)}$ if and only if
  $a=r_i$.  The set $A$ is pp-definable in $\mathcal D(\mathbb A)$ by
  $A=\{x\mid (\exists y)(x\stackrel{\mathbb Q}{\longrightarrow} y)\}$
  and the relation~$R$ can be defined as the set
  $\{(x_1,\dots,x_k)\mid (\exists y)(x_i\stackrel{\mathbb
    Q_{\{i\}}}{\longrightarrow}y \text{ for all }i\in[k])\}$, which is
  also a primitive positive definition.
\end{proof}

It is not, in general, possible to pp-define $\mathcal D(\mathbb A)$
from $\mathbb{A}$.\footnote[1]{Using the definition of pp-definability
  as described in this paper, this is true for cardinality
  reasons. However, a result of Kazda \cite{kaz} can be used to show
  that the statement remains true even for more general definitions of
  pp-definability.}  Nonetheless the following lemma is true.

\begin{lem}\label{lem:reversereduction} 
  $\CSP(\mathcal D(\mathbb A))$ reduces in logspace to
  $\CSP(\mathbb{A})$.
\end{lem}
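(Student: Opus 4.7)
I plan to construct in logspace, from a digraph instance $\mathbb X$ of $\CSP(\mathcal D(\mathbb A))$, an $\mathcal R$-structure $\mathbb X^{*}$ such that $\mathbb X \to \mathcal D(\mathbb A)$ if and only if $\mathbb X^{*} \to \mathbb A$. By Example~\ref{example:1} we may assume $\mathbb A = (A; R)$ with a single $k$-ary relation $R = R_1 \times \dots \times R_n$. The construction is guided by the balanced level structure of $\mathcal D(\mathbb A)$: levels $0$ and $k+2$ host exactly the sets $A$ and $R$, respectively, while intermediate levels consist of path vertices internal to the $\mathbb P_e$.

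First, using Reingold's logspace algorithm for undirected $st$-connectivity, I would verify that each connected component of $\mathbb X$ is balanced and has height at most $k+2$, and compute a canonical level function on each (with minimum value $0$). If this check fails, output a fixed no-instance. Then, for each component of height exactly $k+2$, let $U$ and $V$ be its sets of level-$0$ and level-$(k+2)$ vertices. Build $\mathbb X^{*}$ with variables $x_u$ for $u \in U$ and $y_{v,1}, \dots, y_{v,k}$ for $v \in V$; impose the constraint $(y_{v,1}, \dots, y_{v,k}) \in R$ for each $v \in V$; and for each pair $(u, v) \in U \times V$ and each $\mathcal J \subseteq [k]$ such that $\mathbb X$ contains a $\mathbb Q_\mathcal J$-shaped path from $u$ to $v$, identify $x_u$ with $y_{v,j}$ for all $j \in \mathcal J$. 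Since $k$ depends only on the fixed template $\mathbb A$ and path lengths are bounded by the constant $3k+2$, the enumeration of shapes and the detection of such paths can be carried out in logspace, and $\mathbb X^{*}$ has polynomial size.

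The forward direction of correctness follows from the Observation: if $\varphi : \mathbb X \to \mathcal D(\mathbb A)$ sends $u \mapsto a$ and $v \mapsto \mathbf r$, then a $\mathbb Q_\mathcal J$-path from $u$ to $v$ in $\mathbb X$ gives a homomorphism $\mathbb Q_\mathcal J \to \mathbb P_{(a, \mathbf r)} = \mathbb Q_{\{i\,:\,a = r_i\}}$, forcing $a = r_j$ for all $j \in \mathcal J$. For the reverse direction, given a satisfying assignment of $\mathbb X^{*}$, set $\varphi(u) = x_u$ and $\varphi(v) = (y_{v,1}, \dots, y_{v,k})$ and extend $\varphi$ to each intermediate vertex via the canonical homomorphism from the path in $\mathbb X$ to $\mathbb P_e$ in $\mathcal D(\mathbb A)$ where $e = (\varphi(u), \varphi(v))$. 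The \textbf{main obstacle} is two-fold: first, ensuring global consistency when intermediate vertices of $\mathbb X$ lie on several paths simultaneously, so that the canonical extensions agree at shared vertices; second, handling components of height strictly less than $k+2$, whose mappability to $\mathcal D(\mathbb A)$ is non-trivial -- I expect these to be absorbed either by noting that $\mathcal D(\mathbb A)$ contains abundant $\mathbb Q$-sub-paths into which short balanced components can always be embedded, or by a padding step in preprocessing that extends each short component to height $k+2$ without altering its mappability.
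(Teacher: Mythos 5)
Your proposal correctly identifies the basic shape of the paper's argument: verify balancedness and height, associate variables of an $\mathcal R$-structure to level-$0$ vertices and hyperedges to level-$(k+2)$ vertices, use Reingold's logspace reachability to carry out the verifications, and then argue correctness in both directions. You also correctly flag two genuine difficulties. However the proposal has gaps substantial enough that it does not yet constitute a proof; they are exactly in the places you flag, and the paper's construction is considerably more involved precisely to deal with them.

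The central notion in the paper is not the existence of a $\mathbb Q_{\mathcal J}$-shaped path from $u$ to $v$, but the \emph{internal component}: a connected component of the graph obtained by deleting all level-$0$ and level-$n$ vertices, together with the invariant $\Gamma(C)$, the least $S$ with $C$ interpretable in $\mathbb Q_S$ at the appropriate height. Several things go wrong if you try to work with $\mathbb Q_{\mathcal J}$-shaped paths instead. First, an internal component $C$ can be a complicated balanced digraph, not a path, and a homomorphism $\mathbb X\to\mathcal D(\mathbb A)$ must squeeze the \emph{entire} component $C$ into a single connecting path $\mathbb Q_S$; detecting which positions are forced (that is, computing $\Gamma(C)$) requires a per-position interpretability check (Lemma~\ref{lem:QS} and Lemma~\ref{lem:pathassign}) rather than enumeration of path-shaped subgraphs. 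Second, even when $\Gamma(C)=\varnothing$, all base vertices of $C$ must be identified and all top vertices of $C$ must be identified, because $\mathbb Q_S$ has only one initial and one terminal vertex; your identification rule only fires when some $j\in\mathcal J$ is detected, so it would fail to merge variables in exactly this case, making $\mathbb X^{*}$ too weak (this is what conditions (III) and (IV) in the paper's $\mathbb B'$ account for). Third, and most seriously, internal components with no top vertex are simply absent from your construction (no pair $(u,v)\in U\times V$ goes through them), so the constraint that they must embed in some $\mathbb Q_{\{i\,:\,a=r_i\}}$ is never imposed; the paper creates separate hyperedges (type (II) objects) together with fresh padding vertices $x_{C,b,i}$ for precisely this purpose. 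A symmetric issue arises for components with no base, handled by the vertices $x_{C,e,i}$. Without these you can produce a YES instance of $\CSP(\mathbb A)$ from a NO instance of $\CSP(\mathcal D(\mathbb A))$.

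Finally, the treatment of components of height strictly less than $k+2$ is left as a hope that they can be ``padded'' or ``absorbed''. Padding to height $k+2$ changes the set of possible targets (it forces the bottom and top to hit $A$ and $R$), so it is not obviously sound, and ``abundant $\mathbb Q$-subpaths'' is not a proof. The paper deals with this by a separate stage: such a component, if mappable at all, must map into a single connecting path $\mathbb Q_S$ or into a ``fan'' of such paths amalgamated at a vertex of $A$ or of $R$, and Lemma~\ref{lem:pathassign}(6) gives a direct logspace decision procedure for interpretability into any fixed fan. Since the number of such subgraphs of $\mathcal D(\mathbb A)$ is a constant depending only on $\mathbb A$, this stage outputs a fixed YES or NO instance. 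Your proposal needs this case handled explicitly rather than optimistically.
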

The proof of Lemma \ref{lem:reversereduction} is rather technical,
though broadly follows the polynomial process described in the proof
of \cite[Theorem 13]{fedvar} (as mentioned, our construction coincides
with theirs in the case of digraphs).  Details of the argument are
provided in Section~\ref{sec:reversereduction}.  

\section{Preserving cores}
\noindent In what follows, let $\mathbb A$ be a fixed finite
relational structure. Without loss of generality we may assume that
$\mathbb A=(A;R)$, where $R$ is a $k$-ary relation (see
Example~\ref{example:1}).

\begin{lem}\label{lem:end} 
  The endomorphisms of $\mathbb A$ and $\mathcal{D}(\mathbb A)$ are in
  one-to-one correspondence.
\end{lem}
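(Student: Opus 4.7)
The plan is to construct explicit mutually inverse maps between the endomorphisms of $\mathbb A$ and those of $\mathcal D(\mathbb A)$. Given $f\colon\mathbb A\to\mathbb A$, I would define an extension $\hat f$ on $\mathcal D(\mathbb A)$ by setting $\hat f(a)=f(a)$ for $a\in A$, $\hat f(\mathbf r)=f^{(k)}(\mathbf r)$ for $\mathbf r\in R$ (which lies in $R$ since $f$ preserves $R$), and on the internal vertices of each path $\mathbb P_{(a,\mathbf r)}=\mathbb Q_{\mathcal I}$ (with $\mathcal I=\{i\mid a=r_i\}$) by using the unique homomorphism into $\mathbb P_{(f(a),f^{(k)}(\mathbf r))}=\mathbb Q_{\mathcal J}$, $\mathcal J=\{i\mid f(a)=f(r_i)\}$. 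Such a homomorphism exists and is unique by the earlier observation, because the inclusion $\mathcal I\subseteq\mathcal J$ follows from $f$ being a function; thus $\hat f$ is a well-defined endomorphism of $\mathcal D(\mathbb A)$.

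For the reverse direction, given an endomorphism $g$ of $\mathcal D(\mathbb A)$, I would first show that $g$ preserves the level function. The digraph $\mathcal D(\mathbb A)$ is connected (every $a\in A$ is joined to every $\mathbf r\in R$ via $\mathbb P_{(a,\mathbf r)}$), so the quantity $\lvl(g(u))-\lvl(u)$ is constant; since $\lvl$ takes exactly the values $\{0,1,\dots,k+2\}$, this constant must be $0$. Consequently $g$ sends $A$ (the level-$0$ vertices) to $A$, $R$ (the level-$(k+2)$ vertices) to $R$, and internal vertices at each intermediate level to internal vertices at the same level. Set $f:=g|_A$; this is the only possible candidate for the endomorphism of $\mathbb A$ corresponding to $g$, and the remaining task is to verify that $f$ preserves $R$ and that $g$ coincides with $\hat f$ everywhere.

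The main obstacle is showing that $g$ restricted to each $\mathbb P_{(a,\mathbf r)}$ factors through a single path $\mathbb P_{(f(a),g(\mathbf r))}$ and that $g(\mathbf r)=f^{(k)}(\mathbf r)$. The key structural fact is that the paths $\mathbb P_e$ are internally vertex-disjoint and that each internal vertex has all of its neighbours inside its own path. After the first step the walk $g\circ\mathbb P_{(a,\mathbf r)}$ enters some $\mathbb P_{(f(a),\mathbf s')}$; the images of the remaining internal vertices lie at levels $1,\dots,k+1$ and hence cannot be endpoints, so they are forced to stay inside $\mathbb P_{(f(a),\mathbf s')}$, the walk exiting only at the terminal $\mathbf s'$ at level $k+2$. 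This yields $g(\mathbf r)=\mathbf s'$ and the desired factoring. Specialising to $a=r_j$ for each $j\in[k]$, the existence of a homomorphism $\mathbb Q_{\{i\mid r_j=r_i\}}\to\mathbb Q_{\{i\mid f(r_j)=g(\mathbf r)_i\}}$ together with $j\in\{i\mid r_j=r_i\}$ forces $g(\mathbf r)_j=f(r_j)$, so $g(\mathbf r)=f^{(k)}(\mathbf r)$. This simultaneously confirms that $f$ preserves $R$ and that $g$ agrees with $\hat f$ on $A\cup R$; the uniqueness clause of the observation then forces $g=\hat f$ on internal vertices as well, completing the bijection.
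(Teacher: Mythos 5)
Your proof is correct, and the forward direction (extending an endomorphism $f$ of $\mathbb A$ to $\hat f$ on $\mathcal D(\mathbb A)$, using the unique homomorphism $\mathbb P_{(a,\mathbf r)}\to\mathbb P_{(f(a),f^{(k)}(\mathbf r))}$ guaranteed by $\{i\mid a=r_i\}\subseteq\{i\mid f(a)=f(r_i)\}$) is exactly the paper's. In the converse direction you take a more elementary route. The paper obtains at once that $\Phi|_A$ is an endomorphism of $\mathbb A$ by combining Lemma~\ref{lem:forwardreduction} ($\mathbb A$ is pp-definable in $\mathcal D(\mathbb A)$) with Lemma~\ref{lem:ppdef_polymorphisms} (polymorphisms restrict to pp-definable substructures), and then uses the single-edge observation only to pin down $\Phi|_R=\varphi^{(k)}$. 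You avoid that machinery entirely: you first derive level-preservation from the fact that $\mathcal D(\mathbb A)$ is connected and balanced with levels spanning $\{0,\dots,k+2\}$, then argue that an endomorphism carries each connecting path into a single connecting path because the internal vertices have all their neighbours inside their own path, and finally extract \emph{both} the fact that $f=g|_A$ preserves $R$ \emph{and} the identity $g|_R=f^{(k)}$ from the observation that a homomorphism $\mathbb Q_\mathcal I\to\mathbb Q_\mathcal J$ forces $\mathcal I\subseteq\mathcal J$. Both proofs thus share the same combinatorial core, but yours is self-contained and spells out the structural facts (level-preservation, internal vertex-disjointness of the paths) that the paper invokes implicitly or delegates to earlier lemmas, while the paper's version is shorter by leaning on already-established pp-definability results.
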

\begin{proof} 
  We first show that every endomorphism $\varphi$ of $\mathbb A$ can
  be extended to an endomorphism $\overline{\varphi}$ of $\mathcal
  D(\mathbb A)$.  Let $\overline{\varphi}(a)=\varphi(a)$ for $a\in A$,
  and let $\overline{\varphi}(\mathbf r)=\varphi^{(k)}(\mathbf r)$ for
  $\mathbf r\in R$. Note that $\varphi^{(k)}(\mathbf r)\in R$ since
  $\varphi$ is an endomorphism of $\mathbb A$.

  Let $c\in\mathcal D(\mathbb A)\setminus(A\cup R)$ and let
  $e=(a,\mathbf r)$ be such that $c\in\mathbb P_e$. Define
  $e'=(\varphi(a),\varphi^{(k)}(\mathbf r))$. If $\mathbb P_{e,l}$ is
  a single edge for some $l\in[k]$, then $r_l=a$ and
  $\varphi(r_l)=\varphi(a)$, and therefore $\mathbb P_{e',l}$ is a
  single edge. Thus there exists a (unique) homomorphism $\mathbb
  P_e\to\mathbb P_{e'}$. Define $\overline{\varphi}(c)$ to be the
  image of $c$ under this homomorphism, completing the definition of
  $\overline{\varphi}$.

  We now show that every endomorphism $\Phi$ of $\mathcal{D}(\mathbb
  A)$ is of the form $\overline{\varphi}$, for some endomorphism
  $\varphi$ of $\mathbb A$.  Let $\Phi$ be an endomorphism of
  $\mathcal D(\mathbb A)$. Let $\varphi$ be the restriction of $\Phi$
  to $A$. By Lemma \ref{lem:ppdef_polymorphisms} and Lemma
  \ref{lem:forwardreduction}, $\varphi$ is an endomorphism of $\mathbb
  A$.  For every $e=(a,\mathbf r)$, the endomorphism $\Phi$ maps
  $\mathbb P_e$ onto $\mathbb P_{(\varphi(a),\Phi(\mathbf r))}$. If we
  set $a=r_l$, then $\mathbb P_{e,l}$ is a single edge. In this case
  it follows that $\mathbb P_{(\varphi(a),\Phi(\mathbf r)),l}$ is also
  a single edge. Thus, by the construction of $\mathcal D(\mathbb A)$
  the $l^\text{th}$ coordinate of $\Phi(\mathbf r)$ is
  $\varphi(a)=\varphi(r_l)$. This proves that the restriction of
  $\Phi$ to $R$ is $\varphi^{(k)}$ and therefore
  $\Phi=\overline{\varphi}$.
\end{proof}

The following corollary is Theorem~\ref{thm:mainresults} (ii).

\begin{cor}\label{cor:core} 
  $\mathbb{A}$ is a core if and only if $\mathcal D(\mathbb A)$ is a
  core.
\end{cor}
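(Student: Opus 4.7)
The plan is to deduce this directly from Lemma \ref{lem:end}, which establishes a bijection between $\mathrm{End}(\mathbb{A})$ and $\mathrm{End}(\mathcal D(\mathbb A))$ via $\varphi \mapsto \overline{\varphi}$. Since $\mathbb A$ and $\mathcal D(\mathbb A)$ are finite, being a core is equivalent to every endomorphism being surjective (hence an automorphism), so it suffices to check that $\varphi$ is surjective if and only if $\overline{\varphi}$ is.

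For the forward direction, assume $\mathbb A$ is a core and let $\Phi$ be any endomorphism of $\mathcal D(\mathbb A)$. By Lemma \ref{lem:end}, $\Phi = \overline{\varphi}$ for some endomorphism $\varphi$ of $\mathbb A$, which must be an automorphism. Then $\varphi$ permutes $A$ and $\varphi^{(k)}$ permutes $R$, so as $e = (a,\mathbf r)$ ranges over $A \times R$, the pair $e' = (\varphi(a), \varphi^{(k)}(\mathbf r))$ also ranges over $A \times R$. Moreover, $\{\,i : \varphi(a) = \varphi(r_i)\,\} = \{\,i : a = r_i\,\}$ because $\varphi$ is injective, so $\mathbb P_e$ and $\mathbb P_{e'}$ have the same shape, and the unique homomorphism $\mathbb P_e \to \mathbb P_{e'}$ used in the construction of $\overline{\varphi}$ is an isomorphism of oriented paths. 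Thus every interior vertex of every connecting path in $\mathcal D(\mathbb A)$ is hit, and $\overline{\varphi}$ is surjective.

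For the reverse direction, assume $\mathcal D(\mathbb A)$ is a core and let $\varphi$ be an endomorphism of $\mathbb A$. Its extension $\overline{\varphi}$ is an endomorphism of $\mathcal D(\mathbb A)$, hence surjective by hypothesis. Since $\mathcal D(\mathbb A)$ is balanced and endomorphisms of balanced digraphs preserve levels, and since $A$ coincides with the set of level-$0$ vertices of $\mathcal D(\mathbb A)$, the restriction $\overline{\varphi}\!\restriction_A = \varphi$ must map $A$ surjectively onto $A$. Hence $\varphi$ is an automorphism of $\mathbb A$, so $\mathbb A$ is a core.

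There is no real obstacle here beyond bookkeeping; the only subtle point is verifying that, in the forward direction, the path $\mathbb P_e$ and its image $\mathbb P_{e'}$ really do have the same pattern of single edges and zigzags, which comes for free from the injectivity of $\varphi$.
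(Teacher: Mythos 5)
Your proof is correct and follows essentially the same strategy as the paper: reduce to Lemma \ref{lem:end} and show that $\varphi$ is surjective if and only if $\overline{\varphi}$ is. The only cosmetic differences are that in the forward direction you argue $\mathbb P_e \cong \mathbb P_{e'}$ directly from injectivity of $\varphi$ (the paper instead uses $\overline{\varphi^{-1}}$ to exhibit the inverse homomorphism), and in the reverse direction you invoke level preservation (the paper treats this direction as immediate, implicitly using that $A$ is a subuniverse of $\mathcal D(\mathbb A)$ and that a surjective endomorphism of a finite structure is an automorphism).
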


\begin{proof}
  To prove the corollary we need to show that an endomorphism
  $\varphi$ of $\mathbb A$ is surjective if and only if
  $\overline{\varphi}$ (from Lemma~\ref{lem:end}) is
  surjective. Clearly, if $\overline{\varphi}$ is surjective then so
  is $\varphi$.

  Assume $\varphi$ is surjective (and therefore an automorphism of
  $\mathbb A$). It follows that $\varphi^{(k)}$ is surjective on $R$ and
therefore
  $\overline{\varphi}$ is a bijection when restricted to the set
  $A\cup R$.  Let $a\in A$ and $\mathbf r \in R$. By definition we
  know that $\overline{\varphi}$ maps $\mathbb{P}_{(a,\mathbf r)}$
  homomorphically onto $\mathbb{P}_{(\varphi(a),\varphi^{(k)}(\mathbf
    r))}$. Since $\varphi$ has an inverse $\varphi^{-1}$, it follows
  that $\overline{\varphi^{-1}}$ maps
  $\mathbb{P}_{(\varphi(a),\varphi^{(k)}(\mathbf r))}$ homomorphically
  onto $\mathbb{P}_{(a,\mathbf r)}$.  Thus $\mathbb{P}_{(a,\mathbf
    r)}$ and $\mathbb{P}_{(\varphi(a),\varphi^{(k)}(\mathbf r))}$ are
  isomorphic, completing the proof.
\end{proof}

Using similar arguments it is not hard to prove a bit more, namely that the \emph{monoids of endomorphisms} of $\mathbb A$ and 
$\mathcal D(\mathbb A)$ are isomorphic. Since endormorphisms are just the unary part
of the algebra of polymorphisms, this section can be viewed as a ``baby case''
to the more involved proof in the next section.

\section{Preserving Maltsev conditions}\label{sec:Maltsev}
\noindent Given a finite relational structure $\mathbb A$, we are
interested in the following question: How similar are the algebras of
polymorphisms of $\mathbb A$ and $\mathcal D(\mathbb A)$? More
precisely, which equational properties (or \emph{Maltsev conditions})
do they share? In this section we provide a quite broad range of
Maltsev conditions that hold equivalently in $\mathbb A$ and $\mathcal
D(\mathbb A)$. Indeed, to date, these include all Maltsev conditions
that are conjectured to divide differing levels of tractability and
hardness, as well as all the main tractable algorithmic classes
(e.g. few subpowers and bounded width).

\subsection{The result}
\noindent We start by an overview and statement of the main result of this
section. Since $\mathbb A$ is pp-definable from the digraph $\mathcal D
(\mathbb A)$ (see Lemma \ref{lem:forwardreduction}), it follows that
$A$ and $R$ are subuniverses of $\mathcal D(\mathbb A)$ and for any
$f\in\Pol\mathcal D (\mathbb A)$, the restriction $f|_A$ is a
polymorphism of $\mathbb A$. Consequently, for any set of identities
$\Sigma$,
$$
\mathcal D (\mathbb A)\models\Sigma\text{ implies that }\mathbb A\models\Sigma.
$$
The theorem below, which is a restatement of Theorem \ref{thm:mainresults} (iii), provides 
a partial converse of the above implication.

\begin{thm}\label{thm:preserved_conditions}
  Let $\mathbb A$ be a finite relational structure. Let $\Sigma$ be a
  linear idempotent set of identities such that the algebra of
  polymorphisms of the zigzag satisfies $\Sigma$ and each identity in
  $\Sigma$ is either balanced or contains at most two variables. Then
$$
\mathcal D(\mathbb A)\models\Sigma\text{ if and only if }\mathbb A\models\Sigma.
$$
\end{thm}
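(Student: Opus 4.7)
I would prove the two directions separately, with the easy one being $\mathcal D(\mathbb A)\models\Sigma\Rightarrow\mathbb A\models\Sigma$: Lemma~\ref{lem:forwardreduction} gives that $\mathbb A$ is pp-definable from $\mathcal D(\mathbb A)$, and Lemma~\ref{lem:ppdef_polymorphisms} together with the remark following it then yield the implication (polymorphisms of $\mathcal D(\mathbb A)$ restrict to polymorphisms of $\mathbb A$ on the subuniverse~$A$, preserving satisfaction of identities). The content of the theorem is the reverse direction, and by Example~\ref{example:1} I may assume $\mathbb A=(A;R)$ with $R$ a single $k$-ary relation.

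For this direction, assume $\mathbb A\models\Sigma$ witnessed by polymorphisms $\{f^{\mathbb A}_\lambda\}$, and that the zigzag satisfies $\Sigma$ witnessed by polymorphisms $\{g_\lambda\}$. For each operation symbol in~$\Sigma$ (writing $f=f^{\mathbb A}_\lambda$, $g=g_\lambda$, of arity~$n$), I would build a polymorphism $\overline f$ of $\mathcal D(\mathbb A)$ by gluing $f$ on the ``base'' vertices $A\cup R$ with $g$ on the interior vertices of the connecting paths. Explicitly, set $\overline f|_{A^n}=f$ and $\overline f|_{R^n}=f^{(k)}$, the latter being well-defined into $R$ because $f$ is a polymorphism. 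For an interior $n$-tuple $(c_1,\dots,c_n)$ with $c_i\in\mathbb P_{e_i}$ and $e_i=(a_i,\mathbf r_i)$, the target $e=(f(a_1,\dots,a_n),f^{(k)}(\mathbf r_1,\dots,\mathbf r_n))$ lies in $A\times R$, and I would place $\overline f(c_1,\dots,c_n)$ on $\mathbb P_e$ at a position obtained from the positions of the $c_i$ by lifting~$g$ along the unique surjective homomorphisms $\mathbb Q\to\mathbb P_{e'}$ from the Observation. Concretely, extend $g$ piecewise to a polymorphism $\overline g$ of $\mathbb Q$ (applying $g$ on each zigzag piece of $\mathbb Q$ and extending idempotently over the initial and final single edges, which is permitted since $\Sigma$ is idempotent), compute $\overline g$ on $\mathbb Q$-preimages of the $c_i$, and project to $\mathbb P_e$ via the canonical map $\mathbb Q\to\mathbb P_e$. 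Mixed tuples involving endpoint coordinates are handled by regarding each endpoint as a canonical position in any containing path.

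The proof then reduces to three verifications: (a)~$\overline f$ is well-defined; (b)~$\overline f$ preserves the edge relation of $\mathcal D(\mathbb A)$; and (c)~the family $\{\overline f_\lambda\}$ jointly satisfies each identity of~$\Sigma$. Step~(a) is a consistency check exploiting that $\mathbb P_{e_i}$ has a collapsed (single-edge) piece at level~$l$ exactly when $a_i=r_{i,l}$, and such collapsing transfers to the target edge~$e$ whenever it occurs across all coordinates. Step~(b) reduces to the edge-preservation properties of~$f$ and~$g$, since every edge of $\mathcal D(\mathbb A)$ lies inside a single path and both $\overline g$ and the canonical $\mathbb Q\to\mathbb P_e$ are homomorphisms.

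The main obstacle will be step~(c). On the base elements any identity of $\Sigma$ holds because $\{f^{\mathbb A}_\lambda\}$ witnesses $\mathbb A\models\Sigma$; on the interior the corresponding identity holds in $\mathbb Q$-coordinates because $\{g_\lambda\}$ witnesses the zigzag satisfying $\Sigma$. These two ``witnesses'' must agree on mixed and interior tuples, and the hypothesis on~$\Sigma$ is precisely what makes this gluing work. For a balanced identity, linearity together with the equality of the variable-sets of the two sides force both sides of the identity to induce the same target edge~$e$ for any input tuple, so the interior and base pieces of the construction align on a common target path. For an identity in at most two variables, the restricted input space reduces the required equalities to identities already implied either by $\mathbb A\models\Sigma$ or by the zigzag satisfying $\Sigma$ individually. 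An identity that is neither balanced nor uses at most two variables could, in general, produce different target edges on the two sides, breaking consistency --- precisely the obstruction the hypothesis on~$\Sigma$ rules out.
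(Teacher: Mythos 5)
The easy direction and the overall gluing strategy (using $f^{\mathbb A}$ to pick the target path $\mathbb{P}_e$, using the zigzag polymorphism to choose a position along that path, and appealing to idempotency for the endpoint edges) correctly mirror the paper's Case~1 and Case~2. But there are two genuine gaps.

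First, your construction is only defined on $m$-tuples $(c_1,\dots,c_m)$ that lie in the connected component of $\mathcal{D}(\mathbb A)^m$ containing the diagonal. You handle tuples whose coordinates all sit at one level inside their paths $\mathbb{P}_{e_i}$, and you handle ``mixed'' tuples with some endpoint coordinates. What you do not handle are tuples whose coordinates sit at genuinely different interior levels, e.g.\ $c_1$ at level~$1$ of its path and $c_2$ at level~$2$. Such tuples form connected components of $\mathcal{D}(\mathbb A)^m$ that are entirely disjoint from the diagonal one, and the polymorphism must be defined consistently on them as well. This is not an edge case: when $\Sigma$ contains a non-balanced two-variable identity $f_\lambda(\mathbf u)\approx x$, the required evaluation $\mathbf c\in\{c,d\}^m$ lands in exactly such a component whenever $c$ and $d$ are at different levels or on different paths. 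The paper's entire Case~3 (subcases 3a, 3b, 3c), together with the two carefully crafted linear orders $\sqsubseteq$ and $\sqsubseteq^\star$ and Lemma~\ref{lemma:the_orders}, exists precisely to define the polymorphism on these components. Your argument has nothing in their place, so it does not produce a total operation on $\mathcal{D}(\mathbb A)$, let alone one that satisfies the two-variable identities.

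Second, your treatment of the mixed zigzag/single-edge case is not well-defined as stated. You propose to ``lift $g$ along the unique surjective homomorphisms $\mathbb Q\to\mathbb P_{e_i}$'' and ``compute $\overline g$ on $\mathbb Q$-preimages of the $c_i$.'' But the canonical map $\mathbb Q\to\mathbb P_{e_i}$ is two-to-one on a collapsed block: when $\mathbb P_{e_i,l}$ is a single edge, a vertex of $\mathbb P_{e_i,l}$ has two preimages in the zigzag block $\mathbb Q_l$. Applying $g$ to different choices of preimage can give different results, so the value of $\overline f$ is not determined. The paper avoids this by treating this case separately (subcase~2c): it does not use $g$ at all, but instead takes a $\sqsubseteq$-minimal element among the pullbacks of the $c_i$'s that lie on genuine zigzags. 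Your proposal would need an analogous well-definedness fix.
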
 

The following corollary lists some popular properties that can be
expressed as sets of identities satisfying the above
assumptions. Indeed, they include many commonly encountered Maltsev
conditions.

\begin{cor}\label{cor:preserved_conditions}
  Let $\mathbb A$ be a finite relational structure.  Then each of the
  following hold equivalently on \up(the polymorphism algebra of\up) $\mathbb A$ and $\mathcal D(\mathbb
  A)$.
\begin{enumerate}[label=(0\arabic*)]
\item[(1)] Being Taylor or equivalently having a weak near-unanimity \up(WNU\up)
  operation \cite{maroti-mckenzie} or equivalently a cyclic
  operation~\cite{b-k} \up(conjectured to be equivalent to being in P if $\mathbb{A}$ is a core~\cite{b-j-k}\up)\up;
\item[(2)] Congruence join-semidistributivity \up($\text{SD}(\vee)$\up) \up(conjectured to be
  equivalent to NL if $\mathbb{A}$ is a core \cite{lartes}\up);
\item[(3)] \up(\up For $n\geq 3$\up) congruence
  $n$-permutability \up(CnP\up) \up(together with \up{(2)} conjectured to be equivalent
  to L if $\mathbb{A}$ is a core \cite{lartes}\up).
  \item[(4)] Congruence meet-semidistributivity \up($\text{SD}(\wedge)$\up)
  \up(equivalent to bounded width~\cite{BW_journal_version}\up)\up;
\item[(5)] \up(For $k\geq 4$\up) $k$-ary edge operation
  \up(equivalent to few subpowers \cite{BIMMVW}, \cite{IMMVW}\up)\up;
\item[(6)] $k$-ary near-unanimity operation \up(equivalent to strict
  width~\cite{fedvar}\up)\up;
\item[(7)] Totally symmetric idempotent \up(TSI\up) operations of all arities
  \up(equivalent to width $1$~\cite{width1}, \cite{fedvar}\up)\up;
\item[(8)] Hobby-McKenzie operations \up(equivalent to the
  corresponding variety satisfying a non-trivial congruence lattice
  identity\up)\up;
\item[(9)] Congruence modularity \up(CM\up);
\item[(10)] Congruence distributivity \up(CD\up);

\end{enumerate}
\end{cor}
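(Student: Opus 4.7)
The plan is to derive the corollary as ten applications of Theorem \ref{thm:preserved_conditions}, one for each listed property. For a given condition, I need to exhibit a linear idempotent set of identities $\Sigma$ characterising the condition, verify that every identity in $\Sigma$ is either balanced or uses at most two distinct variables, and verify that $\Sigma$ holds in the polymorphism algebra of the zigzag. Given these two verifications, Theorem \ref{thm:preserved_conditions} immediately yields the required equivalence $\mathbb A\models\Sigma$ if and only if $\mathcal D(\mathbb A)\models\Sigma$, and the corollary follows by dispatching each of (1)--(10) in turn.

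For the syntactic check, properties (1), (4), (5), (6), and (7) are already characterised by linear idempotent identities in the two variables $x,y$ only---the weak near-unanimity identities of some or of all but finitely many arities, the $k$-ary edge identities, the near-unanimity identities, and (for (7)) balanced linking and symmetry identities---so the hypothesis of Theorem \ref{thm:preserved_conditions} is immediate. Properties (2), (3), (8), (9), (10) are characterised by chains of ternary or quaternary terms (Willard/Kozik for $\text{SD}(\vee)$, Hagemann--Mitschke for $n$-permutability, Hobby--McKenzie, Gumm, and J\'onsson respectively). These chains are canonically written with the two endpoint operations set equal to the first and last projections, which produces three-variable non-balanced identities such as $p_0(x,y,z)\approx x$. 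Since projections are always polymorphisms, I would omit the endpoint symbols from the operational signature altogether, after which every remaining defining identity in each of these Maltsev conditions uses only the variables $x,y$ and hence fits the hypothesis.

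For the zigzag verification, the zigzag $\mathbb Z=\bullet\boldsymbol\rightarrow\bullet\boldsymbol\leftarrow\bullet\boldsymbol\rightarrow\bullet$ is a bipartite oriented tree of height $1$ on four vertices, whose polymorphism algebra is very rich: it admits totally symmetric idempotent operations of every arity, weak near-unanimities of every arity, a symmetric majority, $k$-ary edge operations for every $k\geq 4$, and J\'onsson, Gumm, Hobby--McKenzie, and Hagemann--Mitschke chains of any prescribed length. Each such operation can be built explicitly from the level function and bipartition of $\mathbb Z$, and from their existence each chosen $\Sigma$ is satisfied by $\mathbb Z$. The main obstacle is the bookkeeping in the syntactic step: the canonical Maltsev conditions for (2), (3), (8), (9), (10) genuinely contain non-balanced three- or four-variable identities when stated with explicit projection endpoints, and one must take care to repackage them as identities in the ``middle'' operations only, so that they fall within the hypothesis of Theorem \ref{thm:preserved_conditions} before the application can be carried out uniformly across all ten items.
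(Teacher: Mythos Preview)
Your approach is essentially the paper's: apply Theorem~\ref{thm:preserved_conditions} to each item after verifying the syntactic hypothesis (linear, idempotent, and either balanced or two-variable) and the zigzag hypothesis. The difference lies in how the zigzag check is organised. Rather than constructing each family of operations explicitly from the bipartition, the paper appeals to Lemma~\ref{lemma:zigzag}, which records just three facts about $\mathbb Z$: (i) it has a majority polymorphism, (ii) it satisfies every balanced set of identities, and (iii) it is congruence $3$-permutable. Item~(3) then follows from~(iii), item~(7) from~(ii), and all remaining items from~(i), since a majority already forces Taylor, $k$-NU, CD, CM, $\text{SD}(\vee)$, $\text{SD}(\wedge)$, Hobby--McKenzie, and $k$-ary edge for $k\geq 4$. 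This is tidier than your plan of building each chain by hand, though yours would also work. Conversely, your remark about dropping the projection endpoints from the standard chain-term Maltsev conditions (J\'onsson, Gumm, Hagemann--Mitschke, etc.) so that only two-variable identities remain is a point the paper simply asserts without comment; making it explicit is a worthwhile addition.
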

\noindent Items (2) and (3) above, together with Theorem \ref{thm:mainresults} (i) and (ii),
show that the Finer CSP complexity conjectures need only be
established in the case of digraphs to obtain a resolution in the
general case.

Note that the above list includes all six conditions for omitting
types in the sense of Tame Congruence Theory~\cite{hobbymckenzie}.
Figure \ref{fig:maltsev}, taken from \cite{JKN13}, presents a diagram
of what might be called the ``universal algebraic geography of CSPs''.
\begin{figure}[h]
  \begin{center}
    \begin{small}
      \xymatrix{ &&& *++[F-,]\txt{Taylor}\ar@{-}[rd] &
        \\
        &&*++[F-,]{\txt{SD$(\wedge)$}}\ar@{-}[ur] &
        &*++[F-,]{\txt{Hobby- \\ McKenzie}} &
        \\
        &&&*++[F-,]{\txt{SD$(\vee)$}}\ar@{-}[ur]\ar@{-}[ul] &
        *++[F-,]\txt{CM}\ar@{-}[u] & *++[F-,]\txt{CnP}\ar@{-}[ul] &
        \\
        &&&*++[F-,]\txt{CD}\ar@{-}[u]\ar@{-}[ur] &
        *++[F-,]{\txt{SD$(\wedge)$ \\ and CnP}}\ar@{-}[ul]\ar@{-}[ur]
        & *++[F-,]\txt{CnP \\ and CM}\ar@{-}[u]\ar@{-}[ul] &
        \\
        &&& & *++[F-,]\txt{CD \\ and
          CnP}\ar@{-}[u]\ar@{-}[ur]\ar@{-}[ul] &
        *++[F-,]\txt{C3P}\ar@{-}[u]
        \\
        &&& & *++[F-,]\txt{CD \\ and C3P}\ar@{-}[u]\ar@{-}[ur] &
        *++[F-,]\txt{Maltsev}\ar@{-}[u]
        \\
        &&& & *++[F-,]\txt{CD \\ and Maltsev}\ar@{-}[u]\ar@{-}[ur] & }
    \end{small}
  \end{center}\caption{The universal algebraic geography of tractable
CSPs.}\label{fig:maltsev}
\end{figure}
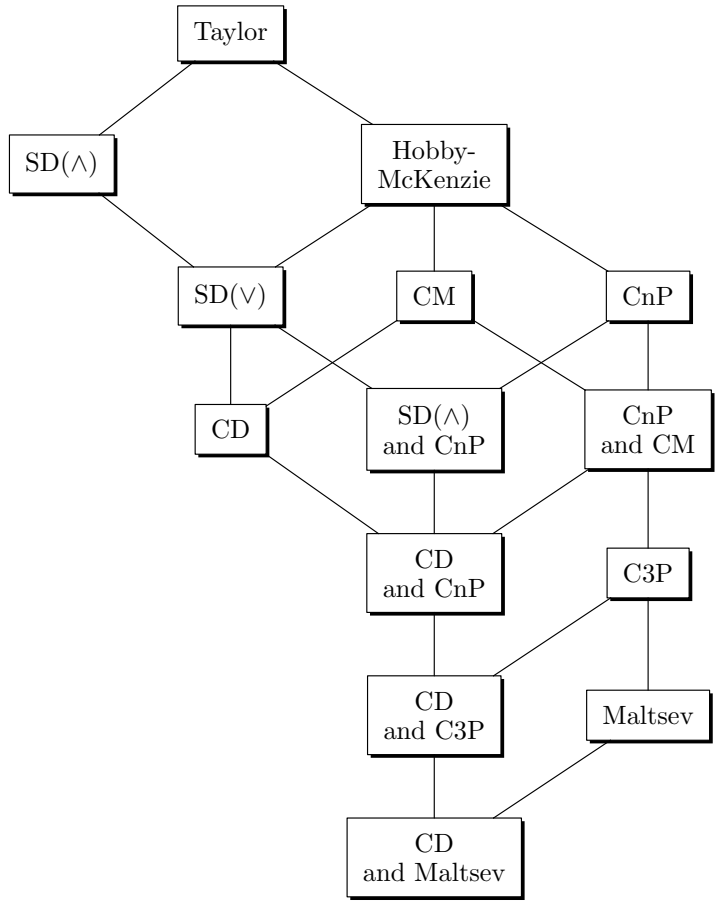

We will prove Theorem \ref{thm:preserved_conditions} and Corollary
\ref{cor:preserved_conditions} in subsection \ref{subsection:proofs}.

\subsection{Polymorphisms of the zigzag}
\noindent In the following, let $\mathbb Z$ be a zigzag with vertices
$00$, $01$, $10$ and $11$ (i.e., the oriented path
\mbox{$00\boldsymbol\rightarrow 01\boldsymbol\leftarrow
  10\boldsymbol\rightarrow 11$}). Let us denote by $\leq_\mathbb Z$ the linear order on $\mathbb Z$ given by $00<_\mathbb Z 01<_\mathbb Z 10<_\mathbb Z 11$. 
  
  Note that the subset $\{00,10\}$ is closed
under all polymorphisms of $\mathbb Z$ (as it is pp-definable using
the formula $(\exists y)(x\boldsymbol\rightarrow y)$, see Lemma
\ref{lem:ppdef_polymorphisms}). The same holds for $\{01,11\}$. We will use
this fact later in our proof. 
  
The digraph $\mathbb Z$ satisfies
most of the important Maltsev conditions (an exception being congruence 2-permutability, i.e., having a Maltsev polymorphism). We need the following.

\begin{lem} \label{lemma:zigzag}
The following holds.
\begin{enumerate}[label=(\roman*)]
\item $\mathbb Z$ has a majority polymorphism,
\item $\mathbb Z$ satisfies any balanced set of identities,
\item $\mathbb Z$ is congruence 3-permutable.
\end{enumerate}
\end{lem}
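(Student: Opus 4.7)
My plan is to prove (i), (ii), (iii) in order.

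For (i), define $m(x,y,z)$ to be the median on the chain $00<_{\mathbb{Z}} 01<_{\mathbb{Z}} 10<_{\mathbb{Z}} 11$. Since the source set $\{00,10\}$ and the target set $\{01,11\}$ are both pp-definable (using $(\exists y)(x\to y)$ and $(\exists y)(y\to x)$) and hence closed under every polymorphism, checking that $m$ preserves edges reduces to verifying the condition on the three edges of $\mathbb{Z}$: for any triple of edges $(s_i,t_i)_{i=1,2,3}$, the pair $(m(s_1,s_2,s_3),m(t_1,t_2,t_3))$ is again an edge, which is a finite case analysis. The majority identities $m(x,x,y)=m(x,y,x)=m(y,x,x)=x$ then follow immediately from the definition of median in a chain.

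For (ii), take $f_n(x_1,\dots,x_n):=\max_{\leq_{\mathbb{Z}}}(x_1,\dots,x_n)$. It is idempotent and totally symmetric by construction, and preserves the edges of $\mathbb{Z}$ by the same sort of case check as in (i). Interpreting each $n$-ary operation symbol occurring in $\Sigma$ as $f_n^{\mathbb{Z}}$, one shows by induction on term structure that every term $t$ evaluates to the $\max_{\leq_{\mathbb{Z}}}$ of the values of the variables occurring in $t$. Since the two sides of a balanced identity share the same variable set, they evaluate to the same element, so $\mathbb{Z}\models\Sigma$.

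For (iii), I want to exhibit ternary polymorphisms $p_1,p_2$ of $\mathbb{Z}$ satisfying the Hagemann--Mitschke identities $p_1(x,y,y)\approx x$, $p_2(x,x,y)\approx y$ and $p_1(x,x,y)\approx p_2(x,y,y)$. Identifying each of the two level sets $\{00,10\}$ and $\{01,11\}$ with $\{0,1\}$, every polymorphism of $\mathbb{Z}$ corresponds to a pair $(f_0,f_1)$ of boolean operations subject to the edge-induced compatibility $\mathbf{x}\geq\mathbf{y}\Rightarrow f_0(\mathbf{x})\geq f_1(\mathbf{y})$. A short inspection of forced values shows that no Maltsev operation can exist on $\mathbb{Z}$ (any such operation would have $m_0(1,1,0)=0<1=m_1(1,0,0)$ with $(1,1,0)\geq(1,0,0)$), so one genuinely needs $p_1\neq p_2$. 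My concrete proposal is
\[
p_1^0(x,y,z)=x\vee(\neg y\wedge z),\qquad p_1^1(x,y,z)=x\wedge(\neg y\vee z),
\]
and $p_2$ obtained from $p_1$ by swapping the first and third arguments. Each of the three identities reduces at each level to a short boolean computation (for instance $p_1^0(x,y,y)=x\vee 0=x$, and $p_1^0(x,x,y)=x\vee y=p_2^0(x,y,y)$); and compatibility follows from the observation that if $f_1(\mathbf{y})=1$ then either $y_1=1$ (for $p_1$) or $y_3=1$ (for $p_2$), so by $\mathbf{x}\geq\mathbf{y}$ the corresponding coordinate of $\mathbf{x}$ is $1$, forcing $f_0(\mathbf{x})=1$. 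The main technical step is extending $p_1,p_2$ to the mixed-level inputs of $\mathbb{Z}^3$ so that all three identities continue to hold there; since polymorphism constraints bind only on pure-level inputs, every mixed-level value is either forced by exactly one of the Hagemann--Mitschke identities or entirely free, so a consistent extension can be chosen.
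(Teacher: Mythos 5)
Your proposal is correct, and for parts (i) and (ii) it essentially coincides with the paper's proof: the paper also takes the median of the chain $00 <_\mathbb Z 01 <_\mathbb Z 10 <_\mathbb Z 11$ for (i), and for (ii) interprets every symbol as the $\leq_\mathbb Z$-minimum rather than your maximum, which is the same argument up to duality. For (iii), however, you take a genuinely different route. The paper simply exhibits two explicit ternary operations on $\mathbb Z$ by a direct case definition (giving preference to the middle vertices $01$ and $10$ whenever the Hagemann--Mitschke identities leave a choice) and then verifies by hand that they are polymorphisms and satisfy the identities. You instead observe that the sets $\{00,10\}$ and $\{01,11\}$ are polymorphism-invariant and that every edge of $\mathbb Z^3$ runs from a pure level-$0$ triple to a pure level-$1$ triple, so the polymorphism condition only constrains the restrictions to those two cubes; this reduces the problem to finding a compatible pair $(f_0,f_1)$ of Boolean operations plus a free extension to mixed-level inputs. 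Your Boolean formulas $p_1^0=x\vee(\neg y\wedge z)$, $p_1^1=x\wedge(\neg y\vee z)$ and their reflections for $p_2$ do satisfy the Hagemann--Mitschke identities and the monotonicity compatibility, and your accounting of the mixed-level case is correct: each mixed-level tuple with two distinct entries matches at most one of the three identity patterns (matching two would force the entries equal), so the constraints are non-conflicting and any completion works. What your approach buys is a clean structural explanation of \emph{why} such $p_1,p_2$ exist (and, as a bonus, why a Maltsev operation cannot), at the cost of leaving the final operations only implicitly defined on mixed-level inputs rather than by a single closed formula as in the paper.
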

\begin{proof}
Let $x\wedge y$ and $x\vee y$ denote the binary operations of minimum and maximum with respect to $\leq_\mathbb Z$, respectively. That is, $x\wedge y$ is the vertex from $\{x,y\}$ closer to $00$ and $x\vee y$ the vertex closer to $11$. It can be easily seen that $\wedge,\vee$ are polymorphisms of $\mathbb Z$ and form a distributive lattice. Note that it follows that $\mathbb Z$ satisfies any set of identities which holds in the variety of distributive lattices (equivalently, in the two-element lattice). 

In particular, to prove (i), note that the ternary operation defined by $m(x,y,z)=(x\wedge y)\vee(x\wedge
  z)\vee(y\wedge z)$ (the \emph{median}) is a majority polymorphism. To prove (ii), let $\Sigma$ be a balanced set of identities. For every operation symbol $f$ (say $k$-ary) occurring in $\Sigma$, we define $f^\mathbb Z(x_1,\dots,x_k)=\bigwedge_{i=1}^k x_i$. It is easy to check that $f^\mathbb Z$ is a polymorphism and that such a construction satisfies any balanced identity.
  
To prove (iii), we directly construct the ternary polymorphisms $p_1$ and $p_2$ witnessing $3$-permutability:
 \begin{align*}
 p_1(x,y,z)=&\begin{cases}
 01 & \text{if $y\neq z$ and $01\in\{ x,y,z\}$},\\
 10 & \text{if $y\neq z$ and $10\in\{x,y,z\}$ and $01\notin\{x,y,z\}$},\\
 x & \text{otherwise},
 \end{cases}\\
 p_2(x,y,z)=&\begin{cases}
 01 & \text{if $x\neq y$ and $01\in\{ x,y,z\}$},\\
 10 & \text{if $x\neq y$ and $10\in\{x,y,z\}$ and $01\notin\{x,y,z\}$},\\
 z & \text{if $x=y$}.\\
 x & \text{otherwise}
 \end{cases}
 \end{align*}
The identities $p_1(x,y,y)\approx x$ and $p_2(x,x,y)\approx y$ follow directly from the construction. To verify $p_1(x,x,y)\approx p_2(x,y,y)$ we can assume that $x\neq
 y$. If $01$ or $10$ are in $\{x,y\}$, then $p_1$ and $p_2$ agree (the result is $01$ if $01\in\{x,y\}$ and $10$ else). If not, then $p_1(x,x,y)=p_2(x,y,y)=x$.   
 
Finally, we prove that $p_1$ is a polymorphism of $\mathbb Z$; a similar argument works for $p_2$. If we have triples $\mathbf a,\mathbf b\in Z^3$ such that $a_i\rightarrow b_i$, for $i=1,2,3$, then
 $\{a_1,a_2,a_3\}\subseteq\{00,10\}$ and $\{b_1,b_2,b_3\}\subseteq\{01,11\}$. Thus also $p_1(\mathbf a)\in\{00,10\}$ and $p_1(\mathbf b)\in\{01,11\}$. If $p_1(\mathbf a)=10$, then $p_1(\mathbf a)\rightarrow p_1(\mathbf b)$ follows immediately. If $p_1(\mathbf a)=00$, then $\mathbf a=(00,10,10)$ or $\mathbf a=(00,00,00)$. In both cases $b_1=01$ which gives $p_1(\mathbf b)=01$ and $p_1(\mathbf a)\rightarrow p_1(\mathbf b)$.
\end{proof}

\subsection{The proof} \label{subsection:proofs} 
\noindent In this subsection we prove Theorem
\ref{thm:preserved_conditions} and Corollary
\ref{cor:preserved_conditions}. 
Fix a finite relational structure;
without loss of generality we can assume that $\mathbb A=(A;R)$, where
$R$ is a $k$-ary relation (see Example \ref{example:1}). 

First we need to gather a few facts about connected components of powers of $\mathcal{D}(\mathbb A)$. This is because
 when constructing an $m$-ary polymorphism, one can define it independently on different connected components of $\mathcal D(\mathbb A)^m$ without violating the polymorphism condition. 

We start with the diagonal component: since $\mathcal{D}(\mathbb A)$ is connected, it follows that for every $m>0$ the \emph{diagonal} (i.e., the set $\{(c,c,\dots,c)\mid c\in\mathcal D(\mathbb A)\}$) is connected in $\mathcal D(\mathbb A)^m$. We denote by $\Delta_m$ the connected component of $\mathcal D(\mathbb A)^m$ containing the diagonal.

\begin{lem}\label{obs:2}
 For every $m>0$, both $A^m\subseteq\Delta_m$ and $R^m\subseteq\Delta_m$.
\end{lem}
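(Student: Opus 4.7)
The plan is to use the fact that the unobstructed path $\mathbb Q$ (no single-edge replacements) embeds into every connecting path $\mathbb P_e$. More precisely, the earlier observation yields $\mathbb Q\to\mathbb P_e$ for every $e=(a,\mathbf r)\in A\times R$, so for \emph{every} pair $(a,\mathbf r)\in A\times R$ one has a $\mathbb Q$-walk $a\stackrel{\mathbb Q}{\longrightarrow}\mathbf r$ in $\mathcal D(\mathbb A)$. This is the only structural input needed; the rest is a product argument.

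The product argument is the following elementary but crucial lemma: if $\varphi_i:\mathbb Q\to\mathcal D(\mathbb A)$ are homomorphisms with $\varphi_i(\iota\mathbb Q)=c_i$ and $\varphi_i(\tau\mathbb Q)=d_i$ for $i\in[m]$, then $v\mapsto(\varphi_1(v),\dots,\varphi_m(v))$ is a homomorphism $\mathbb Q\to\mathcal D(\mathbb A)^m$ witnessing $(c_1,\dots,c_m)\stackrel{\mathbb Q}{\longrightarrow}(d_1,\dots,d_m)$. In particular, $(c_1,\dots,c_m)$ and $(d_1,\dots,d_m)$ lie in the same connected component of $\mathcal D(\mathbb A)^m$.

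With these tools I would handle the two inclusions by choosing a single pivot element. For $A^m\subseteq\Delta_m$, fix any $\mathbf r\in R$; then each coordinate $a_i$ admits $a_i\stackrel{\mathbb Q}{\longrightarrow}\mathbf r$, so the product walk shows that an arbitrary $(a_1,\dots,a_m)\in A^m$ is connected to the diagonal vertex $(\mathbf r,\dots,\mathbf r)\in\Delta_m$. Symmetrically, for $R^m\subseteq\Delta_m$, fix any $a\in A$; then $a\stackrel{\mathbb Q}{\longrightarrow}\mathbf r_i$ for every $\mathbf r_i\in R$, and the product walk connects $(a,\dots,a)\in\Delta_m$ to an arbitrary $(\mathbf r_1,\dots,\mathbf r_m)\in R^m$.

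There is no real obstacle: the argument is entirely elementary once the observation about $\mathbb Q\to\mathbb P_e$ is in hand. The only point to be careful about is that ``connected'' means joined by some oriented path (not necessarily a directed walk), which is exactly what the $\mathbb Q$-walk provides. Everything else is just taking products of homomorphisms coordinatewise.
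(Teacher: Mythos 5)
Your proof is correct and follows essentially the same route as the paper's: fix a single pivot in $A$ (resp.\ $R$), use the fact that $\mathbb Q\to\mathbb P_e$ for every $e\in A\times R$, and take the coordinatewise product of the resulting $\mathbb Q$-walks to reach a diagonal tuple. The paper writes out only the $R^m$ case and says the other is similar, whereas you spell out both, but the argument is identical.
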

\begin{proof}
Fix an arbitrary element $a\in A$. Let $(\mathbf r^1,\dots,\mathbf r^m)\in
  R^m$ and for every $i\in[m]$ let $\varphi_i:\mathbb Q\to\mathbb
  P_{(a,\mathbf r^i)}$. The homomorphism defined by $x\mapsto
  (\varphi_1(x),\dots,\varphi_m(x))$ witnesses
  $(a,\dots,a)\stackrel{\mathbb Q}{\longrightarrow}(\mathbf
  r^1,\dots,\mathbf r^m)$ in $\mathcal D(\mathbb A)^m$. This proves that $R^m\subseteq\Delta_m$; a similar argument gives $A^m\subseteq\Delta_m$.
\end{proof}

The next lemma shows that there is only one non-trivial connected
component of $\mathcal{D}(\mathbb{A})^m$ that contains tuples (whose
entries are) on the same level in $\mathcal{D}(\mathbb{A})$; namely
$\Delta_m$. All other such components are singleton.

\begin{lem}\label{lem:diagonalcomponent}
  Let $m>0$ and let $\Gamma$ be a connected component of $\mathcal
  D(\mathbb A)^m$ containing an element $\mathbf c$ such that
  $\lvl(c_1)=\dots=\lvl(c_m)$. Then every element $\mathbf d \in
  \Gamma$ is of the form $\lvl(d_1)=\dots = \lvl (d_m)$ and the
  following hold.
\begin{enumerate}[label=(\roman*)]
\item If $\mathbf c\to\mathbf d$ is an edge in $\Gamma$ such that
  $\mathbf c\notin A^m$ and $\mathbf d\notin R^m$, then there exist
  $e_1,\dots,e_m\in A\times R$ and $l\in[k]$ such that $\mathbf
  c,\mathbf d\in\prod_{i=1}^m\mathbb P_{e_i,l}$.
\item Either $\Gamma=\Delta_m$ or $\Gamma$ is one-element.
\end{enumerate}
\end{lem}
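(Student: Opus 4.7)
The plan is to handle the uniform-level claim by induction, deduce (i) by level counting, and prove (ii) by a minimum-level argument. The uniform-level statement is immediate: every edge in $\mathcal D(\mathbb A)^m$ shifts each coordinate's level by $+1$, so along any walk in $\Gamma$ all coordinate levels change by the same signed amount at each step. For (i), the conditions $\mathbf c\notin A^m$ and $\mathbf d\notin R^m$, together with the uniform-level property, force $l:=\lvl(c_1)\in[k]$, since $A$ and $R$ are exactly the sets of vertices of $\mathcal D(\mathbb A)$ at levels $0$ and $k+2$ respectively. Each $c_i$ is then a strictly interior vertex of a uniquely determined connecting path $\mathbb P_{e_i}$ (interior vertices of distinct paths are distinct by construction), and the edge $c_i\to d_i$ must lie in the unique segment of $\mathbb P_{e_i}$ hosting edges between levels $l$ and $l+1$, namely $\mathbb P_{e_i,l}$; the common value of $l$ across coordinates is inherited from the common level of $\mathbf c$.

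For (ii), assume $|\Gamma|>1$ and let $l_{\min}$ be the minimum value of $\lvl(c_1)$ over $\mathbf c\in\Gamma$. If $l_{\min}=0$ then $\Gamma\cap A^m\neq\emptyset$, and Lemma \ref{obs:2} yields $\Gamma=\Delta_m$. The goal is therefore to rule out $l_{\min}\geq 1$: pick $\mathbf c\in\Gamma$ at level $l_{\min}$; by minimality, $\mathbf c$ has no predecessor in $\mathcal D(\mathbb A)^m$ (any predecessor would lie in $\Gamma$ at level $l_{\min}-1$), while $|\Gamma|>1$ forces $\mathbf c$ to have a successor, and hence every $c_i$ to have a successor in $\mathcal D(\mathbb A)$.

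The main delicate point is a local inspection of the path segments: at a positive intermediate level, a vertex of $\mathcal D(\mathbb A)$ lacks a predecessor precisely when it is the lower \emph{dip} of a zigzag segment, and lacks a successor precisely when it is the upper \emph{peak} of a zigzag segment; every other interior vertex has both. Consequently each $c_i$ is either the initial junction of $\mathbb P_{e_i,l_{\min}}$ or a dip of a zigzag $\mathbb P_{e_i,l_{\min}}$, with at least one coordinate realising a dip. The finishing move is to pick a successor $\mathbf d$ of $\mathbf c$ whose $i$-th coordinate is, in every case, the vertex of $\mathbb P_{e_i,l_{\min}}$ adjacent from above to the initial junction, and then to take the predecessor $\mathbf c^*$ of $\mathbf d$ whose every coordinate is exactly that initial junction; both choices are available from the structure of single edges and zigzags. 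Since every junction has a predecessor in $\mathcal D(\mathbb A)$, the tuple $\mathbf c^*$ has a predecessor in $\Gamma$ at level $l_{\min}-1$, contradicting the minimality of $l_{\min}$.
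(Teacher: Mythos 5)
Your proof is correct. The uniform-level observation and part (i) coincide with the paper's argument (inspection of levels and of the path segments). For part (ii) the route genuinely differs: the paper walks explicitly from an arbitrary interior tuple of $\Gamma$ all the way down to $A^m$ — first stepping to the junction tuple $(\iota\mathbb P_{e_1,l},\dots,\iota\mathbb P_{e_m,l})$ via the walk $\bullet\boldsymbol\rightarrow\bullet\boldsymbol\leftarrow\bullet$, then following the all-zigzag path $\mathbb Q$ backwards to $(a_1,\dots,a_m)\in A^m$ — and concludes via Lemma~\ref{obs:2}. You instead argue by contradiction on the minimum level $l_{\min}$ of $\Gamma$: your ``finishing move'' is the very same step over to the junction tuple, but you then observe that junctions admit predecessors and invoke minimality rather than descending all the way to $A^m$. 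The paper's version is a little more economical (it never needs the explicit classification of interior vertices into junctions, dips and peaks), while yours is more local; both are sound. Two small points to tidy: for $l_{\min}=k+1$ your notation $\mathbb P_{e_i,l_{\min}}$ is not defined, since the numbered segments run only through $1,\dots,k$; at that height no dips exist, so every $c_i$ is automatically a junction and $\mathbf c$ has a predecessor directly, no finishing move required. Also your remark that at least one coordinate realises a dip is correct (else $\mathbf c$ itself would already have a predecessor) but is never actually used, since the finishing move works regardless of how many coordinates are dips.
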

\begin{proof}
  First observe that if an element $\mathbf d$ is connected in
  $\mathcal{D}(\mathbb A)^m$ to an element $\mathbf c$ with $\lvl
  (c_1)= \dots = \lvl (c_m)$, then there is an oriented path
  $\mathbb{Q}'$ such that $\mathbf c \stackrel{\mathbb Q'}\to \mathbf
  d$ from which it follows that $\lvl (d_1)= \dots = \lvl (d_m)$. To prove (i), let $\mathbf c\to\mathbf d$ be an edge in $\Gamma$
  such that $\mathbf c\notin A^m$ and $\mathbf d\notin R^m$. For
  $i=1,\dots, m$ let $e_i$ be such that $c_i\in\mathbb P_{e_i}$ and
  let $l=\lvl (c_1)$. The claim now follows immediately from the
  construction of $\mathcal D(\mathbb A)$.
  
  It remains to prove (ii).  If $|\Gamma|>1$, then there is an edge
  $\mathbf c\to\mathbf d$ in $\Gamma$. If $\mathbf c\in A^m$ or
  $\mathbf d\in R^m$, then the claim follows from
  Lemma~\ref{obs:2}. Otherwise, from~(i), there exists $l\in[k]$ and
  $e_i=(a_i,\mathbf r^i)$ such that $\mathbf c,\mathbf
  d\in\prod_{i=1}^m\mathbb P_{e_i,l}$. For every $i\in[m]$ we can walk from $c_i$ to $\iota\mathbb
  P_{e_i,l}$ following the path
  \mbox{$\bullet\boldsymbol\rightarrow\bullet\boldsymbol\leftarrow\bullet$};
  and so $\mathbf c$ and $(\iota\mathbb P_{e_1,l},\dots,\iota\mathbb
  P_{e_m,l})$ are connected. For every $i\in[m]$ there exists a
  homomorphism $\varphi_i:\mathbb Q\to\mathbb P_{e_i}$ such that
  $\varphi_i(\iota\mathbb Q)=a_i$ and $\varphi_i(\iota\mathbb
  Q_l)=\iota\mathbb P_{e_i,l}$. The homomorphism $\mathbb Q\to
  \mathcal D(\mathbb A)^m$ defined by $x\mapsto
  (\varphi_1(x),\dots,\varphi_m(x))$ shows that $(a_1,\dots,a_m)$ and
  $(\iota\mathbb P_{e_1,l},\dots,\iota\mathbb P_{e_m,l})$ are
  connected. By transitivity, $(a_1,\dots,a_m)$ is connected to
  $\mathbf c$ and therefore $(a_1,\dots,a_m)\in \Gamma$. Using (i) we
  obtain $\Gamma=\Delta_m$.
\end{proof}

In order to deal with connected components that contain tuples of varying levels, we need to define two linear orders $\sqsubseteq,\sqsubseteq^\star$ on $\mathcal D(\mathbb A)$. These linear orders will then be used to choose elements from input tuples of the polymorphisms under construction in a ``uniform'' way.

Fix an arbitrary linear order $\preceq$ on $A$. It induces lexicographic orders on relations on $A$. We will use $\preceq_\mathrm{LEX}$ on $R$, $A\times R$ and also on $R\times A$. (Note the difference!) We define the linear order $\sqsubseteq$ on $\mathcal D(\mathbb A)$ by putting $x\sqsubset y$ if any of the following five conditions holds:
\begin{enumerate}
\item[(1)] $x,y\in A$ and $x\prec y$, or 
\item[(2)] $x,y\in R$ and $x\prec_\mathrm{LEX} y$, or
\item[(3)] $\lvl(x)<\lvl(y)$,
\end{enumerate}
or $\lvl(x)=\lvl(y)$, $x,y\notin A\cup R$, say $x\in\mathbb P_{(a,\mathbf r)}$, $y\in\mathbb P_{(b,\mathbf s)}$, and
\begin{enumerate}
\item[(4)] $(a,\mathbf r)=(b,\mathbf s)$ and $x$ is closer
  to $\iota\mathbb P_{(a,\mathbf r)}$ than $y$, or
\item[(5)] $(a,\mathbf r)\prec_\mathrm{LEX}(b,\mathbf s)$.
\end{enumerate}
We also define the linear order $\sqsubseteq^\star$, which will serve as a ``dual'' to $\sqsubseteq$ in some sense. The definition is almost identical, we put $x\sqsubset^\star y$ if one of (1), (2), (3), (4) or ($5^\star$) holds, where
\begin{enumerate}[label=(\arabic*$^\star$)]
\item[($5^\star$)] $(\mathbf r,a)\prec_\mathrm{LEX}(\mathbf s,b)$.
\end{enumerate}
\noindent The last ingredient is the following lemma; $\sqsubseteq$ and $\sqsubseteq^\star$ were tailored to satisfy it.

\begin{lem} \label{lemma:the_orders}
  Let $C$ and $D$ be subsets of $\mathcal D(\mathbb A)$ such that
\begin{itemize}
\item for every $x\in C$ there exists $y'\in D$ such that
  $x\boldsymbol{\rightarrow}y'$, and
\item for every $y\in D$ there exists $x'\in C$ such that
  $x'\boldsymbol{\rightarrow}y$.
\end{itemize}
Then the following is true.
\begin{enumerate}[label=(\roman*)]
\item If $D\nsubseteq R$ and $c$ and $d$ are the $\sqsubseteq$-minimal 
elements of $C$ and $D$,
respectively, then $c\boldsymbol{\rightarrow}d$.
\item If $C\nsubseteq A$ and $c$ and $d$ are the $\sqsubseteq^\star$-maximal 
elements of $C$ and $D$,
respectively, then $c\boldsymbol{\rightarrow}d$.
\end{enumerate}
\end{lem}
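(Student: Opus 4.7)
The plan is to prove (i) by splitting on whether $c$ lies in $A$ or in the interior of $\mathcal{D}(\mathbb{A})$; part~(ii) then follows by an entirely dual argument, because $\sqsubseteq$ and $\sqsubseteq^\star$ agree except on clauses~(5) and~($5^\star$), and ($5^\star$) is designed so that $\sqsubseteq^\star$-maxima align through the $R$-coordinate in the same way $\sqsubseteq$-minima align through the $A$-coordinate. First I would record three preliminaries: $\sqsubseteq$ is monotone in $\lvl$; $A$-elements are strictly $\sqsubseteq$-smallest; and non-$R$-elements are strictly $\sqsubseteq$-smaller than any $R$-element. Combining these with the hypothesis $D \not\subseteq R$ and the predecessor condition (which rules out $d \in A$, since $A$-vertices have no predecessors), $d$ must be interior; combining $c \sqsubseteq x'$ with the existence of a successor $y' \in D$ of $c$ then pins down $\lvl(d) = \lvl(c)+1 \leq k+1$.

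If $c \in A$, then because the initial segment of every $\mathbb{P}_e$ is a single edge, any level-$1$ interior vertex with a predecessor must be the first interior vertex of its path. Hence $d$ is the first interior vertex of $\mathbb{P}_{(a_d, \mathbf{r}_d)}$ with unique predecessor $a_d = x' \in C$, giving $c \preceq a_d$. If this were strict, then any successor $y' \in D$ of $c$ would be a first interior vertex of some $\mathbb{P}_{(c, \mathbf{r})}$, and clause~(5) would give $y' \sqsubset d$, contradicting minimality of $d$; so $c = a_d$ and $c \to d$. If instead $c$ is interior, then both $y'$ and $d$ are interior and lie on paths $\mathbb{P}_{e_c}$ and $\mathbb{P}_{e_{x'}}$ respectively (path interiors being pairwise disjoint). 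If $e_c \neq e_{x'}$, clause~(5) applied to $c \sqsubset x'$ again yields $y' \sqsubset d$, a contradiction; so $c, x', d$ all lie on a common $\mathbb{P}_e$, and by clause~(4) the position of $c$ on $\mathbb{P}_e$ is at most that of $x'$. When equal, $c = x'$ and we are done. Otherwise a local analysis of the at-most-three vertices of $\mathbb{P}_e$ at level $\lvl(c)$, organised by whether the two segments straddling this level are single edges or zigzags, shows that the strictly-earlier candidate for $c$ either is a zigzag ``peak'' with no out-edges (and so cannot belong to $C$), or else is a segment boundary whose unique out-edge lands on the $\sqsubseteq$-smallest level-$(\lvl(c){+}1)$ vertex $d^\star$ of $\mathbb{P}_e$; the successor-in-$D$ condition then forces $d^\star \in D$, and clause~(4) forces $d = d^\star$.

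The main technical obstacle is the zigzag case analysis just outlined: each of the four combinations (SE,SE), (SE,ZZ), (ZZ,SE), (ZZ,ZZ) for the segments around level $\lvl(c)$ needs to be checked, exploiting that in a zigzag the ``peak'' vertex has no out-edges while the ``boundary'' vertex has a unique out-edge into the next level. The asymmetry between $\sqsubseteq$ and $\sqsubseteq^\star$ -- namely the use of $\preceq_\mathrm{LEX}$ on $A \times R$ versus on $R \times A$ -- is precisely what makes the dual argument for (ii) work: there the roles of predecessor and successor, and of $A$ and $R$, are swapped, and clause~($5^\star$) ensures that $\sqsubseteq^\star$-maxima of $C$ and $D$ align through matching $R$-coordinates.
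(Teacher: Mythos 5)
Your proof is correct, but it takes a genuinely different route from the paper. The paper argues by contradiction: it fixes witnesses $c',d'$ with $c\to d'$ and $c'\to d$, assumes $c\not\to d$ (hence $c\neq c'$, $d\neq d'$), shows via clause (3) that $\lvl(c)=\lvl(c')$ and $\lvl(d)=\lvl(d')$, and then case-splits on the \emph{reason} that $d'\sqsubset^\star d$ (clause (2), (4), or ($5^\star$)); in every case it derives $c\sqsubset^\star c'$, contradicting maximality. This requires no detailed inspection of the internal structure of the connecting paths beyond the zigzag observation in case (4). Your argument is constructive rather than by contradiction: you pin down $d$ exactly, using a case split on whether $c\in A$ or $c$ is interior and a local analysis of the (at most three) vertices of $\mathbb P_e$ at level $\lvl(c)$ organised by which vertices have out-edges (the ``peak'' $z_1^{l-1}$ has none, the boundary $v_l$ has a unique one, the ``valley'' $z_2^l$ has two). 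The paper's proof is noticeably shorter; yours is more explicit about where each clause of the definition of $\sqsubseteq$ is invoked, which some readers may find illuminating.

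Two small caveats. First, you should actually record the local case analysis: specifically, that $c\ne z_1^{l-1}$ (no out-edge), that $c\sqsubset x'$ forces $c=v_l$ and $x'=z_2^l$, that $z_2^l$ existing forces segment $l$ to be a zigzag so $v_l$'s unique out-neighbour is the $\sqsubseteq$-least level-$(l+1)$ vertex $z_1^l$, and that $d\sqsubseteq y'=z_1^l$ together with $d\in\mathbb P_e$ at level $l+1$ forces $d=z_1^l$. Second, the claim that (ii) follows by an ``entirely dual argument'' is a mild overstatement. The reasoning does carry over, but it is not a mechanical reversal: for instance, in the interior case of (ii) the constraint that rules out a vertex is no longer ``$c$ is a peak'' (for (i)) but rather ``$x'$ is a peak'' (since $x'$, not $c$, needs the out-edge to $d$), while $d$ must not be a valley $z_2^{l+1}$ (no in-edge); and in the dual of your $c\in A$ case, which becomes $d\in R$, the two uses of ($5^\star$) pull in opposite directions on the $R$-coordinates and squeeze to give equality, rather than a single use of (5) as in your $c\in A$ argument. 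The paper's footnote about $C\nsubseteq A$ in the ($5^\star$) case of (ii) is a symptom of this asymmetry; in the (5) case of (i) the analogous $A\times R$ ordering aligns with (1) automatically, so no extra hypothesis is needed there. None of this is wrong in your plan, but it deserves to be written out rather than waved away as duality.
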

\begin{proof}
We will prove item (ii); the proof of (i) is similar. Let $c',d'$ be such that $c\boldsymbol{\rightarrow}d'$ and $c'\boldsymbol{\rightarrow}d$. There exist $(a,\mathbf r),(b,\mathbf s)\in A\times R$ such that $c,d'\in\mathbb P_{(a,\mathbf r)}$ and $c',d\in\mathbb P_{(b,\mathbf s)}$. Suppose for contradiction that $c\not\boldsymbol{\rightarrow}d$. In particular, $c\neq c'$ and $d\neq d'$. Note that 
the assumptions of $c,c',d,d'$ and 
item (3) of the definition of $\sqsubseteq^\star$ give
$\lvl(c')+1=\lvl(d)\geq \lvl(d')=\lvl(c)+1\geq \lvl(c')+1$, so that 
$\lvl(c)=\lvl(c')$ and  $\lvl(d)=\lvl(d')$. 
So, the reason for $d'\sqsubset^\star d$ must be one of items (2), (4) or~($5^\star$).

If it is (2), then $d'=\mathbf r$ and $d=\mathbf s$ with $\mathbf r\prec_{LEX}\mathbf s$.  Therefore $(\mathbf r,a)\prec_\mathrm{LEX}(\mathbf s,b)$ and ($5^\star$) gives us $c\sqsubset^\star c'$, a contradiction with the maximality of $c$. If it is (4), then $(a,\mathbf r)=(b,\mathbf s)$ and $c\boldsymbol{\rightarrow}d'\boldsymbol{\leftarrow}c'\boldsymbol{\rightarrow}d$ form a zigzag. By (4) we again get $c\sqsubset^\star c'$. In case the reason for $d'\sqsubset^\star d$ is ($5^\star$), the same item gives $c\sqsubset^\star c'$. (Here we need the assumption that $C\nsubseteq A$, otherwise we could have $c=a$, $c'=b$, $b\prec a$ and 
$c'\sqsubset^\star c$ by (1) even though $(\mathbf r,a)\prec_\mathrm{LEX}(\mathbf s,b)$.) 
\end{proof}

\subsubsection*{Proof of Theorem \ref{thm:preserved_conditions}.}
\noindent Let $\Sigma$ be a set of identities in operation symbols
$\{f_\lambda:\lambda\in\Lambda\}$ satisfying the assumptions. Let
$\{f^\mathbb A_\lambda\mid\lambda\in\Lambda\}$ and $\{f^\mathbb
Z_\lambda\mid\lambda\in\Lambda\}$ be interpretations of the operation
symbols witnessing $\mathbb A\models\Sigma$ and $\mathbb
Z\models\Sigma$, respectively.

We will now define polymorphisms $\{f^{\mathcal D(\mathbb
  A)}_\lambda\mid\lambda\in\Lambda\}$ witnessing that $\mathcal
D(\mathbb A)\models\Sigma$. Fix $\lambda\in\Lambda$ and assume that
$f_\lambda$ is $m$-ary. We split the definition of $f^{\mathcal
  D(\mathbb A)}_\lambda$ into several cases and subcases. Let $\mathbf
c\in \mathcal D(\mathbb A)^m$ be an input tuple.

\medskip

\noindent\textbf{Case 1.} $\mathbf c\in A^m\cup R^m$.

\smallskip

\noindent\underline{1a}\quad If $\mathbf c\in A^m$, we define
$f^{\mathcal D(\mathbb A)}_\lambda(\mathbf c)=f^\mathbb
A_\lambda(\mathbf c)$.

\smallskip

\noindent\underline{1b}\quad If $\mathbf c\in R^m$, we define
$f^{\mathcal D(\mathbb A)}_\lambda(\mathbf c)=(f^\mathbb
A_\lambda)^{(k)}(\mathbf c)$.

\smallskip

\noindent\textbf{Case 2.} $\mathbf c\in\Delta_m\setminus(A^m\cup R^m)$.

\smallskip

\noindent Let $c_i\in\mathbb P_{e_i}$ and define $e=(f^\mathbb
A_\lambda)^{(k+1)}(e_1,\dots,e_m)$. Let $l\in[k]$ be minimal such that
$c_i\in\mathbb P_{e_i,l}$ for all $i\in[m]$. (Its existence is
guaranteed by Lemma \ref{lem:diagonalcomponent} (i).)

\smallskip

\noindent\underline{2a}\quad If $\mathbb P_{e,l}$ is a single edge,
then we define $f^{\mathcal D(\mathbb A)}_\lambda(\mathbf c)$ to be
the vertex from $\mathbb P_{e,l}$ having the same level as all the
$c_i$'s.

\smallskip 

\noindent If $\mathbb P_{e,l}$ is a zigzag, then at least one of the
$\mathbb P_{e_i,l}$'s is a zigzag as well. (This follows from the
construction of $\mathcal D(\mathbb A)$ and the fact that $f^\mathbb A_\lambda$ preserves $R$.)
For every $i\in[m]$ such that $\mathbb P_{e_i,l}$ is a zigzag let
$\Phi_i:\mathbb P_{e_i,l}\to\mathbb Z$ be the (unique)
isomorphism. Let $\Phi$ denote the isomorphism from $\mathbb P_{e,l}$
to $\mathbb Z$.

\smallskip

\noindent\underline{2b}\quad If all of the $\mathbb P_{e_i,l}$'s are
zigzags, then the value of $f^{\mathcal D(\mathbb A)}_\lambda$ is
defined as follows:
$$
f^{\mathcal D(\mathbb A)}_\lambda(\mathbf c)=\Phi^{-1}(f^\mathbb
Z_\lambda(\Phi_1(c_1),\dots,\Phi_m(c_m))).
$$

\smallskip

\noindent\underline{2c}\quad Otherwise, we define $f^{\mathcal D(\mathbb
  A)}_\lambda(\mathbf c)$ to be the $\sqsubseteq$-minimal element from the set 
$$
\{\Phi^{-1}(\Phi_i(c_i))\mid\mathbb P_{e_i,l}\text{ is a zigzag}\}.
$$
(Equivalently, $f^{\mathcal D(\mathbb
  A)}_\lambda(\mathbf c)=\Phi^{-1}(z)$, where $z$ is the $\leq_\mathbb Z$-minimal element from 
 $\{\Phi_i(c_i)\mid\mathbb P_{e_i,l}\text{ is a zigzag}\}$.)

\smallskip

\noindent\textbf{Case 3.} $\mathbf c\notin\Delta_m$.

\smallskip

\noindent\underline{3a}\quad If $|\{\lvl(c_i)\mid i\in[m]\}|=1$ and the $c_i$'s lie on precisely two paths (say, $\{c_1,\dots,c_m\}\subseteq\mathbb P_e\cup\mathbb P_{e'}$ with $e\prec_{LEX} e'$, the lexicographic order of $A\times R$), 
then we define the mapping $\Psi:\{c_1,\dots,c_m\}\to\{00,10\}$ as follows:
$$
\Psi(c_i)=\begin{cases}
00 & \text{if }c_i\in\mathbb P_e,\\
10 & \text{if }c_i\in\mathbb P_{e'}.
\end{cases}
$$
We define $f^{\mathcal D(\mathbb A)}_\lambda(\mathbf c)$ to be the $\sqsubseteq$-minimal element from the set 
$$
\{c_i:\Psi(c_i)=f^\mathbb Z_\lambda(\Psi(c_1),\dots,\Psi(c_m))\}.
$$

\smallskip

\noindent\underline{3b}\quad If $|\{\lvl(c_i)\mid i\in[m]\}|=2$ 
(say, $\lvl(c_i)\in\{l,l'\}$ for all $i\in[m]$ and $l<l'$), 
then we define the mapping $\Theta:\{c_1,\dots,c_m\}\to\{00,10\}$ as follows:
$$
\Theta(c_i)=\begin{cases}
00 & \text{if }\lvl(c_i)=l\\
10 & \text{if }\lvl(c_i)=l'.
\end{cases}
$$
We set $z=f^\mathbb Z_\lambda(\Theta(c_1),\dots,\Theta(c_m))$ and $C'=\{c_i:\Theta(c_i)=z\}$ and define
$$
f^{\mathcal D(\mathbb A)}_\lambda(\mathbf c)=\begin{cases}
\text{the $\sqsubseteq$-minimal element from }C' & \text{if }z=00\\
\text{the $\sqsubseteq^\star$-maximal element from }C' & \text{if }z=10.
\end{cases}
$$

\smallskip

\noindent\underline{3c}\quad In all other cases we define $f^{\mathcal
  D(\mathbb A)}_\lambda(\mathbf c)$ to be the $\sqsubseteq$-minimal
element from the set $\{c_1,\dots,c_m\}$.

\medskip

While the construction is a bit technical, the ideas behind it are not so complicated. Case 1 gives us no choice. In Case 2 we use $f^\mathbb A$ to determine on which path $\mathbb P_e$ should the result lie, and we are left with a choice of at most two possible elements (when $\mathbb P_{e,l}$ is a zigzag). In Case 3 we cannot use $f^\mathbb A$ anymore. Instead, we choose the result as a minimal element from (a subset of) the input elements under a suitable linear order $\sqsubseteq$. This choice typically does not depend on order or repetition of the input elements, which allows us to satisfy balanced identities ``for free''. The trickiest part is to deal with connected components which can contain tuples with just two distinct elements, as these can play a role in some non-balanced identity (in two variables) which we need to satisfy. We need to employ $f^\mathbb Z$ to choose from two possibilities: a result which is the right element (in subcase 2c), from the right path (in 3a) or from the right level (in 3b). We then use $\sqsubseteq$ to choose the result from the ``good'' elements (and as a technical nuisance, to maintain the polymorphism property, in 3b we sometimes need to use $\sqsubseteq^\star$-maximal elements instead).

We need to verify that the operations we constructed are polymorphisms
and that they satisfy all identities from $\Sigma$. We divide the
proof into three claims.

\begin{clm}
  For every $\lambda\in\Lambda$, $f^{\mathcal D(\mathbb A)}_\lambda$
  is a polymorphism of $\mathcal D(\mathbb A)$.
\end{clm}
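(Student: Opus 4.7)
The plan is to verify, for an arbitrary edge $\mathbf c \to \mathbf d$ in $\mathcal D(\mathbb A)^m$ (so $c_i \to d_i$ for every $i$), that $f^{\mathcal D(\mathbb A)}_\lambda(\mathbf c) \to f^{\mathcal D(\mathbb A)}_\lambda(\mathbf d)$ by a case analysis on which subcase of the definition each of $\mathbf c, \mathbf d$ falls into. Two structural reductions cut down the bookkeeping drastically. First, an edge in $\mathcal D(\mathbb A)$ shifts levels by one, so the multiset of levels of $\mathbf d$ is that of $\mathbf c$ shifted by one; in particular, the number of distinct levels is invariant under the edge relation. Second, since $\mathbf c$ and $\mathbf d$ lie in a common connected component of $\mathcal D(\mathbb A)^m$, Lemma \ref{lem:diagonalcomponent} forces any same-level tuple that occurs in the edge to sit in $\Delta_m$, for otherwise it would be an isolated singleton component and admit no edges at all. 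Consequently: Cases 1 and 2 only meet Cases 1 or 2; Case 3a describes isolated tuples and is vacuous for the polymorphism check; and Case 3b (respectively 3c) only sees edges going to Case 3b (respectively 3c).

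For the $\Delta_m$ regime, the ``image path'' is governed by $e = (f^\mathbb A_\lambda)^{(k+1)}$ applied componentwise to the pairs $e_i = (a_i, \mathbf r^i) \in A \times R$, which yields a valid $e \in A \times R$ because $f^\mathbb A_\lambda$ is a polymorphism of $\mathbb A$ and so preserves both $A$ and $R$. The subcase transitions $1\mathrm a \to 2$, $2 \to 2$ and $2 \to 1\mathrm b$ are then checked by tracking the minimal index $l$ at $\mathbf c$ and at $\mathbf d$ and distinguishing whether the relevant segment $\mathbb P_{e,l}$ is a single edge (subcase 2a, where the output is determined by level alone) or a zigzag. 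In the zigzag case, subcase 2b invokes that $f^\mathbb Z_\lambda$ is a polymorphism of $\mathbb Z$ directly, while subcase 2c uses a $\sqsubseteq$-minimum that, combined with the fact (noted above Lemma \ref{lemma:zigzag}) that $\{00,10\}$ and $\{01,11\}$ are $\mathbb Z$-subuniverses, preserves the edge.

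For the non-$\Delta_m$ regime, Lemma \ref{lemma:the_orders} is the workhorse. In Case $3\mathrm c \to 3\mathrm c$, the $\sqsubseteq$-minimum entries of $\mathbf c$ and $\mathbf d$ form an edge by Lemma \ref{lemma:the_orders}(i), with the hypothesis $D \nsubseteq R$ satisfied because a Case 3c tuple has entries at more than one level (or on more than two paths), so $\mathbf d$ cannot lie entirely in $R^m$. In Case $3\mathrm b \to 3\mathrm b$, the maps $\Theta$ on $\mathbf c$ and $\mathbf d$ agree coordinate by coordinate (since $\lvl(d_i) = \lvl(c_i)+1$ makes the two-level partition shift uniformly), so $f^\mathbb Z_\lambda$ returns the same $z \in \{00, 10\}$ on both, and the sets $C'_{\mathbf c}$ and $C'_{\mathbf d}$ correspond entry-wise under the edge. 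Applying Lemma \ref{lemma:the_orders}(i) when $z = 00$ and Lemma \ref{lemma:the_orders}(ii) when $z = 10$ yields the required edge between minima (respectively $\sqsubseteq^\star$-maxima); the introduction of $\sqsubseteq^\star$ and the ``$C \nsubseteq A$'' hypothesis of Lemma \ref{lemma:the_orders}(ii) were tailored precisely for this branch.

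The main obstacle is not any single subcase but the management of sub-case boundaries within the $\Delta_m$ regime: as the common level of $\mathbf c$ varies, the minimal index $l$ (and hence whether subcase 2a, 2b or 2c applies) can shift between $\mathbf c$ and $\mathbf d$, and edges that cross the boundary between Case 1 (output forced by $f^\mathbb A_\lambda$) and Case 2 (output on a path controlled by the same $f^\mathbb A_\lambda$-computed $e$) require a careful check that the output advances by exactly one level. This bookkeeping, while routine, is the most delicate part of the verification.
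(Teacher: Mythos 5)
Your proposal follows essentially the same route as the paper: reduce to checking $f^{\mathcal D(\mathbb A)}_\lambda(\mathbf c)\to f^{\mathcal D(\mathbb A)}_\lambda(\mathbf d)$ along each edge, use Lemma~\ref{lem:diagonalcomponent} to rule out outgoing edges from subcases 1b and 3a and to confine the $\Delta_m$-transitions to $1\mathrm a\to 2$, $2\to 2$, $2\to 1\mathrm b$, and close subcases 3b and 3c with Lemma~\ref{lemma:the_orders}(i)/(ii) exactly as the paper does (including the tailored use of $\sqsubseteq^\star$ when $z=10$). The two small points worth noting: for subcase 2c the paper again invokes Lemma~\ref{lemma:the_orders}(i) rather than a subuniverse argument --- your appeal to $\{00,10\},\{01,11\}$ being $\mathbb Z$-subuniverses is a mild red herring, since the $\leq_{\mathbb Z}$-minimum of same-level zigzag vertices lands in the right pair for level reasons, not by polymorphism-closure --- and the ``$l$-shift'' bookkeeping you flag as the main obstacle is settled in the paper by observing that at a segment boundary every $c_i$ equals $\iota\mathbb P_{e_i,l}$, forcing $f^{\mathcal D(\mathbb A)}_\lambda(\mathbf c)=\iota\mathbb P_{e,l}$ (using idempotency of $f^{\mathbb Z}_\lambda$ in subcase 2b), a step you gesture at but do not carry out.
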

\begin{proof}
  Let $\mathbf c\boldsymbol\rightarrow\mathbf d$ be an edge in
  $\mathcal D(\mathbb A)^m$. Note that $\mathbf c\in\Delta_m$ if and
  only if $\mathbf d\in\Delta_m$. The tuple $\mathbf c$ cannot fall
  under subcase \underline{1b} or under \underline{3a}, because these
  cases both prevent an outgoing edge from $\mathbf c$ (see Lemma
  \ref{lem:diagonalcomponent} (ii) for why this is true for
  \underline{3a}).

  We first consider the situation where $\mathbf c$ falls under
  subcase \underline{1a} of the definition. Then $\mathbf d$ falls
  under case 2 and, moreover, $d_i=\iota\mathbb P_{e_i,1}$ for all
  $i\in[m]$. It is not hard to verify that $f^{\mathcal D(\mathbb
    A)}_\lambda(\mathbf d)=\iota\mathbb P_{e,1}$. (In subcase
  \underline{2b} we need the fact that $f^\mathbb Z_\lambda$ is
  idempotent.) Therefore $f^{\mathcal D(\mathbb A)}_\lambda(\mathbf
  c)=\iota\mathbb P_e\boldsymbol{\rightarrow}\iota\mathbb
  P_{e,1}=f^{\mathcal D(\mathbb A)}_\lambda(\mathbf d)$ and the
  polymorphism condition holds. The argument is similar when $\mathbf
  d$ falls under subcase \underline{1b} (and so $\mathbf c$ under case
  2).

  Consider now that $\mathbf c$ falls under case 2. Then $\mathbf d$
  falls either under subcase \underline{1b}, which was handled in the
  above paragraph, or also under case 2. The elements $e_1,\dots,e_m$
  and $e$ are the same for both $\mathbf c$ and $\mathbf d$. By Lemma
  \ref{lem:diagonalcomponent} (i), there exists $l\in[k]$ such that
  $c_i,d_i\in\mathbb P_{e_i,l}$ for all $i\in[m]$.

  If the value of $l$ is also the same for both $\mathbf c$ and
  $\mathbf d$, then $f^{\mathcal D(\mathbb A)}_\lambda(\mathbf
  c)\boldsymbol{\rightarrow}f^{\mathcal D(\mathbb A)}_\lambda(\mathbf
  d)$ follows easily; in subcase \underline{2a} trivially, in
  \underline{2b} from the fact that $f^\mathbb Z_\lambda$ is a
  polymorphism of $\mathbb Z$ and in \underline{2c} from Lemma \ref{lemma:the_orders}.

  It may be the case that this $l$ is not minimal for the tuple
  $\mathbf c$, that is, that $c_i\in\mathbb P_{e_i,l-1}\cap\mathbb P_{e_i,l}$ for all
  $i\in[m]$. But then $c_i=\tau\mathbb
  P_{e_i,l-1}=\iota\mathbb P_{e_i,l}$ and thus $f^{\mathcal D(\mathbb
    A)}_\lambda(\mathbf c)=\iota\mathbb P_{e_i,l}$ (again, using
  idempotency of $f^\mathbb Z_\lambda$ in subcase
  \underline{2b}). Knowing this allows for the same argument as in the
  above paragraph.

  If $\mathbf c$ falls under one of the subcases \underline{3b} or
  \underline{3c}, then $\mathbf d$ falls under the same subcase. In subcase \underline{3c} we apply  \ref{lemma:the_orders} (i) with $\{c_1,\dots,c_m\}$ and $\{d_1,\dots,d_m\}$ in the roles of $C$ and $D$, respectively. In subcase \underline{3b} our construction ``chooses'' either the lower
    or the higher level, and it is easy to see that this choice (i.e., the element $z$) is the
  same for both $\mathbf c$ and $\mathbf d$. We then apply Lemma \ref{lemma:the_orders} (i) or (ii) (depending on $z$, note that the assumptions are satisfied) with $C'=\{c_i:\Theta(c_i)=z\}$ and $D'=\{d_i:\Theta(d_i)=z\}$ in the role of $C$ and $D$, respectively. In both cases we get $f^{\mathcal D(\mathbb A)}_\lambda(\mathbf
    c)\boldsymbol{\rightarrow}f^{\mathcal D(\mathbb A)}_\lambda(\mathbf
    d)$.
\end{proof}

\begin{clm}
  The $f^{\mathcal D(\mathbb A)}_\lambda$'s satisfy every balanced
  identity from $\Sigma$.
\end{clm}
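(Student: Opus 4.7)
Since $\Sigma$ is linear, a balanced identity $u\approx v\in\Sigma$ may be assumed to take the form $u=f_\lambda(x_{i_1},\dots,x_{i_m})$ and $v=f_\mu(x_{j_1},\dots,x_{j_{m'}})$ with $\{x_{i_1},\dots,x_{i_m}\}=\{x_{j_1},\dots,x_{j_{m'}}\}$ as sets (the trivial and idempotent cases being immediate). Fix $\mathbf c\in\mathcal D(\mathbb A)^n$. My plan is to show that $u^{\mathcal D(\mathbb A)}(\mathbf c)$ and $v^{\mathcal D(\mathbb A)}(\mathbf c)$ are both produced by the same subcase of the construction and evaluate to the same element.

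First I would verify that both sides trigger the same subcase. The defining conditions (membership in $A^m$ or $R^m$, diagonality in $\Delta_m$ as controlled by Lemma~\ref{lem:diagonalcomponent}, the set of levels attained, the set of traversed paths) all depend only on the set $V=\{c_i:x_i\in\text{vars}(u)\}$, and balancedness forces $V$ to be the same for $u$ and $v$. The index $l$ in Case~2 also depends only on $V$. In the homogeneous subcases \underline{1a} and \underline{1b} this already closes the argument, since on the subuniverses $A$ and $R$ the polymorphisms $f^{\mathcal D(\mathbb A)}_\lambda,f^{\mathcal D(\mathbb A)}_\mu$ equal $f^\mathbb A_\lambda,f^\mathbb A_\mu$ (or their coordinatewise lifts) and $\mathbb A\models u\approx v$.

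In the remaining subcases (\underline{2a}--\underline{2c} and \underline{3a}--\underline{3c}) the output is built from two ingredients: an application of $f^\mathbb A$ or $f^\mathbb Z$ to an input-derived profile --- the tuple of path-labels $e_i$, or the tuple of $\Phi_i(c_i)$, $\Psi(c_i)$, $\Theta(c_i)$ values in $\mathbb Z$ --- and a $\sqsubseteq$-minimum or $\sqsubseteq^\star$-maximum of a set determined by that profile. The first ingredient yields equal outputs on the two sides because $\mathbb A\models u\approx v$ (applied coordinatewise via $(f^\mathbb A)^{(k+1)}$ to the $e_i\in A\times R\subseteq A^{k+1}$) and because $\mathbb Z\models u\approx v$ by Lemma~\ref{lemma:zigzag}(ii). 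The second ingredient agrees because the set on which the extremum is taken is a function of $V$ alone, and $\sqsubseteq$- and $\sqsubseteq^\star$-extrema are insensitive to multiplicity and order.

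The main obstacle is the bookkeeping in Case~2: one has to check in subcases \underline{2a}, \underline{2b}, \underline{2c} separately that the target path $\mathbb P_e$ picked by $f^\mathbb A$ coincides on both sides before one can compare positions on that path, and one has to handle the interaction between the $f^\mathbb A$-choice (of path) and the $f^\mathbb Z$-choice (of position on a zigzag segment) within the same output. Once the target data have been aligned, the remaining comparison reduces cleanly to the already-established identities in $\mathbb A$ and $\mathbb Z$ together with the set-insensitivity of the $\sqsubseteq$ and $\sqsubseteq^\star$ extrema.
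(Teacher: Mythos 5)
Your proposal takes essentially the same approach as the paper's proof: show that both evaluations fall under the same subcase of the definition, then verify equality subcase-by-subcase using $\mathbb A\models\Sigma$, $\mathbb Z\models\Sigma$, and the fact that the $\sqsubseteq$- and $\sqsubseteq^\star$-extrema depend only on the underlying \emph{set} of input elements.

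One small wrinkle worth flagging: your blanket claim that the subcase selection ``depends only on the set $V$'' is not literally true. The split between subcase \underline{2a} and \underline{2b}/\underline{2c} is governed by whether $\mathbb P_{e,l}$ is a single edge, and $e=(f^\mathbb A_\lambda)^{(k+1)}(e_1,\dots,e_m)$ is not a function of the set $V$ alone --- it depends on the polymorphism and the ordered tuple. The paper is explicit about this: the subcase depends on the element set \emph{except} in case 2, where $e$ also matters, and there it invokes $\mathbb A\models\Sigma$ (not balancedness) to conclude $e$ is the same on both sides. You do notice that $e$ must be shown to coincide, but you frame it as a comparison issue \emph{within} a subcase rather than a prerequisite for knowing the two sides land in the same subcase at all. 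Once that point is reorganized, your argument matches the paper's.
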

\begin{proof}
  Let $f_\lambda(\mathbf u)\approx f_\mu(\mathbf v)\in\Sigma$ be a
  balanced identity in $s$ distinct variables $\{x_1,\dots,x_s\}$. Let
  $\mathcal E:\{x_1,\dots,x_s\}\to\mathcal D(\mathbb A)$ be some
  evaluation of the variables. Let $\mathbf u^\mathcal E$ and $\mathbf
  v^\mathcal E$ denote the corresponding evaluation of these tuples.

  Note that both $f^{\mathcal D(\mathbb A)}_\lambda(\mathbf u^\mathcal
  E)$ and   $f^{\mathcal D(\mathbb A)}_\mu(\mathbf v^\mathcal E)$ fall under the same subcase of
  the definition. The subcase to be applied depends only on the set of
  elements occuring in the input tuple, except for case two, where the
  choice of $e$ matters as well. However, since the identity
  $f_\lambda(\mathbf u)\approx f_\mu(\mathbf v)$ holds in $\mathbb A$,
  this $e$ is the same for both $\mathbf u^\mathcal E$ and $\mathbf
  v^\mathcal E$. Therefore, to verify that $f^{\mathcal D(\mathbb
    A)}_\lambda(\mathbf u^\mathcal E)=f^{\mathcal D(\mathbb
    A)}_\mu(\mathbf v^\mathcal E)$, it is enough to consider the
  individual subcases separately.

  In case 1 it follows immediately from the fact that the identity
  holds in $\mathbb A$. In case 2 it is easily seen that both
  $f^{\mathcal D(\mathbb A)}_\lambda(\mathbf u^\mathcal E)$ and 
  $f^{\mathcal D(\mathbb A)}_\mu(\mathbf v^\mathcal E)$ have the same level, and since the
  identity holds in $\mathbb A$, they also lie on the same path
  $\mathbb P_{e,l}$. To see that these two elements are equal, note
  that in subcase \underline{2a} it is trivial, in \underline{2b} it
  follows directly from the fact that the identity holds in $\mathbb
  Z$, and in \underline{2c} we use the fact that the identity is
  balanced: they are both the $\sqsubseteq$-minimal element of the same
  set.
  
  Similar arguments can be used in case 3. In \underline{3a} we choose one of the paths $\mathbb P_e$, $\mathbb P_{e'}$; the choice is the same because $f_\lambda(\mathbf u)\approx f_\mu(\mathbf v)$ holds in $\mathbb Z$. Both $f^{\mathcal D(\mathbb A)}_\lambda(\mathbf u^\mathcal E)$ and $f^{\mathcal D(\mathbb A)}_\mu(\mathbf v^\mathcal E)$ then evaluate to the same element, namely the $\sqsubseteq$-minimal element from $\{\mathcal E(x_1),\dots,\mathcal E(x_s)\}$ intersected with the chosen path.  In \underline{3b} the chosen level is the
  same for both of them (since the identity holds in $\mathbb Z$) and
  they are both the $\sqsubseteq$-minimal, or $\sqsubseteq^\star$-maximal, element of the set of elements
  from $\{\mathcal E(x_1),\dots,\mathcal E(x_s)\}$ lying on that
  level. In \underline{3c} both are the $\sqsubseteq$-minimal element of
  the same set $\{\mathcal E(x_1),\dots,\mathcal E(x_s)\}$.
\end{proof}

\begin{clm}
  The $f^{\mathcal D(\mathbb A)}_\lambda$'s satisfy every identity
  from $\Sigma$ in at most two variables.
\end{clm}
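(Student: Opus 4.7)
The plan is to extend the case-by-case analysis from the previous (balanced-identity) claim to cover identities in at most two variables, leaning more heavily on $\mathbb Z\models\Sigma$ in the subcases where the construction makes a non-symmetric choice between input elements. Fix an identity $f_\lambda(\mathbf u)\approx f_\mu(\mathbf v)\in\Sigma$ using at most two variables. If only one variable occurs, idempotency of $f^{\mathcal D(\mathbb A)}_\lambda,f^{\mathcal D(\mathbb A)}_\mu$ (already supplied by the balanced-identities claim, since $f(x,\dots,x)\approx x$ is balanced) gives the identity. Otherwise fix variables $\{x,y\}$ and an evaluation $\mathcal E$ with $\mathcal E(x)=c$, $\mathcal E(y)=d$, assuming $c\neq d$.

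The key observation is that each subcase in the definition of $f^{\mathcal D(\mathbb A)}_\lambda$ is triggered by features that depend only on the value-set of the input tuple (together with level- and path-information of those values); the order and multiplicities of occurrences do not matter. Since $\mathbf u^\mathcal E$ and $\mathbf v^\mathcal E$ share the value-set $\{c,d\}$ (the degenerate $\{c\}$ on either side being absorbed by idempotency), both sides fall into the same subcase. In Case~2 the output-path parameter $e=(f^\mathbb A_\lambda)^{(k+1)}(e_1,\dots,e_m)$ then agrees on the two sides, because $\mathbb A\models\Sigma$ applied coordinatewise to the $e$-values of the two variables forces equality; the minimum level $l$ likewise depends only on $\{c,d\}$.

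The verification then proceeds subcase by subcase, in direct analogy with the balanced-identity claim. Case~1 is immediate from $\mathbb A\models\Sigma$; Case~2a returns the unique vertex on the common $\mathbb P_{e,l}$ at the shared level; Case~2b invokes $\mathbb Z\models\Sigma$ on $(\Phi_1(c_1),\dots,\Phi_m(c_m))$; and Cases~2c and~3c output the $\sqsubseteq$-minimum of a set determined only by $\{c,d\}$ (and by the already-matched $e,l$ in~2c). Cases~3a and~3b are the most interesting: there, the colorings $\Psi$ and $\Theta$ turn the input tuple into a $\{00,10\}$-valued evaluation of $\mathbf u$ (resp.\ $\mathbf v$) under $x\mapsto 00$, $y\mapsto 10$, so $\mathbb Z\models\Sigma$ forces $f^\mathbb Z_\lambda$ and $f^\mathbb Z_\mu$ to return the same target $z\in\{00,10\}$; the subsequent $\sqsubseteq$-min (or $\sqsubseteq^\star$-max in~3b when $z=10$) is then taken from a set of at most two elements determined by $\{c,d\}$ and $z$, so both sides pick the same vertex.

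The main delicacy lies in Case~2c: one must verify that the set $\{\Phi^{-1}(\Phi_i(c_i))\mid\mathbb P_{e_i,l}\text{ is a zigzag}\}$ really depends only on $\{c,d\}$ and not on the positions $i$. This reduces to checking that, for a canonical choice of path-representative $e_i$ at each vertex, both the zigzag-versus-single-edge status of $\mathbb P_{e_i,l}$ and the element $\Phi^{-1}(\Phi_i(c_i))$ are functions of the value $c_i$ alone; this is built into the construction of $\mathcal D(\mathbb A)$. Once confirmed, every subcase yields equality, completing the proof of the claim and hence of Theorem~\ref{thm:preserved_conditions}.
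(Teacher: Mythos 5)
Your proposal follows the same broad case-by-case strategy as the paper, but it has a genuine gap at the crux of the matter.

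The balanced identities in $\Sigma$ are already covered by the preceding claim, so the only content of this claim is the \emph{unbalanced} two-variable case. Using idempotency, the paper reduces this to identities of the form $f_\lambda(\mathbf u)\approx x$ with $\mathbf u\in\{x,y\}^m$ and $\mathbf u$ using both variables, and then proves directly that $f^{\mathcal D(\mathbb A)}_\lambda(\mathbf c)=c$ for $\mathbf c\in\{c,d\}^m$. Your proof does not make this reduction and instead rests on the assertion that ``both sides fall into the same subcase'' because they share the value-set $\{c,d\}$. But in the unbalanced case the two sides do \emph{not} share the value-set: one side evaluates to a constant tuple in $\{c\}$ (and so returns $c$ by idempotency), while the other evaluates to a tuple with value-set $\{c,d\}$. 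These fall under different subcases, so the central premise of your argument fails precisely where it is needed.

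This is not merely a presentational issue; your treatment of subcase~3c is actually wrong for unbalanced identities. You say that in 3c the output is ``the $\sqsubseteq$-minimum of a set determined only by $\{c,d\}$.'' But the $\sqsubseteq$-minimum of $\{c,d\}$ may be $d$, in which case $f^{\mathcal D(\mathbb A)}_\lambda(\mathbf c)=d\neq c$ would \emph{violate} the identity $f_\lambda(\mathbf u)\approx x$. The paper's proof works because it first observes that a tuple $\mathbf c\in\{c,d\}^m$ with $c\neq d$ \emph{cannot} fall under subcase~3c: if $\mathbf c\notin\Delta_m$ then, since $\mathbf c$ involves at most two distinct elements, either all entries are at the same level (and, by Lemma~\ref{lem:diagonalcomponent}, lie on exactly two paths, which is 3a), or they span exactly two levels (which is 3b). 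You never establish this exclusion, and without it the argument breaks. Similarly, the handling of~2c requires noticing that one of $\mathbb P_{e,l},\mathbb P_{e',l}$ must be a single edge, forcing the choice set to be the singleton $\{c\}$ --- a point you leave implicit and do not actually verify.
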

\begin{proof}
  Balanced identities fall under the scope of the previous
  claim. Since $\Sigma$ is idempotent, we may without loss of
  generality consider only identities of the form $f_\lambda(\mathbf
  u)\approx x$, where $\mathbf u\in\{x,y\}^m$. Suppose that $x$ and
  $y$ evaluate to $c$ and $d$ in $\mathcal D(\mathbb A)$, respectively,
  and let $\mathbf c\in\{c,d\}^n$ be the corresponding evaluation of $\mathbf u$. We
  want to prove that $f^{\mathcal D(\mathbb A)}_\lambda(\mathbf
  c)=c$.

  The tuple $\mathbf c$ cannot fall into subcase \underline{3c} of the
  definition of $f^{\mathcal D(\mathbb A)}_\lambda$. If it falls into
  case 1, the equality follows from the fact that the identity holds
  in $\mathbb A$, while in subcases \underline{3a} and \underline{3b}
  we use the fact that it holds in $\mathbb Z$. (The linear orders $\sqsubseteq,\sqsubseteq^\star$ do not matter, since we only choose elements from singleton sets.)

  In case 2 it is easily seen that $f^{\mathcal D(\mathbb
    A)}_\lambda(\mathbf c)$ lies on the same path $\mathbb P_{e,l}$ as
  $c$ (using that the identity holds in $\mathbb A$) as well as on the
  same level of this path. In \underline{2a} it is trivial that
  $f^{\mathcal D(\mathbb A)}_\lambda(\mathbf c)=c$ while in
  \underline{2b} it follows from the fact that the identity holds in
  $\mathbb Z$. If $\mathbf c$ falls under subcase \underline{2c}, then $c\in\mathbb P_{e,l}$, which is a zigzag, and $d\in\mathbb P_{e',l}$, which must be a single edge. Therefore $f^{\mathcal D(\mathbb A)}_\lambda(\mathbf c)$ is defined to be the $\sqsubseteq$-minimal element from the singleton set $\{c\}$.
\end{proof}

\subsubsection*{Proof of Corollary \ref{cor:preserved_conditions}.}
\noindent All items are expressible by linear idempotent sets of
identities. In all items except (7) they are in at most two variables,
in item (7) the defining identities are balanced. It remains to check
that all these conditions are satisfied in the zigzag, which follows
from Lemma \ref{lemma:zigzag} (iii) for item (3), Lemma \ref{lemma:zigzag} (ii) for item (7) and Lemma \ref{lemma:zigzag} (i) for all other items.

\section{The logspace reduction of \texorpdfstring{$\CSP(\mathcal{D}(\A))$}{CSP(D(A))} to
  \texorpdfstring{$\CSP(\A)$}{CSP(A)}.}\label{sec:reversereduction}
\noindent In this section, we give the proof of Lemma
\ref{lem:reversereduction}, by showing that $\CSP(\mathcal{D}(\A))$
reduces in logspace to $\CSP(\A)$.  A sketch of a \emph{polymomial
  time} reduction is given in the proof of \cite[Theorem 13]{fedvar};
technically, that argument is for the special case where $\mathbb{A}$
is itself already a digraph, but the arguments can be broadened to
cover our case.  To perform this process in logspace is rather
technical, with many of the difficulties lying in details that are
omitted in the polymomial time description in the proof of
\cite[Theorem 13]{fedvar}.  We wish to thank Barnaby Martin for
encouraging us to pursue Lemma \ref{lem:reversereduction}.

The following theorem is an immediate consequence of Lemma
\ref{lem:forwardreduction} and Lemma~\ref{lem:reversereduction}. As
this improves the oft-mentioned polynomial time equivalence of general
CSPs with digraph CSPs, we now present it as stand-alone statement.
\begin{thm}\label{thm:logspace}
  Every fixed finite template CSP is logspace equivalent to the CSP
  over some finite digraph.
\end{thm}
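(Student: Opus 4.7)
The plan is straightforward: combine the two directional reductions already in hand. Given any finite relational structure $\mathbb{A}$, form the digraph $\mathcal{D}(\mathbb A)$ described in Section~\ref{sec:reduction}; by construction, $\mathcal{D}(\mathbb A)$ is a finite digraph (its only relation is the binary edge relation). It therefore suffices to exhibit logspace reductions in both directions between $\CSP(\mathbb A)$ and $\CSP(\mathcal D(\mathbb A))$.

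For the reduction of $\CSP(\mathbb A)$ to $\CSP(\mathcal D(\mathbb A))$, I would invoke Lemma~\ref{lem:forwardreduction}, which states that $\mathbb A$ is pp-definable from $\mathcal D(\mathbb A)$. Combined with Lemma~\ref{lem:logspace_reduction}, this immediately yields a logspace reduction of $\CSP(\mathbb A)$ to $\CSP(\mathcal D(\mathbb A))$. For the opposite direction, Lemma~\ref{lem:reversereduction} supplies the required logspace reduction of $\CSP(\mathcal D(\mathbb A))$ to $\CSP(\mathbb A)$ directly, so no further work is needed.

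Putting the two reductions together gives logspace equivalence of $\CSP(\mathbb A)$ with the CSP over the digraph $\mathcal D(\mathbb A)$, which is precisely the statement. There is no genuine obstacle at this stage: all of the technical content has been discharged in Lemmas~\ref{lem:forwardreduction} and \ref{lem:reversereduction}, with the latter being the substantive one (its proof, carried out in Section~\ref{sec:reversereduction}, is where the logspace bookkeeping has to be verified carefully). The present statement is merely the packaging of those two lemmas into a clean stand-alone equivalence with a digraph template.
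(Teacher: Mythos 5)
Your proof is correct and follows exactly the paper's own argument: the paper introduces Theorem~\ref{thm:logspace} as "an immediate consequence of Lemma \ref{lem:forwardreduction} and Lemma~\ref{lem:reversereduction}," with the forward direction obtained by combining Lemma~\ref{lem:forwardreduction} with Lemma~\ref{lem:logspace_reduction} and the reverse direction supplied directly by Lemma~\ref{lem:reversereduction}. Your observation that all the technical weight is carried by Lemma~\ref{lem:reversereduction} (proved in Section~\ref{sec:reversereduction}) matches the paper's presentation.
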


\subsection{Outline of the algorithm.}
Algorithms running in logspace are often quite technical, so we begin by giving a broad overview of how the algorithm produces an output.
We first assume that $\CSP(\mathbb{A})$ is itself not trivial (that
is, that there is at least one no instance and one yes instance): this
uninteresting restriction is necessary because
$\CSP(\mathcal{D}(\A))$ will have no instances always.  Also, let $n$
denote the height of $\mathcal{D}(\A)$ and $k$ the arity of the single
fundamental relation $R$ of $\A$: so, $n=k+2$. Recall that the
vertices of $\mathcal{D}(\A)$ include those of $A$ as well as the
elements in $R$.

Now
let $\G$ be an instance of $\CSP(\mathcal{D}(\A))$.   The algorithm produces a structure $\mathbb{B}$ of the same signature as $\mathbb{A}$ that is a YES instance of $\CSP(\mathbb{A})$ if and only if $\G$ is a YES instance of $\CSP(\mathcal{D}(\A))$.  We argue below that each component of $\G$ can be considered separately, and so for this reason we will assume here that $\G$ has a single connected component. The rough outline of the algorithm is as follows.

\begin{description}
\item[Stage 1] Test whether or not $\G$ is a balanced digraph of height at most $n$.  If not, some fixed no instance of $\CSP(\A)$
  is output.
\item[Stage 2] If the height of $\G$ is less than $n$, then we directly test if $\G$ is a YES or NO instance of
  $\CSP(\mathcal{D}(\A))$.  If YES then we return some fixed YES instance of $\CSP(\A)$.  If NO then we return some fixed NO
  instance of $\CSP(\A)$.
\item[Stage 3] If Stage 3 is reached then $\G$ is a balanced digraph of height $n$.  The goal is to output a structure $\mathbb B$ such that $\mathcal{D}(\B)$ is a YES instance of $\CSP(\mathcal{D}(\A))$ if and only if $\G$ is a YES instance of $\CSP(\mathcal{D}(\A))$.  The vertices of $\G$ at height $0$ can be thought of as being similar to the height $0$ vertices of $\mathcal{D}(\B)$ (which of course, are the actual vertices of $\B$), and vertices of $\G$ at height $n$ can be thought of as being similar  to the hyperedge vertices of $\mathcal{D}(\B)$.  In general though, the similarity between $\G$ and our desired $\mathcal{D}(\B)$ can be quite ``blurry'', with some distinct vertices of $\G$ corresponding to single vertices of $\mathcal{D}(\B)$, and other vertices of $\mathcal{D}(\B)$ simply missing from $\G$ completely.  
Because of this, we work in
  two steps. 
  \item[Stage 3A]  From $\G$ we output a list of ``generalized hyperedges''.  These are $k$-tuples consisting of sets of vertices of $\G$ plus some newly added vertices.  Moreover, they sometimes include a labelling to record how they were created.
\item[Step 3B] The output structure $\B$ is constructed from the
  generalized hyperedges in 3A.  To create $\B$, numerous undirected graph
  reachability checks are performed.  The final ``vertices'' of $\B$
  are in fact sets of vertices of $\G$, so that the generalized
  hyperedges become actual hyperedges in the conventional sense
  ($k$-tuples of ``vertices'', now consisting of sets of vertices of
  $\G$).  This may be reduced to an adjacency matrix description as a
  separate logspace process, but that is routine.
\end{description}
\medskip

\noindent Stage 1 is described in Subsection~\ref{subsec:1}, while Stage 2 is
described in Subsection~\ref{subsec:2}.  The most involved part of the
algorithm is Stage 3A.  In Subsection \ref{subsec:3} we give some
preliminary discussion on how the process is to proceed: an
elaboration on the item listed in the present subsection.  In
particular a number of definitions are introduced to aid the
description of Stage 3A.  \ The actual subroutine for Stage~3A is detailed in
Subsection~\ref{subsec:3A}.  Step~3B is described in Subsection~\ref{subsec:3B}.  
After a brief discussion of why the algorithm is a
valid reduction from $\CSP(\mathcal{D}(\A))$ to $\CSP(\A)$, we present
an example of Stages 3A and 3B in action.  This example may be a
useful reference while reading Subsection~\ref{subsec:3A} and
\ref{subsec:3B}.

Before we begin describing the algorithm we recall some basic logspace
processes that we will use frequently.
\subsection{Subroutines}\label{subsec:routines}
The algorithm we describe makes numerous calls on other log\-space
computable processes; more formally, it can be implemented on
an oracle machine with several query tapes each an oracle for some known logspace problem. It is well known that
$\texttt{L}^\texttt{L}=\texttt{L}$ (logspace on machine with logspace oracle is the same as logspace), and this enables all of the query
tapes to be eliminated within logspace.  For the sake of clarity, we
briefly recall some basic information on logspace on an oracle
machine.  An oracle program with logspace query language $U$ has
access to an input tape, a working tape (or tapes), an output tape and a query tape.
Unlimited reading may be performed on the input, but no writing.
Unlimited writing may be performed on the output tape, but no reading.
Unlimited writing may be done to the query tape, but no reading, \emph{except} that once
the query state is reached, the current word written to the
query tape is tested for membership in the language $U$ (at the cost
of one step of computation), and a (correct) answer of either yes or
no is received by the program.  The query tape is then immediately
erased at no further cost.  The space used is measured \emph{only from the working tape},
where both reading and writing is allowed.  If such a program runs in
logspace, then it can be emulated by an actual logspace program (with
no oracle).  The argument
is essentially the fact that a composition of logspace reductions is a
logspace reduction: each query to the oracle (of a string $w$ for
instance) during the computation is treated as a fresh instance of a
reduction to the membership problem of $U$, which is then composed
with the logspace algorithm for $U$.  As usual for composition of logspace programs, this would be performed without
ever writing any more than around one symbol of $w$ at a time, plus a counter recording the bit position (see  Papadimitriou \cite{1994-papadimitriou} for example).  This is why space used on the oracle tape does not
matter in the oracle formulation of logspace, and why we may assume
that the query tape may be erased after completion of the query.

We now describe the basic checks that are employed
during our algorithm.

\begin{description}
\item[Undirected reachability] Given an undirected graph and two
vertices $u,v$, there is a logspace algorithm to determine if $u$ is
reachable from $v$ (Reingold \cite{DBLP:conf/stoc/Reingold05,rei08}).  In the case of a
directed graph we may use this to determine if two vertices are
connected by some oriented path: it is simply undirected reachability in the underlying graph.  As an example, consider a binary relation $\beta$ on a finite set $X$ with the property that recognising membership of pairs in $\beta$ can be decided in logspace.  Let $\bar\beta$ be the smallest equivalence relation containing $\beta$.  Now, membership of a pair in $\bar\beta$ is simply connectivity in the underlying graph of $\beta$.  Hence, given any $x\in X$, we may in logspace output the lexicographically earliest vertex from the equivalence class of $x$ modulo $\bar\beta$: simply search through $x$ testing for $\bar\beta$-relatedness with $x$, and output the first vertex that returns a positive answer.

A second process we frequently perform is reachability checks 
involving edges that are not precisely those of the current input digraph. 
A typical instance might be where we have some fixed
vertex $u$ in consideration, and we wish to test if some vertex $v$
can be reached from $u$ by an oriented path consisting only of vertices satisfying some property $\mathcal{Q}$, where $\mathcal{Q}$ is a logspace
testable property.  This is undirected graph reachability,
except that as well as ignoring the edge direction, we must also ignore any
vertex failing property $\mathcal{Q}$.  This can be performed in logspace 
on an oracle machine running
an algorithm for undirected graph reachability and whose query tape
tests property $\mathcal{Q}$.

\item[Component checking] Undirected graph reachability is also
fundamental to checking properties of induced subgraphs.  In a typical
situation we have some induced subgraph $C$ of $\G$ (containing some
vertex $u$, say) and we want to test if it satisfies some property
$\mathcal{P}$.  Membership of vertices in $C$ is itself determined by
some property $\mathcal{Q}$, testable in logspace.  It is convenient
to assume that the query tape for $\mathcal{P}$ expects inputs that
consist of a list of directed edges.  We may construct a list of the directed edges
in the component $C$ on a logspace machine with query tapes for
$\mathcal{P}$, for undirected graph reachability and for
$\mathcal{Q}$.  We write $C$ to the query tape for $\mathcal{P}$ as
follows.  Systematically enumerate pairs of vertices $v_1,v_2$ of $\G$
(re-using some fixed portion of work tape for each pair), in each case
testing for undirected reachability of both $v_1$ and $v_2$ from $u$,
and also for satisfaction of property $\mathcal{Q}$.  If both are
reachable, and if $(v_1,v_2)$ is an edge of $\G$ then we output the
edge $(v_1,v_2)$ to the query tape for $\mathcal{P}$.  After the last
pair has been considered, we may finally query~$\mathcal{P}$.

\item[Testing for interpretability in paths] By an \emph{interpretation} of a digraph $C$ in another digraph $\Q$ we mean simply a graph homomorphism from $C$ to $\Q$.
The basic properties we wish to
test of components usually concern interpretability within some fixed finite
family of directed paths.  We consider the paths $\Q_S$, where $S$ is
some subset of $[k]=\{1,\dots,k\}$: recall (Section \ref{sec:reduction}) that
these have zigzags in
a position $i$ when $i\notin S$ (so that a small $S$ corresponds to a large number of zigzags, while $\Q_{[k]}$ itself is simply the directed
path on $k+3$ vertices, with no zigzags).
\end{description}

\noindent It is not hard to see that a balanced digraph of height $n=k+2$ admits a homomorphism into $\mathbb{Q}_S$ if and only if it admits a homomorphism into each of $\mathbb{Q}_{[k]\backslash\{i\}}$ for $i\notin S$ (this is discussed further in the proof of the next lemma).  For balanced digraphs of smaller height this may fail, as the interpretations in the various $\mathbb{Q}_{[k]\backslash\{i\}}$ need not be at the same levels.  To circumvent this, we say that  (for $S\subseteq [k]$) a balanced connected digraph $\mathbb{H}$ is interpretable in $\mathbb{Q}_S$ \emph{at height $i$}, if it is interpretable in $\mathbb{Q}_S$ with a vertex of height $0$ in $\mathbb{H}$ taking the value of a height $i$ vertex of $\mathbb{Q}_S$.

\begin{lem}\label{lem:QS}
Let $\mathbb{H}$ be a connected balanced digraph.  Then $\mathbb{H}$ is interpretable in $\mathbb{Q}_S$ at height $i$ if and only if for each $j\notin S$ we have $\mathbb{H}$ interpretable in $\mathbb{Q}_{[k]\backslash\{j\}}$ at height~$i$.  For fixed $i$ and connected balanced digraph $\mathbb{H}$ of height at most $n-i$, there is a 
unique minimum set $S\subseteq [k]$ with $\mathbb{H}$ interpretable in $\mathbb{Q}_{S}$ at height $i$.
\end{lem}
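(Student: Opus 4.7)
The plan is to exploit the unique level-preserving homomorphisms between the paths $\mathbb{Q}_{S}$, which by the Observation of Section~\ref{sec:reduction} exist precisely when one index set contains the other; level-preservation comes for free because both source and target are connected balanced digraphs with minimum level $0$. For each $j\notin S$, write $\pi_j\colon\mathbb{Q}_S\to\mathbb{Q}_{[k]\setminus\{j\}}$ for the resulting surjection. A direct inspection will show that $\pi_j$ is the identity on the zigzag at position $j$ (which is present in both) and collapses the two extras of each zigzag of $\mathbb{Q}_S$ at a position $l\neq j$ down onto the spine $w_0,w_1,\dots,w_{k+2}$ of $\mathbb{Q}_{[k]\setminus\{j\}}$. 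The forward direction of the iff is then immediate by composition: if $\varphi\colon\mathbb{H}\to\mathbb{Q}_S$ interprets $\mathbb{H}$ at height $i$, then $\pi_j\circ\varphi$ interprets $\mathbb{H}$ in $\mathbb{Q}_{[k]\setminus\{j\}}$ at height $i$ for each $j\notin S$.

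For the backward direction, given interpretations $\varphi_j\colon\mathbb{H}\to\mathbb{Q}_{[k]\setminus\{j\}}$ at height $i$ for all $j\notin S$, I will reconstruct $\varphi\colon\mathbb{H}\to\mathbb{Q}_S$ vertex by vertex, leaning on the following structural description: besides the spine, each zigzag at position $l\notin S$ contributes exactly two extras to $\mathbb{Q}_S$, a \emph{source} $c_l$ at height $l$ (no incoming edges) and a \emph{sink} $b_l$ at height $l+1$ (no outgoing edges). For $v\in\mathbb{H}$ set $h:=\lvl(v)+i$, and declare $\varphi(v):=b_{h-1}$ if $h-1\notin S$ and $\varphi_{h-1}(v)=b_{h-1}$; else $\varphi(v):=c_h$ if $h\notin S$ and $\varphi_h(v)=c_h$; otherwise $\varphi(v):=w_h$. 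Intuitively, $\varphi(v)$ is the unique vertex of $\mathbb{Q}_S$ whose image under each $\pi_j$ agrees with $\varphi_j(v)$.

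The main obstacle is that the first two clauses could in principle fire simultaneously, in which case no vertex of $\mathbb{Q}_S$ realises the prescribed $\pi_j$-images. I will defuse this by observing that if both fire then $v$ has neither incoming nor outgoing edges in $\mathbb{H}$ (because $b_{h-1}$ has no outgoing and $c_h$ has no incoming edge); connectedness of $\mathbb{H}$ then forces $\mathbb{H}=\{v\}$, a degenerate case where any height-$i$ vertex of $\mathbb{Q}_S$ trivially defines an interpretation. After this, the homomorphism check reduces to a short case analysis on an edge $u\to v$ of $\mathbb{H}$ with $h=\lvl(u)+i$: since $b_{h-1}$ has no outgoing and $c_{h+1}$ no incoming edges, we must have $\varphi(u)\in\{w_h,c_h\}$ and $\varphi(v)\in\{w_{h+1},b_h\}$. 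The pairs $(w_h,b_h)$, $(c_h,w_{h+1})$, $(c_h,b_h)$ are genuine edges of the zigzag at $h$ when $h\notin S$, and $(w_h,w_{h+1})$ is a spine edge when $h\in S$ or $h\in\{0,k+1\}$; the only ostensibly bad case is $(w_h,w_{h+1})$ with $h\notin S$, which by the construction would force $\varphi_h(u)=w_h$ and $\varphi_h(v)=w_{h+1}$, contradicting that $\varphi_h$ is a homomorphism into $\mathbb{Q}_{[k]\setminus\{h\}}$, where every edge between heights $h$ and $h+1$ passes through the zigzag extras.

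For the second statement set $T:=\{j\in[k]\mid\mathbb{H}\text{ is not interpretable in }\mathbb{Q}_{[k]\setminus\{j\}}\text{ at height }i\}$. The first part immediately yields that $\mathbb{H}$ is interpretable in $\mathbb{Q}_S$ at height $i$ if and only if $T\subseteq S$, so $T$ is the unique minimum admissible set. Nonemptiness of this family is clear: under the hypothesis that $\mathbb{H}$ has height at most $n-i=k+2-i$, the assignment $v\mapsto w_{\lvl(v)+i}$ interprets $\mathbb{H}$ in $\mathbb{Q}_{[k]}$ at height $i$, so $[k]$ itself is always admissible.
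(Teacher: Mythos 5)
Your proof is correct, but the backward direction of the iff is argued in a genuinely different way from the paper. The paper invokes the universal property of the direct product $\prod_{j\notin S}\mathbb{Q}_{[k]\setminus\{j\}}$ and asserts (calling it ``routine'') that the connected component of this product joining the tuple of initial vertices to the tuple of terminal vertices maps homomorphically onto $\mathbb{Q}_S$; composing the tuple map $v\mapsto(\varphi_j(v))_j$ with that surjection gives the interpretation in $\mathbb{Q}_S$. You instead build the homomorphism $\varphi\colon\mathbb{H}\to\mathbb{Q}_S$ explicitly, vertex by vertex, reading off from each $\varphi_j$ whether $v$ should be sent to a zigzag source $c_h$, a zigzag sink $b_{h-1}$, or a spine vertex $w_h$; the edge-preservation check is then a short direct case analysis. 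The trade-off: the paper's route is slicker but hides nontrivial work (in particular, one must verify the image of $\mathbb{H}$ actually lies in the stated component of the product when the interpretations are at a general height $i$, not just $i=0$, and must justify why that component covers $\mathbb{Q}_S$), whereas your construction is fully self-contained and makes the degenerate one-vertex case and the single ``bad'' edge case visible and explicitly ruled out. Two small points worth tightening in your write-up: the conditions ``$h-1\notin S$'' and ``$h\notin S$'' should be ``$h-1\in[k]\setminus S$'' and ``$h\in[k]\setminus S$'' so that the clauses never fire for out-of-range indices; and the sentence justifying $\varphi(u)\in\{w_h,c_h\}$ should explicitly route through $\varphi_{h-1}$ being a homomorphism (the sink $b_{h-1}$ in the \emph{target} $\mathbb{Q}_{[k]\setminus\{h-1\}}$ has no out-edge, so $\varphi_{h-1}(u)\ne b_{h-1}$ for any $u$ with an out-edge), rather than appealing to $b_{h-1}$ being a sink of $\mathbb{Q}_S$, which would be circular since $\varphi$ is not yet known to be a homomorphism. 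The proof of the second statement, via the set $T$ of forbidden positions, matches the paper's.
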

\begin{proof}
The second statement follows immediately from the first, as we may successively test for interpretability (at height $i$) in $\mathbb{Q}_{[k]\backslash\{j\}}$ for $j=1,\dots,k$.  The bound $n-i$ is simply to account for the fact that if $\mathbb{H}$ has height greater than $n-i$, then it is not even interpretable in $\mathbb{Q}_{[k]}$ at height $i$. For the first statement, observe that if $j\notin S$, then there is a height-preserving homomorphism from $\mathbb{Q}_S$ onto $\mathbb{Q}_{[k]\backslash\{j\}}$ (as $S\subseteq [k]\backslash\{j\}$). So it suffices to show that if $\mathbb{H}$ is interpretable in $\mathbb{Q}_{[k]\backslash\{j\}}$ at height $i$ for each $j\notin S$ then it is interpretable in $\mathbb{Q}_{S}$ at height~$i$.  This is routine, because the single zigzag in $\mathbb{Q}_{[k]\backslash\{j\}}$ (based at height $j$) for $j\notin S$ matches the corresponding zigzag based at height $j$ in $\mathbb{Q}_{
S}$.  More formally, in the direct product $\prod_{j\notin S}\mathbb{Q}_{[k]\backslash\{j\}}$, the component connecting the tuple of initial vertices to terminal vertices maps homomorphically onto $\mathbb{Q}_
S$.
\end{proof}
\begin{defi}\label{def:gamma}
The smallest set $S\subseteq[k]$ for which a connected balanced digraph $\mathbb{H}$ is interpretable in $\Q_S$ at height $i$ is denoted by $\Gamma(\mathbb{H})^{(i)}$.  When $i$ is implicit, then we write simply $\Gamma(\mathbb{H})$.
\end{defi}

\begin{lem}\label{lem:pathassign}\hfill
\begin{enumerate}
\item $\CSP(\mathbb{Q}_{[k]})$ is solvable in logspace, even with singleton unary relations added.
\item If $\mathbb{H}$ is connected and balanced of height at most $n$, 
then for any vertex $u$ and $v$, the height of $v$ relative to that of $u$ may be computed in logspace.
\item $\CSP(\Q_{[k]\backslash\{i\}})$ is solvable in logspace for any
  $i\in \{1,\dots,k\}$, even when singleton unary relations are added.
\item For any $S\subseteq\{1,\dots,k\}$ the problem $\CSP(\Q_S)$ is
  solvable in logspace, even when singleton unary relations are added.
\item For a balanced connected digraph $\mathbb{H}$ of height at most $n$, we may test membership of numbers $j$ in the set $\Gamma(\mathbb{H})^{(i)}$ in logspace.
\item For any family of subsets $S_1,\dots,S_\ell\subseteq\{1,\dots,k\}$, the CSP over the digraph
  formed by amalgamating the family $\Q_{S_1},\dots,\Q_{S_\ell}$ at either all the initial
  points, or at all the terminal points is logspace solvable.
\end{enumerate}
\end{lem}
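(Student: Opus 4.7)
The plan is to tackle items (1)--(6) in order; items (1) and (2) are direct logspace reachability computations, item (3) is the main technical step, and items (4)--(6) then follow from (3) by structural decomposition using Lemma~\ref{lem:QS}. Throughout, all checks will reduce to reachability queries in logspace-constructible auxiliary graphs and can be performed on an oracle machine as described in the preceding subsection.

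For item (2), given a connected balanced digraph $\H$ of height at most $n$ and vertices $u, v$, I would build the auxiliary digraph on $V(\H) \times \{-n, \ldots, n\}$ with an edge $(x, j) \to (y, j+1)$ whenever $x \to y$ in $\H$. Balancedness forces $\lvl(v) - \lvl(u)$ to be the unique integer $j$ for which $(u, 0)$ is undirected-reachable from $(v, j)$ in the auxiliary graph, testable in logspace via Reingold. For item (1), $\H$ maps to $\Q_{[k]}$ iff it is balanced and each connected component has height at most $k+2$; singleton unary relations simply pin absolute levels, verified by the same level computation.

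The heart of the proof is item (3). Label the four zigzag vertices at heights $i$ and $i+1$ in $\Q_{[k] \setminus \{i\}}$ as $A, B$ (at height $i$) and $C, D$ (at height $i+1$), where $A$ is the terminus of the sole up-edge from height $i-1$, $D$ is the origin of the sole up-edge to height $i+2$, and the three internal zigzag edges are $A \to C$, $B \to C$, and $B \to D$ (so $A \to D$ is forbidden). Outside heights $i, i+1$ the target has a unique vertex at each level, so the homomorphism there is forced by the level function, handling singleton unary relations there trivially. Inside the zigzag, call a vertex of $\H$ \emph{forced-0} if it has an in-neighbor at height $i-1$ (for a height-$i$ vertex), an out-neighbor at height $i+2$ (for a height-$(i+1)$ vertex), or a singleton unary pinning it to $A$ or $D$; and \emph{forced-1} if a singleton unary pins it to $B$ or $C$. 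My claim is that a homomorphism exists iff (a) no vertex is simultaneously forced-0 and forced-1 and (b) no edge of $\H$ between heights $i$ and $i+1$ joins two forced-0 vertices; both conditions are simple logspace edge-existence checks. The main obstacle is establishing this characterization: the nontrivial direction is that under (a) and (b), sending every unforced height-$i$ (resp.\ height-$(i+1)$) vertex to $B$ (resp.\ $C$) yields a valid homomorphism, using that $B$ has out-edges to both $C$ and $D$ and $C$ has in-edges from both $A$ and $B$, so edges involving an unforced endpoint can never violate a target edge.

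For (4)--(6) I then use Lemma~\ref{lem:QS}. For (4), interpretability of a connected component in $\Q_S$ at height $i$ is equivalent to the conjunction over $j \notin S$ of interpretability in $\Q_{[k] \setminus \{j\}}$ at height $i$, each tested by (3); for a general input I would enumerate each connected component and each admissible height. Item (5) is the complement of an instance of (3) with a singleton unary pinning the height offset, and so is logspace since $\texttt{L}$ is closed under complement. For (6), consider each connected component $C$ of the input separately: at any positive level offset, $C$ must fit inside a single $\Q_{S_j}$ by connectivity away from the amalgamated vertex; at level offset $0$ for initial amalgamation, the level-$0$ vertices of $C$ all map to the shared vertex, and each connected component of $C$ restricted to heights $\geq 1$ maps independently into some $\Q_{S_j}$, with its level-$1$ boundary vertices pinned by a singleton unary. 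Each subcheck reduces to (4); the terminal amalgamation case is symmetric.
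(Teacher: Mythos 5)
Your proposal is correct, but it diverges from the paper's argument in two places worth noting.

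For item (3), you give a direct structural characterization of when a homomorphism into $\Q_{[k]\setminus\{i\}}$ (with singleton unary constraints) exists, via your forced-$0$/forced-$1$ classification and conditions (a), (b). This works: the canonical map sending unforced height-$i$ vertices to $B$ and unforced height-$(i+1)$ vertices to $C$ respects every edge except a forced-$0$–to–forced-$0$ edge, precisely because $B$ has out-neighbours $\{C,D\}$ and $C$ has in-neighbours $\{A,B\}$. The paper instead observes that $\Q_{[k]\setminus\{i\}}$ is a core and invokes Larose--Tesson to discharge the singleton unary relations entirely, after which the remaining test collapses to a single condition: at a fixed interpretation height, reject iff there is a directed path $u_1\to u_2\to u_3\to u_4$ spanning heights $i{-}1,i,i{+}1,i{+}2$. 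Your conditions (a), (b) reduce to this when singletons are absent, so the two characterizations agree; the paper's route saves some casework but relies on the external core-equivalence fact, while yours is self-contained at the price of tracking the singleton constraints by hand. Items (1) and (2) also differ: for (1) the paper cites Dalmau--Larose (Maltsev plus majority $\Rightarrow$ L), whereas you argue directly via balancedness and height; for (2) the paper recasts height computation as interpretability in $\Q_{[k]}$ with two pinned vertices, whereas you build an auxiliary layered graph and do a reachability check. Both pairs of arguments are valid. Your treatment of (4)--(6) via Lemma~\ref{lem:QS}, complementation of (3), and component-wise decomposition of the fan matches the paper. One small expository gap: for item (3) you implicitly fix a single connected component and a single interpretation height, but you never say that the full algorithm must first verify balancedness, split into components, and loop over the $O(n)$ admissible interpretation heights (unless a singleton already pins the offset); the paper states this explicitly.
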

\begin{proof} (1) Note that $\mathbb{Q}_{[k]}$ has both a Maltsev polymorphism
and a majority, hence is solvable in logspace even when unary 
singleton relations are added \cite{Dalmau/Larose:2008:Maltsev}.
  
  (2) For each $0\leq i,j\leq n$ (the possible heights) we may test for interpretability of $\mathbb{H}$ in $\mathbb{Q}_{[k]}$ with $u$ constrained to lie at height $i$ and $v$ constrained to lie at height $j$.  As $\mathbb{H}$ is balanced of height at most $n$, at least one such instance is interpretable, and the number $j-i$ is the relative height of $v$ above $u$.
  
  (3) Note that as $\Q_{[k]\backslash\{i\}}$ is a core, we
  have $\CSP(\Q_{[k]\backslash\{i\}})$ logspace equivalent to the CSP
  over $\Q_{[k]\backslash\{i\}}$ with all unary singletons added (see \cite{lartes}).

  Given an input digraph $\H$, we first test if $\H$ is interpretable in
  $\mathbb{Q}_{[k]}$ (which verifies  that $\H$ is
  balanced, and of sufficiently small height).  Reject if NO.
  Otherwise we may assume that $\H$ is a single component.

  We successively search for an interpretation of $\H$ in $\Q_{[k]\backslash\{i\}}$ at heights $0,1,\dots,n$; in each case, item (2) shows that we have access to a suitable notion of height for the vertices of $\H$.  The remaining part of this proof concerns an attempt at interpretation into $\Q_{[k]\backslash\{i\}}$ at one particular height.  
  Any vertices of the same height $j\notin
  \{i,i+1\}$ will be identified by an interpretation, so such an interpretation exists if and only if there is no directed path of
  vertices of heights $i-1,i,i+1,i+2$.  To verify this property in logspace it suffices to enumerate all
  $4$-tuples of vertices $u_1,\dots,u_4$, check if $u_1\rightarrow u_2
  \rightarrow u_3\rightarrow u_4$, and if so, check that the height of
  $u_1$ is not $i-1$.  If it is, then reject.  Otherwise accept.

  (4) \& (5)  These follow immediately from Lemma \ref{lem:QS}, and part (3) of the present lemma.
  
  (6) We refer to a digraph formed by amalgamating paths in one of the two described fashions as a \emph{fan}.  We consider the case where the initial vertices have been amalgamated, with the case for amalgamation at terminal vertices following by symmetry.
  
   Consider some instance $\H$.  As above, we may assume that
  $\H$ is connected, balanced and is of sufficiently small height.  We may first use item (4) to test if $\H$ is interpretable in one of the individual paths $\Q_{S_1}$, $\Q_{S_2}$,\ldots.  If one of these returns a positive answer, then $\H$ is a YES instance.  Otherwise, remove all level $0$ vertices of $\H$, and successively test each individual component~$C$ of the resulting digraph for interpretability in $\Q_{S_1}$, $\Q_{S_2}$,\ldots, with an additional condition: the vertices of $C$ which were adjacent to a level 0 vertex in $\H$ must be interpreted at the level $1$ vertex of $\Q_{S_i}$ adjacent to the initial vertex.  Provided each such $C$ is interpretable in at least one of these paths in the described way, then $\H$ is interpretable in the fan (with the level $0$ vertices of $\H$ interpreted at the amalgamated initial vertices).  Otherwise, $\H$ is not interpretable in the fan and is a NO instance.
\end{proof}

Recall that we are assuming that $\G$ consists of a single component.
\subsection{Stage 1: Verification that \texorpdfstring{$\G$}{G} is balanced and a test for
  height.}\label{subsec:1} If $\G$ is not balanced of height at most $n$, then we can output some fixed NO instance and the algorithm is finished.  The logspace test for this property is Lemma \ref{lem:pathassign} part~(1).  From this point on, we will assume that $\G$ is balanced, of height at most $n$ and consisting of a single component.

\subsection{Stage 2: \texorpdfstring{$\G$}{G} has height less than \texorpdfstring{$n$}{n}.} \label{subsec:2} If $\G$ has height strictly less than $n$ then any possible interpretation of $\G$ in $\mathcal{D}(\A)$ must either
interpret $\G$ within some single path $\Q_S$ connecting $A$ to $R$ in
$\mathcal{D}(\A)$, or at some fan of such paths emanating from some
vertex in $A$ or some vertex in $R$.  There is a constant
number of such subgraphs of $\mathcal{D}(\A)$, and we may use Lemma
\ref{lem:pathassign}(6) for each one.  If $\G$ is not in $\CSP(\mathcal{D}(\A))$ then output some fixed NO instance of $\CSP(\A)$; otherwise output some fixed YES instance of $\CSP(\A)$.

For the remainder of the algorithm we will assume that $\G$ is a balanced digraph of height $n$ and consists of a single component.

\subsection{Stage 3: \texorpdfstring{$\G$}{G} has height \texorpdfstring{$n$}{n}.}\label{subsec:3} We will eventually output a
structure $\B$ with the property that $\G$ is a YES instance of
$\CSP(\mathcal{D}(\A))$ if and only if $\B$ is a YES instance of
$\CSP(\A)$.  The construction of $\B$ is a little technical, so we initially (step 3A) describe the construction of an object $\B'$ and then subsequently (step 3B) describe the construction of $\B$ from the object $\B'$.  The object $\B'$ is simply a list of information that is more easily used to output $\B$.

For the remainder of the argument, an \emph{internal component} of
$\G$ means a connected component of the induced subgraph of $\G$
obtained by removing all vertices of height $0$ and $n$.  Note that we
have already described that testing for height can be done in
logspace.  A \emph{base vertex} for such a component $C$ is a vertex
at height $0$ that is adjacent to $C$ and the set of all base vertices of $C$ is denoted $\base(C)$.  A \emph{top vertex} for $C$
is a level $n$ vertex adjacent to $C$  and the set of all base vertices of $C$ is denoted $\topp(C)$.  Note that an internal
component may have none, one, or more than one base vertices, and similarly
for top vertices.  Every internal component must have at least one of
a base vertex or a top vertex however, because $\G$ consists of a single component and has height $n$.

Let $C$ be an internal component.  In a satisfying interpretation
of $\G$ in $\mathcal{D}(\A)$, the component $C$ must  be
satisfied within some single connecting path (of the form $\Q_S$ for some $S\subseteq [k]$), with the vertices in $\base(C)$ (or $\topp(C)$) being interpreted adjacent to the
initial point of the path (or adjacent to the terminal point of the
path, respectively).   Recall (Definition \ref{def:gamma}) that $\Gamma(C)$ denotes the smallest set $S\subseteq[k]$ for which $C$ is interpretable in $\Q_S$.  By Lemma \ref{lem:pathassign}(5) we can, in logspace, verify membership of numbers up to $k$ in the set $\Gamma(C)$.  (Note that we omit the superscript ``$^{(i)}$'' in the notation, as there is no ambiguity as to what height $C$ is to be satisfied at: it is either $i=1$, or dually, at $i=n-1-\hgt C$, where $\hgt C$ denotes the height of $C$.)  These internal components  are in essence encoding positions of base level
vertices in hyperedges of the structure $\B$ in construction. Lemma
\ref{lem:pathassign}(5) supplies, in logspace, the positions which are being
asserted as ``filled'' by a given internal component $C$.   If $\mathbb{G}$ itself is the path $\mathbb{Q}_I$ for example, then the single
internal component $C$ has $\Gamma(C)=I$.

\begin{rem}\label{rem:B}
When  $\B$ is constructed, its vertices will consist of representatives of equivalence classes of a set $X$, whose members consist of the height $0$ vertices of $\G$ along with some other vertices.  The extra vertices will be added to account for information such as the fact that some internal components have no base vertices, while the equivalence relation accounts for  information such as that some base vertices are necessarily identified under any possible homomorphism from $\G$ to $\mathcal{D}(\A)$: for an internal component $C$ for example, all elements of $\base(C)$ must be identified.
\end{rem}

\subsection{Stage 3A: The object \texorpdfstring{$\B'$}{B'}.}\label{subsec:3A}
We define $\B'$ then show how it can be constructed in logspace.  For $i=0,\dots,n$, let $G_i$ denote the set of vertices of $\G$ at height $i$.  We first describe some new vertices that will be constructed.
\begin{enumerate}[label=\({\greek*}]
\item If $C$ is an internal component with $\topp(C)=\varnothing$, then for each $b\in \base(C)$ and $i\notin \Gamma(C)$ we introduce a new vertex $x_{C,b,i}$.
\item  If $C$ is an internal component with $\base(C)=\varnothing$, then for each $e\in \topp(C)$ and $i\in \Gamma(C)$ we introduce a new vertex $x_{C,e,i}$.
\item If $e\in G_n$ and $i\in\{1,\dots,k\}$ are such that no internal component $C$ with $e\in \topp(C)$ has $i\in \Gamma(C)$ then we introduce a new vertex $x_{e,i}$.
\end{enumerate}
Let $X$ denote the union of $G_0$ with all of the new the vertices just introduced.  In stage 3B, the vertices of the final output structure $\B$ will constructed as equivalence classes of the elements of $X$.
\begin{rem}\label{rem:B'} 
Recall that under any possible homomorphism from $\G$ to $\mathcal{D}(\A)$, the internal components of $\G$ must map to connecting paths in $\mathcal{D}(\A)$.  Thus if $C$ is an internal component, $b\in \base(C)$ and $i\in\Gamma(C)$, then in the construction of $\B$ we will require a copy of $b$ lying at position $i$ of some hyperedge.  But this hyperedge of $\B$ must have vertices at all $k$ positions, and the purpose of the vertices $x_{C,b,i}$ is to fill any such positions not provided by existing elements of $G_0$.   

Similarly, if some internal component $C$ with $e\in \topp(C)$ and $i\in \Gamma(C)$ has no base vertices, then $x_{C,e,i}$ is added to play the role of a vertex that lies at position $i$ of the hyperedge in $\B$ that will correspond to $e$.  
Vertices $x_{e,i}$ are added when there is no internal component at all encoding the $i$th position of $e$.
\end{rem}
Before we give a concrete description of $\B'$ we need some further notation.  If $e\in G_n$ and $i\in\{1,\dots,k\}$, let $S(e,i)$ denote the set of internal components $C$ of $\G$ for which $\base(C)\neq \varnothing$, $e\in\topp(C)$ and $i\in\Gamma(C)$.  Let $T(e,i)$ denote the set of internal components $C$ of $\G$ for which $\base(C)= \varnothing$, $e\in\topp(C)$ and $i\in\Gamma(C)$. 

Now we may describe the object $\B'$, which itself consists of a list of four different types of object.  The first and second relate to the hyperedges we will finally output for $\B$ in stage 3A.  The third and fourth record some information that will be used to determine when elements of $X$ need to be identified in the construction of $\B$.
\begin{enumerate}[label=(\Roman*)]
\item[(I)] For each vertex $e\in G_n$ we include the object $(e,V_1,\dots,V_k)$ where for each $i=1,\dots,k$ the set $V_i$ is defined as follows:
\begin{enumerate}[label=(roman*)]
\item[(i)] if $S(e,i)\cup T(e,i)\neq \varnothing$ then
\[
V_i:=\left(\bigcup \{\base(C)\mid C\in S(e,i)\}\right)\cup \{x_{C,e,i}\mid C\in T(e,i)\};
\]
\item[(ii)] if $S(e,i)\cup T(e,i)= \varnothing$ then
\[
V_i:=\{x_{e,i}\};
\]
\end{enumerate}
\item[(II)] For each vertex $b\in G_0$ and each internal component $C$ with $b\in \base(C)$ and $\topp(C)=\varnothing$, we include the object $(V_1,\dots,V_k)$, where
\[
V_i:=\begin{cases}
\{b\}&\text{ if }i\in\Gamma(C)\\
\{x_{C,b,i}\}&\text{ otherwise}
\end{cases}
\]
\item[(III)] An edge relation on $G_n$, where vertex $e\in G_n$ is adjacent to vertex $f\in G_n$ if there exists an internal component $C$ with $e,f\in \topp(C)$.
\item[(IV)] An edge relation on $G_0$, where vertex $b\in G_0$ is adjacent to vertex $c\in B_0$ if there exists an internal component $C$ with $b,c\in \base(C)$.
\end{enumerate}
\begin{lem}\label{lem:B'}
$\B'$ can be constructed in logspace from $\G$.
\end{lem}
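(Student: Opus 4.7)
The plan is to show that each of the four blocks (I)--(IV) defining $\mathbb{B}'$ can be written to the output tape using only logarithmic working space. I rely throughout on the subroutines listed in Subsection~\ref{subsec:routines}: logspace undirected reachability (Reingold), reachability restricted to a logspace-testable vertex predicate, computation of vertex heights via Lemma~\ref{lem:pathassign}(2), and the logspace test for membership in $\Gamma(C)^{(i)}$ provided by Lemma~\ref{lem:pathassign}(5). Since $\texttt{L}^\texttt{L}=\texttt{L}$, the main program may freely invoke these as oracles.

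The key bookkeeping device is a canonical name for each internal component. For a vertex $v$ of $\mathbb{G}$ whose height lies strictly between $0$ and $n$, define $\rho(v)$ to be the lexicographically smallest vertex reachable from $v$ by an undirected path that avoids all height-$0$ and height-$n$ vertices; by the \textbf{Component checking} paragraph of Subsection~\ref{subsec:routines}, $\rho(v)$ is logspace computable. Each internal component $C$ is then uniquely named by the unique $v\in C$ with $v=\rho(v)$, and iterating over all internal $v$ while filtering on $v=\rho(v)$ enumerates internal components exactly once. Given such a representative $v$, the set $\base(C)$ is the set of height-$0$ vertices $b$ adjacent to some height-$1$ vertex $c$ with $\rho(c)=v$, and $\topp(C)$ is the dual notion; both are logspace queries. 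The height of $C$ is obtained from Lemma~\ref{lem:pathassign}(2), and then membership in $\Gamma(C)$ is checked index by index via Lemma~\ref{lem:pathassign}(5).

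With these utilities in hand, each of (I)--(IV) is produced by bounded-depth nested loops. For (I), loop over $e\in G_n$ and $i\in[k]$; for each $(e,i)$, decide whether $S(e,i)\cup T(e,i)=\varnothing$ by looping over representatives $v$ of components that touch a height-$(n-1)$ neighbour of $e$ and testing $i\in\Gamma(C)$. If this union is empty, emit $\{x_{e,i}\}$; otherwise emit the vertices $b\in G_0$ lying in some $\base(C)$ with $C\in S(e,i)$ (enumerating each candidate $b$ once, which automatically suppresses duplicates) together with the symbols $x_{C,e,i}$ for $C\in T(e,i)$ (enumerating each representative $v$ once). Case (II) is symmetric: loop over $b\in G_0$ and then over the representatives $v$ of components $C$ with $b\in\base(C)$ and $\topp(C)=\varnothing$, emitting one object per such pair. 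For (III), loop over pairs $(e,f)\in G_n\times G_n$ and emit the edge exactly when there exist height-$(n-1)$ neighbours $c$ of $e$ and $c'$ of $f$ with $\rho(c)=\rho(c')$; case (IV) is dual. Every step uses only a constant number of vertex pointers plus logarithmic-length counters, so the overall procedure runs in logspace. The main obstacle I anticipate is avoiding double-processing of an internal component when assembling a single $V_i$ (different height-$(n-1)$ neighbours of $e$ may lie in the same $C$), and the canonical representative $\rho$ resolves this cleanly.
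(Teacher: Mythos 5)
Your proposal is correct and follows essentially the same approach as the paper: enumerate internal components via undirected reachability among vertices of intermediate height, use Lemma~\ref{lem:pathassign}(5) as an oracle for $\Gamma(C)$-membership, and assemble the four object types with constant-depth nested loops over vertices. Your explicit canonical-name function $\rho$ is a slightly cleaner packaging of the deduplication step that the paper phrases as ``check that $u$ does not lie in the same internal component as some earlier vertex,'' and it makes the once-per-component iteration transparent; it is exactly the ``lexicographically earliest vertex of an equivalence class'' subroutine already described in Subsection~\ref{subsec:routines}. One small divergence worth noting: for objects of type (III) the paper tests undirected reachability among all vertices outside $G_0$, which can link $e$ and $f$ through intermediate $G_n$ vertices and therefore output strictly more pairs than the literal definition demands, whereas your test (shared height-$(n-1)$ neighbours with equal $\rho$-value) implements the definition exactly. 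Both are fine because in Stage 3B these edges are only used to generate the equivalence relation $\sim$, and the transitive closure erases the difference; your version is marginally tidier. No gaps.
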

\begin{proof}
To construct the objects of type (I) it suffices, for each $e\in G_n$ and $i=1,\dots,k$, to explain how to construct the set $V_i$.  Thus, given~$e,i$ we need to identify internal components $C$ for which $i\in\Gamma(C)$ and $e\in \topp(C)$ and then perform some checks on this $C$.  
\begin{enumerate}[label=(I.\roman*)]
\item[(I.i)] Systematically search through all
  vertices of $\G$ until some $u$ is found to be undirected-reachable
  from $e$ amongst vertices not at height $0$ or $n$.  To avoid
   duplication, also check that $u$ does not lie in
  the same internal component as some earlier vertex, ignore this $u$ if it does.  Otherwise, $u$ is the first vertex in some internal component $C_u$ for which $e\in \topp(C_u)$.  
  \item[(I.ii)] The component $C_u$ may be constructed in logspace using undirected reachability checks (using the underlying graph on the set of vertices of heights not equal to $0$ or $n$).  This component $C_u$ can be written to a query tape for verifying $i\in \Gamma(C_u)$, which is logspace by Lemma~\ref{lem:pathassign}(5).  
  \item[(I.iii)] If $i\in \Gamma(C_u)$ and $\base(C_u)$ is nonempty, then include all vertices of $\base(C_u)$ in $V_i$.  If $i\in \Gamma(C_u)$ and $\base(C_u)=\varnothing$, then include $x_{C,e,i}$ in $V_i$.
  \item[(I.iv)] If no internal component $C$ has $i\in\Gamma(C)$ and $e\in \topp(C)$, then $V_i$ is $\{x_{e,i}\}$.\medskip
\end{enumerate}

\noindent Now for objects of type (II).  Given $b\in G_0$ we need to decide if there is an internal component $C$ with $b\in\base(C)$ and $\topp(C)=\varnothing$. 
\begin{enumerate}[label=(II.\roman*)]
\item[(II.i)] Systematically search through all
  vertices of $\G$ until some $u$ is found to be undirected-reachable
  from $b$ amongst vertices not at height $0$ or $n$.  To avoid
   duplication, also check that $u$ does not lie in
  the same internal component as some earlier vertex, ignore this $u$ if it does.  Otherwise, $u$ is the first vertex in some internal component $C_u$ for which $b\in \base(C_u)$.  If $\topp(C_u)\neq \varnothing$ then ignore $u$ and continue the search.
  \item[(II.ii)] If $\topp(C_u)= \varnothing$, then we output an object of type (II) for $b$.  For each $i\in1,\dots,k$ we construct $V_i$ as follows.
  \begin{itemize}[label=$-$]
  \item Test if $i\in\Gamma(C_u)$ using Lemma~\ref{lem:pathassign}(5).  If $i\in\Gamma(C_u)$, write $\{b\}$ for $V_i$.  
  Otherwise write $\{x_{C_{u},b,i}\}$.
  \end{itemize}
  \end{enumerate}
  If the search in (II.i) returns no internal components $C$ with $b\in \base(C)$ and $\topp(C)=\varnothing$ then there is no object of type (II) for this $b$. 
  
  Next we construct objects of type (III).  This involves, for each pair of vertices $e,f\in G_n$, searching for undirected reachability in amongst the underlying graph on all vertices not in $G_0$.  This is an undirected reachability check so can be performed in logspace.  The pair $(e,f)$ is output when such an edge is found.  The construction of objects of type (IV) is almost identical, but using undirected reachability amongst vertices of height less than $n$.
\end{proof}
\subsection{Stage 3B: construction of \texorpdfstring{$\B$}{B}.}\label{subsec:3B}
We now need to construct $\B$ from the object~$\B'$. The vertices of $\B$ will consist of representatives of the blocks of some quotient of $X$.  

Define $\sim$ to be the smallest equivalence relation on $X$ satisfying the following conditions.
\begin{enumerate}
\item for each object $(V_1,\dots,V_k)$ of type (II) or $(e,V_1,\dots,V_k)$ of type (I) and each object $(W_1,\dots,W_k)$ or $(f,W_1,\dots,W_k)$, if there is $i$ such that $V_i\cap W_i\neq\varnothing$, then $V_i\times W_i\subseteq{\sim}$.
\item the edge relation of type (IV) is contained in $\sim$.
\item Assume one of the following holds: $(e,f)$ or $(f,e)$ is an edge of type (III).  Then if $(e,V_1,\dots,V_k)$ and $(f,W_1,\dots,W_k)$ are objects of type (I), then for each $i=1,\dots,k$ we have $V_i\times W_i\subseteq {\sim}$.
\end{enumerate}
For any $x\in X$, we write $\hat{x}$ to denote the lexicographically earliest member of the $\sim$-class containing $x$.  If $V\subseteq x/{\sim}$, then we also write $\hat{V}$ to denote $\hat{x}$.  Note that for every object $(V_1,\dots,V_k)$ of type (II) or $(e,V_1,\dots,V_k)$ of type (I), condition (1) guarantees that the set $V_i$ lies completely within a block of $\sim$.  Hence $\hat{V}_i$ is well defined.    

\begin{rem}\label{rem:sim}
If $a,b\in G_0$ have $a\sim b$, then any possible homomorphism from $\G$ to $\mathcal{D}(\A)$ would identify $a$ and $b$.  
\end{rem}
\begin{proof}
This is easy to verify if $a$ and $b$ are $\sim$-related because of one of the generating properties \up(1\up{),} \up(2\up{),} \up(3\up{)}.  The general case then follows immediately from the fact that the kernel of any homomorphism is an equivalence relation and $\sim$ is the smallest equivalence relation containing the pairs defined in (1), (2), (3).
\end{proof}
Now we may define $\B$.
The vertices of $\B$ will be $\{\hat{x}\mid x\in X\}$.   
The hyperedges of $\B$ consist of $k$-tuples $(\hat{V}_1,\dots,\hat{V}_k)$, for each object $(V_1,\dots,V_k)$ of type (I) or $(e,V_1,\dots,V_k)$ of type (II).

\begin{lem}\label{lem:Blogspace}
$\B$ can be constructed in logspace from $\B'$.
\end{lem}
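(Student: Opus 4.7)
The plan is to show that every piece in the construction of $\B$ from $\B'$ is the output of a constant-depth composition of logspace subroutines, the only delicate ingredient being the representative map $x\mapsto\hat{x}$.

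The first step is to observe that $\sim$ is the reflexive-symmetric-transitive closure of a binary relation $\beta$ on $X$ whose membership can be decided in logspace. Indeed, testing whether a given pair $(x,y)\in X^2$ lies in $\beta$ amounts to checking whether condition (1), (2) or (3) in the definition of $\sim$ applies to it: for (1) one cycles through all pairs of type (I)/(II) objects in $\B'$ and all indices $i\in[k]$, and verifies that $V_i\cap W_i\neq\varnothing$ and $x\in V_i$, $y\in W_i$; cases (2) and (3) are immediate. Every such check uses only logspace access to $\B'$, which is available by Lemma~\ref{lem:B'}. Consequently, $(x,y)\in{\sim}$ if and only if $x$ and $y$ lie in the same connected component of the undirected graph $(X,\beta)$, so Reingold's undirected reachability theorem (Subsection~\ref{subsec:routines}) puts membership in $\sim$ into L.

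Next, I would compute $\hat{x}$ from $x$ by fixing a canonical total order on $X$: elements of $G_0$ come first in their order inherited from $\G$, followed by the new vertices $x_{C,b,i}$, $x_{C,e,i}$, $x_{e,i}$ indexed by the lex-smallest vertex of the relevant internal component $C$ (itself a logspace-computable canonical form by the standard undirected-reachability idiom described in Subsection~\ref{subsec:routines}) together with the extra data $b$ or $e$ and $i\in[k]$. Then enumerate $y\in X$ in this order and output the first $y$ with $y\sim x$; both the enumeration and the reachability check are logspace, so this is logspace.

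Once $\hat{x}$ is available, producing $\B$ is straightforward. For the vertex set, one enumerates $x\in X$ and outputs $x$ precisely when $\hat{x}=x$. For the hyperedges, one enumerates the type (I) and type (II) objects of $\B'$, and for each object $(e,V_1,\dots,V_k)$ or $(V_1,\dots,V_k)$ outputs the tuple $(\hat{V}_1,\dots,\hat{V}_k)$: since condition (1) in the definition of $\sim$ forces each $V_i$ to lie inside a single $\sim$-class, $\hat{V}_i$ is unambiguous and is obtained by picking any element of $V_i$ and applying $x\mapsto\hat{x}$. The main thing to be careful about is that all of these pieces compose within logspace, but since we only chain a constant number of logspace oracle calls and $\texttt{L}^{\texttt{L}}=\texttt{L}$ as discussed in Subsection~\ref{subsec:routines}, no blow-up occurs and $\B$ is produced in logspace from $\B'$.
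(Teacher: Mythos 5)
Your proposal is correct and follows essentially the same route as the paper's own proof: establish that $\sim$ is the equivalence closure of a logspace-decidable generating relation, invoke Reingold's undirected reachability to compute $\hat{x}$ (as in Subsection~\ref{subsec:routines}), and then output $(\hat{V}_1,\dots,\hat{V}_k)$ for each type (I) and type (II) object. The only cosmetic difference is that you also spell out how to emit the vertex set, whereas the paper observes it suffices to output hyperedges; both are valid.
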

\begin{proof}
It suffices to show how to output the hyperedges for $\B$.  There is no harm in allowing repeats in the list of hyperedges; with a further logspace process, these could be eliminated, or an adjacency matrix could be output.  

The equivalence relation $\sim$ used in the definition of $\B$ is the smallest equivalence relation containing the pairs described in items (1), (2) and (3).  Each of the kinds of pairs defined in items (1), (2) and (3) can be verified in logspace, so given any $x\in X$ we may output $\hat{x}$ in logspace; see Subsection \ref{subsec:routines}.  Thus given $V\subseteq X$, where $V$ is a subset of a block of $\sim$, we may construct $\hat{V}$ by taking the first element $x$ of $V$ and outputting $\hat{x}$.

Thus to output the hyperedges of $\B$ in logspace, perform the following.  For each object $(e,V_1,\dots,V_k)$ of type (I) or $(V_1,\dots,V_k)$ of type (II), ignore the first entry $e$ if applicable, then output $(\hat{V}_1,\dots,\hat{V}_k)$.
\end{proof}

For $e\in G_n$, let $\hat{e}$ denote the tuple $(\hat{V}_1,\dots,\hat{V}_k)$, where $(e,V_1,\dots,V_k)$ is the unique object of type (I) associated with $e$.  
\begin{rem}\label{rem:homom}
There is a homomorphism from $\G$ into $\mathcal{D}(\B)$, mapping $b\in G_0$ to $\hat{b}$ in $\mathcal{D}(\B)$, and  $e\in G_n$ to the hyperedge vertex $\hat{e}$.
\end{rem}
\begin{proof}
This comes down to the fact that an internal component $C$ of $\G$ maps onto $\Q_{S}$ for any $S\supseteq \Gamma(C)$ and the fact that the components $C$ give rise to the hyperedges of $\B$ with $i\in \Gamma(C)$ ensuring that any base vertices of $C$ appear at position $i$ of a corresponding hyperedge of $\B$; see Remark \ref{rem:B'}.  

If $C$ has $b\in \base(C)$ and $e\in \topp(C)$, then $\hat{b}$ connects to $\hat{e}$ in $\mathcal{D}(\B)$ via a connecting path $\Q_S$ with $S\supseteq \Gamma(C)$.  The component $C$ is mapped onto this connecting path.   

If $C$ has no base, but has $e\in \topp(C)$, then for each $i\in \Gamma(C)$ we added a vertex $x_{C,e,i}$ (or $x_{C,e}$ if $\Gamma(C)=\varnothing$) in $\B'$ and the vertex $\hat{x}_{C,e,i}$ connects to $\hat{e}$ in $\mathcal{D}(\B)$ via a connecting path with $\Q_S$ with $S\supseteq \Gamma(C)$.  The component $C$ is mapped onto this connecting path.  

If $C$ has no top, but has $b\in \base(C)$, then we created a object of type (II) of $\B'$ in which $b$ appeared in each position $i\in \Gamma(C)$.  After applying the $\sim$ relation this object yields a hyperedge at which $\hat{b}$ lies in at least all positions in $\Gamma(C)$.  So the component $C$ maps to the corresponding connecting path from $\hat{b}$ in $\mathcal{D}(\B)$.
\end{proof}

\subsection{Proof of Theorem \ref{thm:logspace}}
By Lemmata \ref{lem:B'} and \ref{lem:Blogspace} it suffices to show that there is a homomorphism from $\G$ to $\mathcal{D}(\A)$ if and only if there is a homomorphism from $\B$ to~$\A$.

First, any homomorphism $\phi$ from $\B$
into $\A$ extends to a homomorphism from $\mathcal{D}(\B)$ into $\mathcal{D}(\A)$ (similarly as in the proof of Lemma \ref{lem:end}), which in turn gives a homomorphism $\Phi$ from $\G$ to  $\mathcal{D}(\A)$ via Remark \ref{rem:homom}.

For the converse, let $\Phi:\G\to\mathcal{D}(\A)$ be a homomorphism.  We wish to define a homomorphism $\phi$ from $\B$ to $\A$.  If $b\in B$ is such that there exists $g\in G_0$ with $g\sim b$, then we define $\phi(b):=\Phi(g)$.  This is well defined by Remark \ref{rem:sim}.  If $b$ is not $\sim$-related to any vertex of $G_0$, then $b$ is $\sim$-related to one of the new vertices added in the construction of $\B'$: such a vertex corresponds to a position $i$ in a hyperedge vertex $e$ of $\mathcal{D}(\A)$ that is either in the image of $\Phi(\G)$, or at the top of some connecting path that intersects $\Phi(\G)$ nontrivially; see Remark \ref{rem:B'}.  In this case, define $\phi(b)$ to be the vertex of $\A$ lying at the $i$th position of $e$.  This function $\phi$ preserves the hyperedges of $\B$ because each internal component $C$ of $\G$ maps onto a connecting path $\Q_S$ of $\mathcal{D}(\A)$ for which $\Gamma(C)\subseteq S$.

\subsection{An example}\label{subsec:example}
The following diagram depicts a reasonably general instance $\G$ of
$\CSP(\mathcal{D}(\A))$ in the case that $\A$ itself is a digraph, so
that $k=2$.  We are considering stage 3, so that $\G$ is a single
connected digraph of height $4$.  The vertices at height $0$ are
$b_1,\dots,b_6$, and the vertices at height $4$ are $e_1,\dots,e_4$.
The shaded regions depict internal components: each is labelled by a
subset of $\{1,2\}$, depicting $\Gamma(C)$.
\begin{center}
  \includegraphics[scale=0.9]{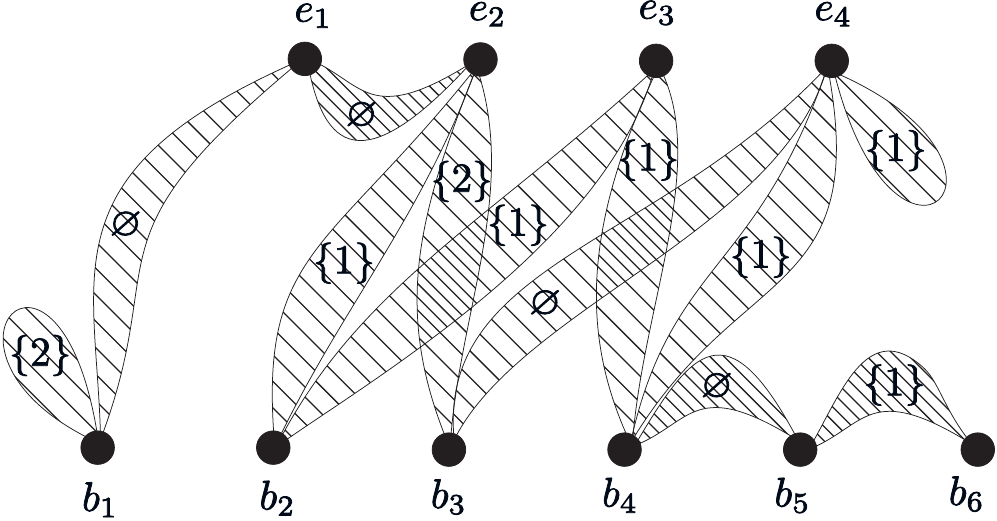}
\end{center}

Let us examine how Stage 3A proceeds.  
For objects of type (I) we obtain
\begin{align*}
&(e_1,\{x_1\},\{x_2\})\qquad&\\
&(e_2,\{b_2\},\{b_3\})\qquad&\\
&(e_3,\{b_2,b_4\},\{x_3\})\qquad&\\
&(e_4,\{b_4,x_4\},\{x_5\})\qquad&\\
\end{align*}
with $x_1,x_2,x_3,x_5$ being new vertices of kind ($\gamma$), and $x_4$ of kind ($\beta$).

For objects of type (II) we obtain 
\begin{align*}
&(\{x_6\},\{b_1\})\qquad&\text{(from $b_1$)}\\
&(\{x_7\},\{x_8\})\qquad&\text{(from $b_4$)}\\
&(\{x_9\},\{x_{10}\})\qquad&\text{(from $b_5$)}\\
&(\{b_5\},\{x_{11}\})\qquad&\text{(from $b_5$)}\\
&(\{b_6\},\{x_{12}\})\qquad&\text{(from $b_6$)}
\end{align*}
with the vertices $x_6,\dots,x_{12}$ of kind ($\alpha$).

The only object of type (III) is $(e_1,e_2)$, while $(b_4,b_5),(b_5,b_6)$ are the only two objects of type (IV).

For the construction of $\B$, the relation $\sim$ has two nontrivial blocks
\[
\{b_2,b_4,b_5,b_6,x_1,x_4\},\{b_3,x_2\}.
\]
Using the ordering $b_1<\dots<b_6<x_1<\dots<x_{12}$, the hyperedges of  $\B$ are
\[
(b_2,b_3),(b_2,b_3),(b_2,x_3),(b_2,x_5),(x_6,b_1),(x_7,x_8),(x_9,x_{10}),(b_2,x_{11}),(b_2,x_{12})
\]
(where $(b_2,b_3),(b_2,b_3)$ is listed twice only to reflect the fact that the logspace construction we gave in Lemma \ref{lem:Blogspace} would output this edge twice).

\section{Discussion}
We conclude our paper with some applications and further research directions.

\subsection*{An example}

Our construction allows us to create examples (and counterexamples) of digraph CSPs with certain desired properties, which were previously unknown or significantly harder to construct.

\begin{exa}\label{eg:fsp}
Let $\mathbb{A}$ be the structure on $\{0,1\}$ with a single $4$-ary relation 
\[
R=\{(0,0,0,1),(0,1,1,1),(1,0,1,1),(1,1,0,1)\}.
\]  
Clearly $\mathbb A$ is a core. 
Using the fact that $R=\{(w,x,y,z)\in A^4\mid w\oplus x=y\And z=1\}$ (where $\oplus$ denotes addition modulo $2$), it can be shown that the polymorphisms of $\mathbb{A}$ are the idempotent term functions of the two element group, 
and from this it follows that $\CSP(\mathbb{A})$ is solvable by the
few subpowers algorithm of \cite{IMMVW}, but is not bounded width.
Then the CSP over the digraph $\mathcal{D}(\mathbb{A})$ is also
solvable by few subpowers but is not bounded width \up(that is, is not
solvable by local consistency check\up).
\end{exa}

Prior to the announcement of this example it had been temporarily
conjectured by some researchers that solvability by the few subpowers
algorithm implied solvability by local consistency check in the case
of digraphs (this was the opening conjecture in Mar{\' o}ti's keynote
presentation at the Second International Conference on Order, Algebra
and Logics in Krakow 2011 for example).  With 78 vertices and 80
edges, Example \ref{eg:fsp} also serves as a simpler alternative to
the 368-vertex, 432-edge digraph whose CSP was shown by Atserias
in~\cite[\S4.2]{ats} to be tractable but not solvable by local
consistency check.

In \cite{JKN13}, Example \ref{eg:fsp}, Corollary \ref{cor:preserved_conditions}
and some fresh results on polymorphisms are used to construct digraph CSPs with
every possible combination of the main polymorphism properties related to decision CSPs (allowing for
Kazda's Maltsev implies majority result \cite{kaz}).

\subsection*{Which properties are preserved?}
Theorem \ref{thm:preserved_conditions} and Corollary \ref{cor:preserved_conditions} demonstrate that our reduction preserves almost all Maltsev conditions corresponding (or conjectured to be equivalent) to important algorithmic properties of decision CSPs. The linearity assumption in Theorem \ref{thm:preserved_conditions} is not limiting: Barto, Opr{\v s}al and Pinsker recently improved on the algebraic approach to the CSP by showing that, if $\mathbb A$ is a core, the complexity of $\mathrm{CSP}(\mathbb A)$ depends only on \emph{linear} idempotent Maltsev conditions satisfied by $\mathbb A$\cite{BOP15}. Still, we were not able to extend our result to include all linear idempotent Maltsev conditions (in particular, nonbalanced identities in more than two variables). \emph{Is it possible to characterize linear idempotent Maltsev conditions preserved by our construction? In particular, does it preserve precisely those which hold in the zigzag?}

In \cite{Bulatov_counting_CSPs} Bulatov established a dichtomy for counting CSPs (see also \cite{Dyer_Richerby}). The algebraic condition separating tractable (FP) problems from $\#$P-complete ones is called \emph{congruence singularity}. It is not hard to see that the structure $\mathbb A$ from Example \ref{eg:fsp} satisfies this condition and thus the corresponding counting CSP is tractable. However, congruence singularity implies congruence permutability (i.e., having a Maltsev polymorphism) which fails in $\mathcal D(\mathbb A)$. Therefore, counting CSP for $\mathcal D(\mathbb A)$ is $\#$P-complete. We conclude that our reduction does not preserve the complexity of counting. In fact, counting for $\mathcal D(\mathbb A)$ is essentially always hard. \emph{Is there a reduction of general CSPs to digraph CSPs which preserves complexity of counting?}

There are several other interesting variants or generalizations of CSPs in which algebraic conditions seem to play an important role as well. For example, infinite template CSPs (see below), valued CSPs \cite{algebraic_discrete_optimization,finite_valued_CSPs,KKR15}, or approximability of CSPs \cite{Dalmau_Krokhin_robust_satisfiability,LP_width1}. \emph{Can our construction be applied to obtain interesting results in these areas as well?}

\subsection*{Infinite template CSPs}
\noindent CSPs over infinite templates are widely encountered in artificial
intelligence; see
\cite{DBLP:journals/logcom/Hirsch97,Krokhin01reasoningabout,DBLP:books/sp/Renz02} for example.  
Efforts to obtain a mathematical foundation for understanding
the computational complexity of these problems have often involved assumptions
of model theoretic properties on the template (such as $\omega$-categoricity),
as well as the presence of polymorphisms of certain kinds;
see~\cite{DBLP:conf/dagstuhl/Bodirsky08,
DBLP:journals/logcom/BodirskyC09,MR3056109} for example.  The results of the present
article apply for such CSPs too: the proofs of Theorems
\ref{thm:preserved_conditions} and \ref{thm:logspace} did not assume finiteness
of $A$, only that
$\mathbb A$ has only finitely many relations.
\begin{rem}
  Theorem~\ref{thm:preserved_conditions} and
  Theorem~\ref{thm:logspace} extend to infinite template CSPs
  consisting of only finitely many relations. Furthermore, since
  $\mathbb A$ and $\mathcal D(\mathbb A)$ are first-order
  interdefinable, $\mathbb A$ is $\omega$-categorical if and only if
  $\mathcal D(\mathbb A)$ is $\omega$-categorical.
\end{rem}

\subsection*{Special classes of CSPs}
Hell and Rafiey \cite{ListH} showed that all tractable list homomorphism
problems over digraphs have the bounded width property, and from this it follows
that there can be no translation from general CSPs to digraph CSPs preserving
\emph{conservative} polymorphisms (the polymorphisms related to list
homomorphism problems). \emph{Find a simple restricted class of list
homomorphism problems for which there is a polymorphism-preserving translation
from general list homomorphisms to the ones in this class.}

Another class of interest are the CSPs over generalized trees.
\emph{Is there a translation from generalized trees to oriented trees that
preserves CSP tractability, or preserves polymorphism properties}?

Feder and Vardi's paper \cite{fedvar} also contains a polynomial reduction of general CSPs to CSPs over bipartite graphs. Payne and Willard announced preliminary results on a project similar to ours: to understand which Maltsev conditions are preserved by that reduction to bipartite graphs.

\subsection*{First order reductions}
The logspace reduction in Lemma
\ref{lem:reversereduction} \emph{cannot} be replaced by first order
reductions.  Indeed, it is not hard to show that $\mathcal{D}(\mathbb{A})$ is
never first order definable.  More generally though, the only first order
definable CSPs over balanced digraphs are the degenerate ones: over the single
edge, or over a single vertex and no edges (see \cite[Theorem C]{jactro}), while
deciding first order definability in general is NP-complete \cite[Theorem
6.1]{LLTb}.  Thus it seems unlikely that there is any other polynomial time
computable construction to translate general CSPs to balanced digraph CSPs (as
this would give P$=$NP).  \emph{Is there a different construction that
translates general
CSPs to \up(nonbalanced\up!\up) digraph CSPs with first order reductions in both
directions}?

\subsection*{Various reductions of CSPs to digraphs}
Feder and Vardi~\cite{fedvar} and Atserias~\cite{ats} provide
polynomial time reductions of CSPs to digraph CSPs.  We vigorously
conjecture that their reductions preserve the properties of possessing
a WNU polymorphism (and of being cores; but this is routinely
verified).  \emph{Do these or other constructions preserve the precise arity
of WNU polymorphisms}?  \emph{What other polymorphism properties are
preserved}? \emph{Do they preserve the bounded width property}?  

Translations from general CSPs to digraph CSPs need not in general be as well behaved as the
$\mathcal{D}$ construction of the present article.
 The third and fourth authors with
Kowalski~\cite{JKN13} have recently shown that a minor variation of
the $\mathcal{D}$ construction preserves
$k$-ary WNU polymorphisms (and thus the properties of being Taylor and having bounded width) 
but always fails to preserve
many other polymorphism properties (such as those witnessing strict
width, or the few subpowers property).

\section*{Acknowledgements}
\noindent The authors would like to thank Libor Barto, Marcin Kozik,
Mikl\'os Mar\'oti and Barnaby Martin for their thoughtful comments and
discussions and Andrei Krokhin for carefully reading the paper and pointing out a number of technical issues.  The authors are also indebted to two anonymous referees, whose careful reading and suggestions led to an improved presentation of Section \ref{sec:reversereduction} in particular.

%% in general the use of bibtex is encouraged
\bibliography{digraph_reduction} 

\newcommand{\etalchar}[1]{$^{#1}$}
\begin{thebibliography}{KOT{\etalchar{+}}12}

\bibitem[Ats08]{ats}
Albert Atserias.
\newblock {On digraph coloring problems and treewidth duality}.
\newblock {\em European J. Combin.}, 29(4):796--820, 2008.

\bibitem[BC09]{DBLP:journals/logcom/BodirskyC09}
Manuel Bodirsky and Hubie Chen.
\newblock {Qualitative Temporal and Spatial Reasoning Revisited}.
\newblock {\em J. Log. Comput.}, 19(6):1359--1383, 2009.

\bibitem[BDJN13]{DBLP:conf/cp/BulinDJN13}
Jakub Bul{\'i}n, Dejan Deli{\'c}, Marcel Jackson, and Todd Niven.
\newblock On the reduction of the {CSP} dichotomy conjecture to digraphs.
\newblock In Christian Schulte, editor, {\em CP}, volume 8124 of {\em Lecture
  Notes in Computer Science}, pages 184--199. Springer, 2013.

\bibitem[BHM12]{MR3056109}
Manuel Bodirsky, Martin Hils, and Barnaby Martin.
\newblock {On the scope of the universal-algebraic approach to constraint
  satisfaction}.
\newblock {\em Log. Methods Comput. Sci.}, 8(3):3:13, 30, 2012.

\bibitem[BIM{\etalchar{+}}10]{BIMMVW}
Joel Berman, Pawe{\l} Idziak, Petar Markovi{\'c}, Ralph McKenzie, Matthew
  Valeriote, and Ross Willard.
\newblock {Varieties with few subalgebras of powers}.
\newblock {\em Trans. Amer. Math. Soc.}, 362(3):1445--1473, 2010.

\bibitem[BJK05]{b-j-k}
Andrei Bulatov, Peter Jeavons, and Andrei Krokhin.
\newblock {Classifying the complexity of constraints using finite algebras}.
\newblock {\em SIAM J. Comput.}, 34(3):720--742, 2005.

\bibitem[BK12]{b-k}
Libor Barto and Marcin Kozik.
\newblock {Absorbing subalgebras, cyclic terms, and the constraint satisfaction
  problem}.
\newblock {\em Log. Methods Comput. Sci.}, 8(1):1:07, 27, 2012.

\bibitem[BK14]{BW_journal_version}
Libor Barto and Marcin Kozik.
\newblock Constraint satisfaction problems solvable by local consistency
  methods.
\newblock {\em J. ACM}, 61(1):3:1--3:19, January 2014.

\bibitem[BKN09]{b-k-n}
Libor Barto, Marcin Kozik, and Todd Niven.
\newblock {The {CSP} dichotomy holds for digraphs with no sources and no sinks
  (a positive answer to a conjecture of {B}ang-{J}ensen and {H}ell)}.
\newblock {\em SIAM J. Comput.}, 38(5):1782--1802, 2008/09.

\bibitem[Bod08]{DBLP:conf/dagstuhl/Bodirsky08}
Manuel Bodirsky.
\newblock {Constraint Satisfaction Problems with Infinite Templates}.
\newblock In Nadia Creignou, Phokion~G. Kolaitis, and Heribert Vollmer,
  editors, {\em {Complexity of Constraints}}, volume 5250 of {\em {Lecture
  Notes in Computer Science}}, pages 196--228. Springer, 2008.

\bibitem[BOP15]{BOP15}
Libor Barto, Jakub Opr{\v s}al, and Michael Pinsker.
\newblock {The wonderland of reflections}.
\newblock arXiv:1510.04521 [math.LO], 2015.

\bibitem[Bul06]{bul3}
Andrei~A. Bulatov.
\newblock {A dichotomy theorem for constraint satisfaction problems on a
  3-element set}.
\newblock {\em J. ACM}, 53(1):66--120, 2006.

\bibitem[Bul13]{Bulatov_counting_CSPs}
Andrei~A. Bulatov.
\newblock The complexity of the counting constraint satisfaction problem.
\newblock {\em J. ACM}, 60(5):34, 2013.

\bibitem[CCC{\etalchar{+}}13]{algebraic_discrete_optimization}
David~A. Cohen, Martin~C. Cooper, P{\' a}id{\' i} Creed, Peter~G. Jeavons, and
  Stanislav {\v Z}ivn{\' y}.
\newblock An algebraic theory of complexity for discrete optimization.
\newblock {\em {SIAM} Journal on Computing}, 42(5):1915--1939, 2013.

\bibitem[DK13]{Dalmau_Krokhin_robust_satisfiability}
V{\' i}ctor Dalmau and Andrei Krokhin.
\newblock Robust satisfiability for {CSP}s: Hardness and algorithmic results.
\newblock {\em ACM Trans. Comput. Theory}, 5(4), 2013.

\bibitem[DL08]{Dalmau/Larose:2008:Maltsev}
V{\'i}ctor Dalmau and Beno{\^i}t Larose.
\newblock {Maltsev + {D}atalog {$\Rightarrow$} {S}ymmetric {D}atalog}.
\newblock In {\em {Proceedings of the 23rd IEEE Symposium on Logic in Computer
  Science}}, {LICS '08}, pages 297--306, 2008.

\bibitem[DP99]{width1}
V{\'i}ctor Dalmau and Justin Pearson.
\newblock {Closure Functions and Width 1 Problems}.
\newblock In Joxan Jaffar, editor, {\em {CP}}, volume 1713 of {\em {Lecture
  Notes in Computer Science}}, pages 159--173. Springer, 1999.

\bibitem[DR13]{Dyer_Richerby}
Martin~E. Dyer and David Richerby.
\newblock An effective dichotomy for the counting constraint satisfaction
  problem.
\newblock {\em SIAM J. Comput.}, 42(3):1245--1274, 2013.

\bibitem[FV99]{fedvar}
Tom{\'a}s Feder and Moshe~Y. Vardi.
\newblock {The computational structure of monotone monadic {SNP} and constraint
  satisfaction: a study through {D}atalog and group theory}.
\newblock {\em SIAM J. Comput.}, 28(1):57--104 (electronic), 1999.

\bibitem[Hir97]{DBLP:journals/logcom/Hirsch97}
Robin Hirsch.
\newblock {Expressive Power and Complexity in Algebraic Logic}.
\newblock {\em J. Log. Comput.}, 7(3):309--351, 1997.

\bibitem[HM88]{hobbymckenzie}
David Hobby and Ralph McKenzie.
\newblock {\em {The structure of finite algebras}}, volume~76 of {\em
  {Contemporary Mathematics}}.
\newblock American Mathematical Society, Providence, RI, 1988.

\bibitem[HN90]{helnes:1}
Pavol Hell and Jaroslav Ne\v{s}et\v{r}il.
\newblock {On the complexity of {$H$}-coloring}.
\newblock {\em J. Combin. Theory Ser. B}, 48(1):92--110, 1990.

\bibitem[HN04]{helnes}
Pavol Hell and Jaroslav Ne\v{s}et\v{r}il.
\newblock {\em {Graphs and homomorphisms}}, volume~28 of {\em {Oxford Lecture
  Series in Mathematics and its Applications}}.
\newblock Oxford University Press, Oxford, 2004.

\bibitem[HR11]{ListH}
Pavol Hell and Arash Rafiey.
\newblock {The dichotomy of list homomorphisms for digraphs}.
\newblock In {\em {Proceedings of the {T}wenty-{S}econd {A}nnual {ACM}-{SIAM}
  {S}ymposium on {D}iscrete {A}lgorithms}}, pages 1703--1713, Philadelphia, PA,
  2011. SIAM.

\bibitem[IMM{\etalchar{+}}10]{IMMVW}
Pawe{\l} Idziak, Petar Markovi{\'c}, Ralph McKenzie, Matthew Valeriote, and
  Ross Willard.
\newblock {Tractability and learnability arising from algebras with few
  subpowers}.
\newblock {\em SIAM J. Comput.}, 39(7):3023--3037, 2010.

\bibitem[JCG97]{JCG97}
Peter Jeavons, David Cohen, and Marc Gyssens.
\newblock {Closure properties of constraints}.
\newblock {\em J. ACM}, 44(4):527--548, 1997.

\bibitem[JKN13]{JKN13}
Marcel Jackson, Tomasz Kowalski, and Todd Niven.
\newblock {Digraph related constructions and the complexity of digraph
  homomorphism problems}.
\newblock arXiv:1304.4986 [math.CO], 2013.

\bibitem[JT13]{jactro}
Marcel Jackson and Belinda Trotta.
\newblock {Constraint {S}atisfaction, {I}rredundant {A}xiomatisability and
  {C}ontinuous {C}olouring}.
\newblock {\em Studia Logica}, 101(1):65--94, 2013.

\bibitem[Kaz11]{kaz}
Alexandr Kazda.
\newblock {Maltsev digraphs have a majority polymorphism}.
\newblock {\em European J. Combin.}, 32(3):390--397, 2011.

\bibitem[KJJ01]{Krokhin01reasoningabout}
Andrei Krokhin, Peter Jeavons, and Peter Jonsson.
\newblock {Reasoning About Temporal Relations: The Tractable Subalgebras Of
  Allen's Interval Algebra}.
\newblock {\em Journal of the ACM}, 50:2003, 2001.

\bibitem[KKR15]{KKR15}
Vladimir Kolmogorov, Andrei Krokhin, and Michal Rol{\' i}nek.
\newblock {The complexity of general-valued CSPs}.
\newblock arXiv:1502.07327 [cs.CC], 2015.

\bibitem[KOT{\etalchar{+}}12]{LP_width1}
G{\' a}bor Kun, Ryan O'Donnell, Suguru Tamaki, Yuichi Yoshida, and Yuan Zhou.
\newblock Linear programming, width-1 {CSPs}, and robust satisfaction.
\newblock In Shafi Goldwasser, editor, {\em ITCS}, pages 484--495. ACM, 2012.

\bibitem[Lad75]{lad}
Richard~E. Ladner.
\newblock {On the structure of polynomial time reducibility}.
\newblock {\em J. Assoc. Comput. Mach.}, 22:155--171, 1975.

\bibitem[LLT07]{LLTb}
Benoit Larose, Cynthia Loten, and Claude Tardif.
\newblock {A characterisation of first-order constraint satisfaction problems}.
\newblock {\em Log. Methods Comput. Sci.}, 3(4):4:6, 22, 2007.

\bibitem[LT09]{lartes}
Benoit Larose and Pascal Tesson.
\newblock {Universal algebra and hardness results for constraint satisfaction
  problems}.
\newblock {\em Theor. Comput. Sci.}, 410(18):1629--1647, 2009.

\bibitem[MM08]{maroti-mckenzie}
Mikl{\'o}s Mar{\'o}ti and Ralph McKenzie.
\newblock {Existence theorems for weakly symmetric operations}.
\newblock {\em Algebra Universalis}, 59(3-4):463--489, 2008.

\bibitem[Mon74]{Montanari}
Ugo Montanari.
\newblock {Networks of constraints: fundamental properties and applications to
  picture processing}.
\newblock {\em Information Sci.}, 7:95--132, 1974.

\bibitem[Pap94]{1994-papadimitriou}
Christos~M. Papadimitriou.
\newblock {\em {Computational complexity}}.
\newblock Addison-Wesley, Reading, Massachusetts, 1994.

\bibitem[Rei05]{DBLP:conf/stoc/Reingold05}
Omer Reingold.
\newblock Undirected st-connectivity in log-space.
\newblock In Harold~N. Gabow and Ronald Fagin, editors, {\em STOC}, pages
  376--385. ACM, 2005.

\bibitem[Rei08]{rei08}
Omer Reingold.
\newblock {Undirected connectivity in log-space}.
\newblock {\em J. ACM}, 55(4), 2008.

\bibitem[Ren02]{DBLP:books/sp/Renz02}
Jochen Renz.
\newblock {\em {Qualitative Spatial Reasoning with Topological Information}},
  volume 2293 of {\em {Lecture Notes in Computer Science}}.
\newblock Springer, 2002.

\bibitem[Sch78]{sch}
Thomas~J. Schaefer.
\newblock {The complexity of satisfiability problems}.
\newblock In {\em {Conference {R}ecord of the {T}enth {A}nnual {ACM}
  {S}ymposium on {T}heory of {C}omputing ({S}an {D}iego, {C}alif., 1978)}},
  pages 216--226. ACM, New York, 1978.

\bibitem[T{\v Z}13]{finite_valued_CSPs}
Johan Thapper and Stanislav {\v Z}ivn{\' y}.
\newblock The complexity of finite-valued {CSPs}.
\newblock In {\em Proceedings of the Forty-fifth Annual {ACM} Symposium on
  Theory of Computing}, {STOC} '13, pages 695--704, New York, {NY}, {USA},
  2013. {ACM}.

\end{thebibliography}
\bibliographystyle{alpha}

\end{document}